\newtheorem{de}{Definition}[section]
\newtheorem{theo}{Theorem}[section]
\newtheorem{prop}[theo]{Proposition}
\newtheorem{cor}[theo]{Corollary}
\newtheorem{lem}[theo]{Lemma}
\newtheorem{obs}[theo]{Claim}
\newtheorem{con}[theo]{Claim}
\title{Dichotomies properties on computational complexity of $S$-packing coloring problems}
\author{Nicolas Gastineau}
\begin{document}
\maketitle
\begin{abstract}
This work establishes the complexity class of several instances of the $S$-packing coloring problem: for a graph $G$, a positive integer $k$ and a nondecreasing list of integers $S=(s_1,\ldots, s_{k})$, $G$ is \emph{$S$-colorable} if its vertices can be partitioned into sets $S_{i}$, $i=1,\ldots, k$, where each $S_{i}$ is an $s_i$-packing (a set of vertices at pairwise distance greater than $s_i$).
In particular we prove a dichotomy between NP-complete problems and polynomial-time solvable problems for lists of at most four integers.
\end{abstract}
\section{Introduction}
We consider only finite undirected connected graphs.
For a graph $G$, an \emph{$i$-packing} is a set $X_{i}\subseteq V(G)$ such that for any distinct pair $u$, $v$ $\in$ $X_{i}$, $d_{G}(u,v)> i$, where $d_{G}(u,v)$ denotes the distance between $u$ and $v$. We will use $X_{i}$ to refer to an $i$-packing in a graph $G$.
For a nondecreasing sequence of positive integers $S=\{s_i|i>0 \}$, a \emph{$S$-$k$-(packing)-coloring} of $G$ is a partition of $V(G)$ into sets $S_1,\ldots ,S_k$, where each $S_{i}$ is an $s_i$-packing.
For a nondecreasing list of $k$ integers $S'=(s_{1},\ldots,s_{k})$, an \emph{$S'$-(packing) coloring} of $G$ is an $S$-$k$-coloring of $G$ for a sequence $S$ which begins with a list $S'$.
A graph $G$ is \emph{$S$-colorable} if there exists an $S$-coloring of $G$.

The \textbf{S-COL} decision problem consists in determining, for fixed $S$, if $G$ is $S$-colorable, for a graph $G$ as input.

By $|S|$, we denote the size of a list $S$, and by $s_i$ we denote the $i$th element of a list.
Let $S=(s_1,s_2,\ldots)$ and $S'=(s'_1,s'_2,\ldots)$ be two nondecreasing lists of integers with $|S|=|S'|$. We define an order on the lists by $S\le S'$ if $s_i\ge s'_i$, for every integer $i$, $1\le i\le |S|$. Note that if $S\le S'$, $G$ is $S$-colorable implies $G$ is $S'$-colorable.

In this article, for a $(s_{1},s_{2},\ldots)$-coloring of a graph, we prefer to map vertices to the color multi-set $\{s_{1},s_{2},\ldots\}$ even if two colors can be denoted by the same number. This notation allows the reader to directly see to which type of packing the vertex belongs depending on its color. When needed, we will denote colors of vertices in different $i$-packings by $i_a,i_b,\ldots$.

Let $S_{d}^{k}$ be a list only containing $k$ integers $d$.
The problem $S_{1}^{k}$-COL corresponds to the $k$-coloring problem which is known to be NP-complete for $k\ge 3$.
The $S$-coloring generalizes coloring with distance constraints like the \emph{packing coloring} or the \emph{distance coloring} of a graph.
We denote by P-COL, the problem $(1,2,\ldots,k)$-COL for a graph $G$ and an integer $k$ (with $G$ and $k$ as input).
The packing chromatic number \cite{GO2008} of $G$ is the least integer $k$ such that $G$ is $(1,2,\ldots,k)$-colorable.
A series of works \cite{BR2007,Ek2010,FIN2010,GO2008} considered the packing chromatic number of infinite grids.
The $d$-distance chromatic number \cite{ASH2012} of $G$ is the least integer $k$ such that $G$ is $S_{d}^k$-colorable.
Initially, the concept of $S$-coloring has been introduced by Goddard et al. \cite{GO2008} and Fiala et al. \cite{FI2010}.
The $S$-coloring problem was considered in other papers \cite{GO2012,GOH2012}.

The $S$-coloring problem, with $|S|=3$ has been introduced by Goddard et al. \cite{GO2008} in order to determine the complexity of the packing chromatic number when $k=4$.
Moreover, Goddard and Xu \cite{GO2012} have proven that for $|S|=3$, $S$-COL is NP-complete if $s_1=s_2=1$ or if $s_1=1$ and $s_2=s_3=2$ and polynomial-time solvable otherwise.
About the complexity of $S$-COL, Fiala et al. \cite{FI2010} have proven that P-COL is NP-complete for trees and Argiroffo et al. \cite{ARG2011,AR2012} have proven that P-COL is polynomial-time solvable on some classes of graphs.

In the second section, for a list $S$ of three integers, we determine the family of $S$-colorable trees. 
Moreover, we determine dichotomies on cubic graphs, subcubic graphs and bipartite graphs.
In the third section, we determine polynomial-time solvable and NP-complete instances of $S$-COL, for unfixed size of lists. We use these results to determine a dichotomy between NP-complete instances and polynomial-time solvable instances of $S$-COL for $|S|\le 4$.

Note that for any nondecreasing list of integers $S$, we have $S$-COL in NP.
\section{Complexity of $S$-COL for a list of three integers and several classes of graphs}
This section is dedicated to the proofs of two theorems: Theorem \ref{theo21} and Theorem \ref{theo22}.
In \cite{GO2012}, the family of $S$-colorable graphs, for $|S|=3$, is described in the case $S$-COL is polynomial-time solvable. 
Using the properties of these families of graphs \cite{GO2012}, it is easy to determine the $S$-colorable trees for $|S|=3$ and $S\neq (1,2,2)$. In Theorem \ref{theo21}, we determine the $(1,2,2)$-colorable trees by giving a characterization by forbidding subtrees. The following definition gives a construction of this family of forbidden subtrees.
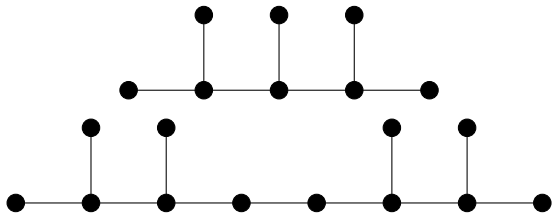
\begin{figure}[t]
\begin{center}
\begin{tikzpicture}
\draw (0+2.5,0) -- (4+2.5,0);
\draw (1+2.5,0) -- (1+2.5,1);
\draw (2+2.5,0) -- (2+2.5,1);
\draw (3+2.5,0) -- (3+2.5,1);
\node at (0+2.5,0) [circle,draw=black,fill=black, scale=0.7] {};
\node at (1+2.5,0) [circle,draw=black,fill=black, scale=0.7] {};
\node at (2+2.5,0) [circle,draw=black,fill=black, scale=0.7] {};
\node at (3+2.5,0) [circle,draw=black,fill=black, scale=0.7] {};
\node at (4+2.5,0) [circle,draw=black,fill=black, scale=0.7] {};
\node at (1+2.5,1) [circle,draw=black,fill=black, scale=0.7] {};
\node at (2+2.5,1) [circle,draw=black,fill=black, scale=0.7] {};
\node at (3+2.5,1) [circle,draw=black,fill=black, scale=0.7] {};

\draw (1,-1.5) -- (8,-1.5);
\draw (2,-1.5) -- (2,-0.5);
\draw (3,-1.5) -- (3,-0.5);
\draw (6,-1.5) -- (6,-0.5);
\draw (7,-1.5) -- (7,-0.5);
\node at (1,-1.5) [circle,draw=black,fill=black, scale=0.7] {};
\node at (2,-1.5) [circle,draw=black,fill=black, scale=0.7] {};
\node at (3,-1.5) [circle,draw=black,fill=black, scale=0.7] {};
\node at (4,-1.5) [circle,draw=black,fill=black, scale=0.7] {};
\node at (5,-1.5) [circle,draw=black,fill=black, scale=0.7] {};
\node at (6,-1.5) [circle,draw=black,fill=black, scale=0.7] {};
\node at (7,-1.5) [circle,draw=black,fill=black, scale=0.7] {};
\node at (8,-1.5) [circle,draw=black,fill=black, scale=0.7] {};
\node at (2,-0.5) [circle,draw=black,fill=black, scale=0.7] {};
\node at (3,-0.5) [circle,draw=black,fill=black, scale=0.7] {};
\node at (6,-0.5) [circle,draw=black,fill=black, scale=0.7] {};
\node at (7,-0.5) [circle,draw=black,fill=black, scale=0.7] {};

\end{tikzpicture}
\end{center}
\caption{The trees $T_0$ (on the top) and $T_1$ (on the bottom).}
\label{forbtree}
\end{figure}

\begin{de}
Let $T_0$ and $T_1$ be the trees from Figure \ref{forbtree}.
We define a family of trees $\mathcal{T}$ as follows:
\begin{itemize}
\item[i)] $T_0\in \mathcal{T},T_1\in \mathcal{T}$;
\item[ii)] if $T\in \mathcal{T}$, then the tree $T'$, obtained from $T$ by removing an edge between two vertices $u$ and $v$ of degree $2$, by adding four vertices $u_1$, $u_2$, $u_3$ and $w$ and by adding the edges $uu_1$, $u_1 u_2$, $u_2 u_3$, $u_3 v$ and $u_2 w$, is in $\mathcal{T}$.
\end{itemize}
\end{de}
We can note that the tree $T_0$ does not contain two adjacent vertices of degree 2. We will prove:
\begin{theo}\label{theo21}
A tree $T$ is $(1,2,2)$-colorable if and only if it does not contain a tree from $\mathcal{T}$.
\end{theo}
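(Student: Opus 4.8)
The plan is to prove the two implications separately, with one local observation doing most of the work. In any tree the two neighbours of a vertex lie at distance exactly $2$, so the three neighbours of a degree-$3$ vertex are pairwise at distance $2$. Consequently, in any $(1,2,2)$-coloring of a tree a degree-$3$ vertex $x$ cannot be colored $1$: otherwise its three neighbours would all be $2$-colored and pairwise at distance $2<3$, requiring three distinct $2$-packing classes while only $2_a$ and $2_b$ exist. Hence $x$ receives $2_a$ or $2_b$, and since at most one neighbour can share its class, at least two neighbours of $x$ receive color $1$. I will call this the \emph{forcing lemma}; it is the engine of both directions.

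For the ``only if'' direction I first record that $(1,2,2)$-colorability is monotone under taking subtrees (distances inside a subtree of a tree equal distances in the whole tree, so the restriction of a coloring is a coloring); thus it suffices to show that every member of $\mathcal{T}$ is non-colorable, which I would do by induction on the construction. The base cases are immediate from the forcing lemma: in $T_0$ the three consecutive degree-$3$ spine vertices must carry $2$-colors whose classes alternate, forcing the two outer ones into the same class at distance $2$, a contradiction; in $T_1$ each of the two adjacent degree-$3$ pairs forces, by the lemma, the neighbouring degree-$2$ spine vertex to color $1$, and these two color-$1$ vertices are adjacent. For the inductive step, the operation replaces an edge $uv$ between two degree-$2$ vertices by the path $u,u_1,u_2,u_3,v$ with a pendant $w$ at $u_2$; the vertex $u_2$ is again of degree $3$, so if $u$ is forced to color $1$ then $u_1$ is $2$-colored, whence the lemma at $u_2$ forces $u_3$ (and $w$) to color $1$. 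Thus the gadget relays the forced color $1$ from $u$ to $u_3$, so the adjacent pair of forced color-$1$ vertices witnessing non-colorability of the parent is preserved: if the gadget is inserted elsewhere that pair is untouched, and if it is inserted at that very pair it reappears as $u_3$ and $v$. Verifying this relay through the gadget is the one step of the ``only if'' direction that needs care.

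For the ``if'' direction I would argue by contradiction: take a non-colorable tree $T$ containing no member of $\mathcal{T}$ and build a coloring. By the forcing lemma every vertex of degree $\ge 3$ \emph{must} receive a $2$-color, so the first task is to $2$-color these branch vertices so that each class is a $2$-packing. The only obstruction is an odd cycle in the graph on branch vertices whose edges join pairs at distance $\le 2$; in a tree such a cycle forces three branch vertices pairwise within distance $2$, and together with their extra neighbours these always contain $T_0$. Hence $T_0$-freeness lets me $2$-color the branch vertices. It then remains to color the degree-$\le 2$ vertices and the leaves: leaves attached to branch vertices take color $1$, pendant paths are colored freely toward their leaf, and a path joining two branch vertices is colored by propagating the forcing from both ends. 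A propagation \emph{collision}, namely two adjacent degree-$2$ vertices both forced to color $1$, is exactly the configuration isolated in $T_1$, and the relays studied in the ``only if'' step show such a collision carries a member of $\mathcal{T}$; so $\mathcal{T}$-freeness guarantees the extension exists.

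I expect the ``if'' direction to be the main obstacle. The delicate points are, first, that the choice of branch-vertex $2$-coloring must be made so that the subsequent extension does not collide, so the argument is genuinely existential over admissible $2$-colorings rather than a single fixed construction; and second, the precise claim that every propagation collision along a path between branch vertices exhibits a subtree of $\mathcal{T}$, which is where the recursive definition of $\mathcal{T}$ (inserting a pendant-carrying gadget into an edge between two degree-$2$ vertices) must be matched step by step against the forced color pattern. I would organise this as a minimal-counterexample analysis: in a smallest non-colorable $\mathcal{T}$-free tree, stripping a leaf or a pendant path off a branch vertex and recoloring by minimality either extends the coloring, a contradiction, or reveals the forced pattern that, tracked back along the spine, reconstructs $T_0$, $T_1$, or one of their gadget-extensions.
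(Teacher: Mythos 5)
Your forcing lemma and your handling of the ``only if'' direction are sound and essentially reproduce the paper's Lemma~\ref{rema}: the relay through the inserted gadget (a forced colour $1$ at $u$ gives a $2$-colour at $u_1$, hence a second $2$-colour at the degree-$3$ vertex $u_2$, hence forced colour $1$ at $u_3$ and $w$) is exactly the mechanism the paper invokes, and organising it as an induction on the construction of $\mathcal{T}$ is a clean way to write it.

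The ``if'' direction, however, is a plan rather than a proof, and what you defer is precisely where the content lies. Two concrete gaps. First, your phase~1 is underconstrained: you only require the branch vertices to be properly $2$-coloured in the ``distance at most $2$'' auxiliary graph, but the extension along paths imposes \emph{equality} constraints as well --- two path-connected $3$-vertices at distance exactly $3$ are forced into the \emph{same} $2$-class in every $(1,2,2)$-coloring (the two internal vertices cannot both be $1$, the one coloured $2$ is within distance $2$ of both ends), while pairs at distance $1$ or $2$ are forced into different classes and pairs at distance at least $4$ are free. An arbitrary admissible $2$-colouring from your phase~1 therefore need not extend, and you give no argument that an extendable one exists; the paper sidesteps this by assigning the branch-vertex colours \emph{during} a single outward propagation, with an explicit pattern for each path length. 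Second, your claim that every propagation collision is ``two adjacent degree-$2$ vertices both forced to colour $1$'' is both unproved and incomplete as a taxonomy: the collision that actually arises is typically two vertices at distance $2$ forced into the same $2$-class (for instance, on a length-$3$ path out of a vertex $v$ adjacent to another $3$-vertex, the second internal vertex is forced to the class opposite to $v$, and so is the degree-$3$ neighbour of the far endpoint if that endpoint is also a $3$a-vertex). Showing that for every other length and endpoint configuration a compatible pattern exists, and that the single irreparable configuration --- length $3$ with both ends adjacent to further $3$-vertices, possibly chained through degree-$2$ gadgets --- is exactly what containment of a member of $\mathcal{T}$ encodes, is the bulk of the paper's argument and is absent from yours.
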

The following theorem by Goddard and Xu \cite{GO2012} establishes a dichotomy between NP-complete problems and polynomial-time solvable problems for $|S|=3$.
\begin{theo}[\cite{GO2012}]
Let $k$ be a positive integer. The problems $(1,1,k)$-COL and $(1,2,2)$-COL are both NP-complete. Except these problems, $S$-COL is polynomial-time solvable for $|S|=3$.
\end{theo}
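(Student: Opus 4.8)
The plan is to split the statement into an NP-hardness part for the two listed families and a polynomial-time part for every other list of three integers; membership in NP is already granted by the excerpt, so I concentrate on hardness and on algorithms. Throughout I would exploit one elementary but decisive observation: if a color class $S_i$ is an $s_i$-packing with $s_i\ge 2$, then any two vertices at distance at most $2$ must receive distinct colors. Applied to the closed neighborhood $\{v\}\cup N(v)$, whose vertices are pairwise at distance at most $2$, this controls how many vertices around $v$ can carry each color, and it will drive both the structural characterizations and the case split.

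For the polynomial cases I would treat two regimes. If $s_1\ge 2$, all three classes are $2$-packings, so $\{v\}\cup N(v)$ needs $\deg(v)+1\le 3$ distinct colors, forcing $\Delta(G)\le 2$; hence $G$ is a path or a cycle, and deciding $S$-colorability reduces to a finite periodic check along it, which is clearly polynomial. If $s_1=1$ with $s_2\ge 2$ and $s_3\ge 3$ (exactly the non-hard lists with $s_1=1$), the same neighborhood argument shows that a vertex colored $1$ can have at most one neighbor colored $2$ and at most one colored $3$, so \emph{every color-$1$ vertex has degree at most $2$}. Consequently the vertices of degree at least $3$ all lie in $S_2\cup S_3$ and, being within distance $2$ of at most one same-colored vertex, are sparse, so $G$ is a subdivision of a bounded-branching backbone carrying pendant paths. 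I would then argue that colorability is governed by local conditions along the connecting paths (their lengths modulo the small constants $s_2,s_3$) and near the branch vertices, so that the colorable graphs admit the explicit structural description of \cite{GO2012}; membership in that family is testable in polynomial time, either directly or by encoding the path and branch constraints as a $2$-satisfiability instance.

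For the NP-hardness I would reduce from $3$-coloring. For $(1,1,k)$ I would use the reformulation that $G$ is $(1,1,k)$-colorable exactly when some $k$-packing $X$ makes $G-X$ bipartite; for $k=1$ this is literally $3$-colorability, and for $k\ge 2$ I would attach to a given instance $H$ a distance-inflation gadget that increases the pairwise distances among the vertices encoding $V(H)$ beyond $k$, so that the $k$-packing constraint on $X$ becomes automatic, while the gadget simultaneously enforces that rendering $G-X$ bipartite requires $X$ to restrict to an independent set of $H$; then $G$ is $(1,1,k)$-colorable iff $H$ is $3$-colorable. For $(1,2,2)$, where color-$1$ vertices again have degree at most $2$ but the two $2$-packings provide genuine combinatorial freedom, I would build vertex- and edge-gadgets that force a three-way choice per element and encode the clauses (or edges) of a not-all-equal-SAT or $3$-coloring instance, using short cycles to turn the required distance-$3$ separations into the intended consistency constraints.

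The hard part will be the two gadget constructions, and in particular certifying that each reduction is tight in both directions under the rigid distance constraints, since inflating distances to neutralize one packing must not also destroy the parity or adjacency information that the reduction relies on. A second delicate point is the \emph{boundary} of the dichotomy: the neighborhood argument alone does not separate $(1,2,2)$ from $(1,2,k)$ with $k\ge 3$, because color-$1$ vertices have degree at most $2$ in both. The separation must come from the finer structural analysis, where enlarging $s_3$ past $2$ rigidifies the placement of the third packing enough to yield a polynomial characterization, whereas at $(1,2,2)$ precisely this extra freedom is what the hardness gadget exploits.
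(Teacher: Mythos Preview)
This theorem is not proved in the present paper: it is quoted verbatim from Goddard and Xu~\cite{GO2012}, and the surrounding text only \emph{describes} their arguments (a reduction from $3$-COL for $(1,1,k)$, and a reduction from NAE~SAT for $(1,2,2)$), before the paper goes on to establish the sharper subcubic/bipartite versions in Theorems~\ref{theo23} and~\ref{theo24}. So there is no ``paper's own proof'' of this particular statement to compare against; your plan is being measured against a citation.

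That said, your outline is broadly compatible with what the paper reports about~\cite{GO2012}. The two hardness sources you propose (3-colouring with distance inflation for $(1,1,k)$, NAE-type gadgets for $(1,2,2)$) match the cited reductions, and the paper's own refinements use exactly the gadget paradigm you sketch, formalised here as the transmitter/antitransmitter/clause-simulator framework of Lemma~\ref{reffinal}. Your neighbourhood observation---that on $\{v\}\cup N(v)$ every $2$-packing contributes at most one vertex---is also the right engine for the polynomial side, and correctly yields $\Delta\le 2$ when $s_1\ge 2$ and $\deg(v)\le 2$ for colour-$1$ vertices when $s_1=1$, $s_2\ge 2$.

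Where your proposal is genuinely thin is the jump from ``high-degree vertices are sparse'' to ``polynomial-time recognisable'' in the case $s_1=1$, $s_2\ge 2$, $s_3\ge 3$. Sparsity of $S_2\cup S_3$ does not by itself bound treewidth or yield a $2$-SAT encoding: a vertex in $S_2$ can still have arbitrarily many degree-$\le 2$ neighbours, and the global consistency of colours along the many incident paths is not obviously a conjunction of binary constraints. Goddard and Xu close this gap by giving explicit, checkable descriptions of the $S$-colorable graphs for each such $S$; you would need either to reproduce those characterisations or to give a concrete $2$-SAT (or bounded-treewidth) formulation, not merely assert one exists. Likewise, your $(1,1,k)$ gadget must do two things at once---inflate distances \emph{and} keep the bipartiteness-after-deletion test equivalent to $3$-colourability of the original---and you have not yet exhibited such a gadget; the paper's description of~\cite{GO2012} and its own $L_k,J_k,H_k$ constructions in Theorem~\ref{theo24} show the level of detail required.
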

For $(1,1,k)$-COL, the authors of \cite{GO2012} provided a reduction from $3$-COL with a graph of maximal degree $\Delta$, the produced graph has maximal degree $2\Delta$. Since $3$-COL is polynomial-time solvable for small $\Delta$, the proof can not be easily changed to have a reduction with subcubic graphs.
For $(1,2,2)$-COL, the authors provided a reduction from NAE SAT, the produced graph has maximal degree $\Delta$, where $\Delta$ is the maximal number of times a variable can appear positively or negatively. This reduction can not be easily changed to have a reduction with subcubic graphs.
In Theorem \ref{theo22}, we establish similar results for subcubic graphs, cubic graphs and bipartite graphs.

For different instances of $S$-COL, with $|S|=3$, Table \ref{tab1} summarizes the class of complexity of $S$-COL for different classes of graphs.
We recall that every bipartite graph is $(1,1,k)$-colorable, for $k$ a positive integer and that every subcubic graph except $K_4$ is $(1,1,1)$-colorable by Brooks' theorem. We will prove:
\begin{theo}\label{theo22}
Every instance of $S$-COL, for $|S|=3$, is either polynomial-time solvable or NP-complete, for subcubic graphs, cubic graphs and bipartite graphs.
\end{theo}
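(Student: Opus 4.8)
The plan is to leverage the dichotomy of Goddard and Xu stated above: for $|S|=3$, every list other than $(1,1,k)$ and $(1,2,2)$ yields a problem that is polynomial-time solvable on all graphs, and those algorithms restrict to subcubic, cubic and bipartite graphs. Hence it suffices, for each of the three classes, to classify the two remaining families $(1,1,k)$-COL and $(1,2,2)$-COL, showing each is either polynomial-time solvable or NP-complete on the class.

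First I would identify the instances that collapse to polynomial problems. On bipartite graphs every graph is $2$-colorable, hence $(1,1,k)$-colorable by leaving the third color class empty, so $(1,1,k)$-COL is a trivial ``always yes'' problem. On subcubic and cubic graphs, Brooks' theorem makes $(1,1,1)$-COL polynomial: the only connected subcubic graph that is not $3$-colorable is $K_4$, so one just tests whether $G=K_4$. Together with the restricted Goddard--Xu algorithms, this leaves only a handful of instances whose NP-completeness must be proved by reductions producing graphs inside the class.

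The core is to build such reductions. For $(1,2,2)$-COL I would start from a bounded-occurrence variant of NAE-SAT, which stays NP-complete, and replace the high-degree variable vertices of the original reduction by chains of equality gadgets that carry a single color choice through a structure of maximum degree $3$; attaching the clause gadgets so every vertex keeps degree at most $3$ gives the subcubic result. Padding the vertices of degree less than $3$ with local gadgets that preserve $S$-colorability yields the cubic result, and routing the connecting paths so that all created cycles are even yields the bipartite result. For $(1,1,k)$-COL the degree-doubling reduction from $3$-COL cannot be reused, so I would instead use the reformulation that $G$ is $(1,1,k)$-colorable precisely when some $k$-packing $P$ makes $G-P$ bipartite, and reduce from a bipartization-type problem, inserting induced paths of length $k$ to enforce the required distance $>k$ between the vertices placed in $P$.

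I expect the main obstacle to be controlling the degree sequence and the packing distances simultaneously inside the gadgets. Propagating a truth value to many clauses without exceeding degree $3$ forces long equality chains, and one must check that, once the $2$-packing (respectively $k$-packing) constraints are imposed, these chains admit \emph{exactly} the intended colorings and no spurious ones: the distances interact with the gadget geometry, so a locally legal color can be globally forbidden and vice versa. Meeting this requirement while also forcing $3$-regularity for the cubic statements and bipartiteness for the bipartite statement is the delicate point, since each added structural demand further restricts the admissible gadgets.
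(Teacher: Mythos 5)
Your skeleton matches the paper's: invoke the Goddard--Xu dichotomy to reduce everything to $(1,2,2)$-COL and $(1,1,k)$-COL, dispose of the trivial cases (every bipartite graph is $(1,1,k)$-colorable, Brooks' theorem handles $(1,1,1)$ on subcubic graphs), and prove the remaining NP-completeness results by reductions from NAE~SAT built from degree-$3$ gadgets. Your equality-gadget chains for $(1,2,2)$-COL on subcubic bipartite graphs are essentially the transmitter/antitransmitter/clause-simulator machinery of Lemma~\ref{reffinal} and Theorem~\ref{theo23}.

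There is, however, a genuine error in your treatment of cubic graphs. You propose to prove $(1,2,2)$-COL NP-complete on cubic graphs by padding the subcubic construction with ``local gadgets that preserve $S$-colorability.'' No such gadgets can exist: in any $(1,2,2)$-coloring a vertex of degree $3$ cannot receive color $1$ (its three neighbours are at pairwise distance at most $2$, so at most one of them can lie in each of the two $2$-packings, forcing one neighbour to be colored $1$ while adjacent to a vertex colored $1$); hence in a cubic graph every vertex would have to lie in one of the two $2$-packings, but a $2$-packing in a cubic graph has at most $|V(G)|/4$ vertices, so two of them cannot cover the graph. Thus no cubic graph is $(1,2,2)$-colorable, the problem is trivially polynomial on that class, and any reduction producing cubic graphs outputs only no-instances --- this is exactly the entry in Table~\ref{tab1} and the remark following Theorem~\ref{theo23}. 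The cubic NP-completeness actually needed is for $(1,1,k)$-COL, which your plan does not address; the paper obtains it (Corollary~\ref{cor1}) by attaching a $(1,1,k)$-colorable gadget $N_k$ to every degree-$2$ vertex of the subcubic construction, and the padding idea is sound there precisely because color $1$ remains usable on degree-$3$ vertices. Finally, your route to $(1,1,k)$-COL on subcubic graphs via ``delete a $k$-packing so that the rest is bipartite'' genuinely diverges from the paper's NAE~SAT reduction with triangle gadgets (Theorem~\ref{theo24}), but it is too undeveloped to assess: you would need an NP-complete bipartization-type problem already restricted to subcubic graphs \emph{and} a mechanism forcing the deleted set to be a $k$-packing, and inserting paths of length $k$ changes the parities of cycles, so the equivalence with the source problem is not automatic.
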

\begin{table}[!ht]
\centering\begin{tabular}{|c|c|c|c|}
\hline
Class of graphs & $(1,1,1)$-COL & $(1,2,2)$-COL &$(1,1,k)$-COL, ($k\ge2$) \\\hline
Arbitrary graphs& NP-complete & NP-complete & NP-complete \\
Subcubic graphs & Polynomial & NP-complete & NP-complete \\
Cubic graphs & Polynomial & Polynomial & NP-complete \\
Bipartite graphs &  Polynomial & NP-complete &  Polynomial \\
Trees &  Polynomial &  Polynomial &  Polynomial \\ \hline
\end{tabular}
\caption{\label{tab1}Complexity class of several instances of $S$-COL.}
\end{table}
\subsection{Proof of Theorem \ref{theo21}}
By a 3-vertex we denote a vertex of degree at least 3.
\begin{lem}
The trees from $\mathcal{T}$ are not $(1,2,2)$-colorable.
\label{rema}
\end{lem}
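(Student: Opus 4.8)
The plan is to reduce everything to a single local forcing fact together with a structural normal form for the trees in $\mathcal{T}$. The local fact is: in any $(1,2,2)$-coloring of a tree, every $3$-vertex $x$ receives a $2$-color (that is, $2_a$ or $2_b$) and at least two of its three neighbours receive color $1$. Indeed, the four vertices of the closed neighbourhood $N[x]$ are pairwise at distance at most $2$, so at most one of them is colored $2_a$ and at most one $2_b$; since color $1$ is independent and $x$ is adjacent to its three neighbours, $x$ itself cannot be among the color-$1$ vertices, forcing the conclusion. Next I would record that every tree in $\mathcal{T}$ is a caterpillar whose leg-bearing spine vertices are exactly its $3$-vertices (each carrying a single leg). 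Since $T_0$ has no degree-$2$ vertex at all, construction rule (ii) never applies to it, so $T_0$ is isolated in $\mathcal{T}$ and every other member is built from $T_1$. By induction on applications of rule (ii) I would prove the normal form: the degree-$2$ spine vertices occur in consecutive flanked pairs (\emph{blocks} of two, each bordered on both sides by a $3$-vertex), and, reading the spine between its two end-leaves, the alternating pattern of legs ($L$) and blocks ($B$) is $L\,L\,B\,(L\,B)^{m}\,L\,L$ for some $m\ge 0$. The point is that rule (ii) replaces the single internal edge of one block by a leg flanked by two new blocks, i.e. $B\mapsto B\,L\,B$, which preserves this shape.

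For $T_0$ the three consecutive leg-vertices are pairwise at distance at most $2$ and are all forced to a $2$-color by the local fact; but two vertices within distance $2$ cannot share a $2$-color, so all three would need pairwise distinct $2$-colors, which is impossible with only two available — a contradiction. For a descendant of $T_1$ I would argue as follows. If two consecutive leg-vertices $x$ and $y$ separated by a block $\{p,q\}$ (so that $x\sim p\sim q\sim y$) carried different $2$-colors, then $p$ avoids both $2$-colors and is forced to $1$, and likewise $q$, contradicting independence of color $1$; hence all inner leg-vertices share one $2$-color, say $2_a$. Consequently, in each block $\{p,q\}$ the two vertices avoid $2_a$, cannot both be $2_b$ (distance $1$) and cannot both be $1$ (adjacent), so exactly one is $1$ and the other is $2_b$. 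The left cap (two adjacent leg-vertices and the end-leaf) forces, through the local fact at the leg-vertex bordering the first block, that the block-vertex next to it is color $1$; that block is therefore $1$ on its near vertex and $2_b$ on its far vertex; applying the local fact at the next leg-vertex forces the next block-vertex to be $1$, and so on, so that \emph{every} block has color $1$ on its left vertex and $2_b$ on its right. But the right cap symmetrically forces the right vertex of the last block to be color $1$ — a contradiction (when $m=0$ the two caps already force the two vertices of the single block to color $1$ while they are adjacent).

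The main obstacle is that the natural induction — showing that a coloring of the enlarged tree $T'$ restricts to one of $T$ after contracting the inserted path back to the edge $uv$ — does not work. A short analysis of the gadget shows that its boundary pair $(c(u),c(v))$ can legitimately be a repeated $2$-color, and that $u$ can end up at distance $2$ in $T$ from the far leg-vertex $v^{*}$ with the same $2$-color; both situations are perfectly legal in $T'$ but are exactly what the contracted edge $uv$ forbids, and they cannot be removed by local recoloring, since any such change propagates into $T$. This is why I would avoid the edge-contraction reduction altogether and instead extract the global normal form and run the forced-coloring propagation directly. The real work is therefore two-fold: first, the inductive verification that rule (ii) keeps the trees in the flanked-pairs normal form; second, the bookkeeping that propagates the forced ``$1$ on the near vertex, $2_b$ on the far vertex'' orientation across all blocks so that the two caps produce the final clash.
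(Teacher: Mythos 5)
Your proof is correct and follows essentially the same route as the paper: the key local facts (a $3$-vertex cannot receive color $1$, hence gets a $2$-color and has at least two color-$1$ neighbours) are exactly the paper's, and the contradiction is obtained by the same propagation of forced colors from one pair of adjacent $3$-vertices to the other. The only difference is one of rigor: you make explicit the caterpillar normal form $L\,L\,B\,(L\,B)^{m}\,L\,L$ preserved by rule (ii) and carry out the block-by-block bookkeeping in detail, where the paper compresses this into a two-sentence sketch.
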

\begin{proof}
First, note that we can not color a 3-vertex by $1$.
Thus, the tree $T_0$, containing three 3-vertices at mutual distance at most 2, is not (1,2,2)-colorable.
Second, in the tree $T_1$, one of the two vertices of degree 2 should has some color $2$.
Hence, one vertex among the three 3-vertices at distance at most $2$ from this vertex is not colorable.
For the other trees, we can note that for every pair of adjacent vertices of degree 2, one of the vertex has some color $2$.
Moreover, if a vertex is at distance at most 2 from three vertices, then it can not have a color $2$.
Using these two facts combined together, we can note that if we begin to color two adjacent vertices of degree 3 we cannot extend the $(1,2,2)$-coloring to the other two adjacent vertices of degree 3. 

\end{proof}
\begin{proof}[Proof of Theorem \ref{theo21}]
By Lemma \ref{rema}, if $T$ contains a tree from $\mathcal{T}$, then $T$ is not (1,2,2)-colorable.
If $T$ does not contain $T_0$, then a vertex of degree at least 3 can not be adjacent to two vertices of degree at least 3. Two 3-vertices are \emph{path-connected} if there are only vertices of degree 2 in the path between those two vertices.
Let a 3a-vertex (a 3b-vertex, respectively) be a 3-vertex adjacent (not adjacent, respectively) to another 3-vertex.

We construct a $(1,2,2)$-coloring of $T$ as follows:
take a 3-vertex $u$, give an arbitrary color to this vertex and its possible 3-vertex neighbor and extend the coloring to path-connected 3-vertices. Do the same process from these new colored vertices.
Let $v_1$ and $v_2$ be path-connected 3-vertices at distance $i$. Suppose, without loss of generality, that $v_1$ is colored by $2_a$.
If $i$ is even, then the coloring of $v_1$ can be extended to $v_2$ using the pattern $1,2_b,1,2_a,\ldots,1$ for the vertices in the path between $v_1$ and $v_2$. We give the color $2_a$ (or $2_b$, depending on $i$) to $v_2$.

If $i>3$ is odd then the coloring of $v_1$ can be extended to $v_2$ using the pattern $1,2_b,2_a,1,2_b,1,2_a,\ldots,1$ for the vertices in the path between $v_1$ and $v_2$. We give the color $2_a$ (or $2_b$, depending on $i$) to $v_2$.

If $i=3$, since $T$ does not contain $T_1$, then $v_1$ and $v_2$ are not both 3a-vertices. First, suppose $v_2$ is a 3a-vertex. For the two vertices of the path, we give the color 1 to the vertex adjacent to $v_1$ and the color $2_b$ to the vertex adjacent to $v_2$. Finally, we give the color $2_a$ to $v_2$.
Second, suppose $v_1$ is a 3a-vertex. For the two vertices of the path, we give the color $2_b$ to the vertex adjacent to $v_1$ and the color 1 to the vertex adjacent to $v_2$. Finally, we give the color $2_a$ to $v_2$. Since $G$ does not contain trees from $\mathcal{T}$, $v_1$ is not at distance $2$ from another 3b-vertex and $v_1$ cannot have another neighbor already colored by $2_b$.
\end{proof}
\subsection{Proof of Theorem \ref{theo22}}
\begin{de}
Let $S$ be a nondecreasing list of three integers. Let $s_i$ and $s_j$ be two integers in this list.

An $s_i$-$s_j$-transmitter (respectively, an $s_i$-$s_j$-antitransmitter) is a graph $L$ such that there exist two vertices $u$ and $v$ called connectors and in every $S$-coloring of $L$, $u$ and $v$ have a same color among $\{s_i, s_j\}$ (respectively, different colors among $\{s_i, s_j\}$). Moreover, there should exist an $S$-coloring of $L$ such that every vertex that have a color $c$ different from the color of $u$ (respectively, of $v$) must be at distance greater than $\lfloor c/2 \rfloor$ from $u$ (respectively, from $v$).

An $s_i$-$s_j$-clause simulator is a graph $H$ such that there exist three vertices $u_1$, $u_2$ and $u_3$ called connectors, and in every $S$-coloring of $H$, $u_1$, $u_2$ and $u_3$ have colors among $\{s_i, s_j\}$ and never $u_1$, $u_2$ and $u_3$ have the same color. Moreover, there should exist $S$-colorings of $H$ such that every vertex that have a color $c$ different from the color of $u_1$ (respectively, of $u_2$ , respectively, of $u_3$) must be at distance greater than $\lfloor c/2 \rfloor$ from $u_1$ (respectively, from $u_2$, respectively, from $u_3$), for any possible configuration of colors among $\{s_i,s_j\}$ for the connectors.
\end{de}
\begin{lem}
Let $S$ be a nondecreasing list of three integers and let $s_i$ and $s_j$ be two integers in this list. If there exist an $s_i$-$s_j$-transmitter $L$, an $s_i$-$s_j$-antitransmitter $J$ and an $s_i$-$s_j$-clause simulator $H$, then $S$-COL is NP-complete.
\label{reffinal}
\end{lem}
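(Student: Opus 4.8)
The plan is to prove NP-hardness by a polynomial-time reduction from \emph{Not-All-Equal $3$-SAT} (NAE-$3$-SAT), which is NP-complete; membership of $S$-COL in NP has already been noted. The point is that the three hypothesised gadgets match precisely the three logical ingredients of NAE-$3$-SAT: an $s_i$-$s_j$-transmitter realises a copy (equality) constraint, an $s_i$-$s_j$-antitransmitter realises a negation (inequality) constraint, and an $s_i$-$s_j$-clause simulator realises the not-all-equal constraint on a triple of literals. Throughout I read the two admissible colours as truth values, say $s_i$ for \emph{true} and $s_j$ for \emph{false}, and I use that every connector is forced by its gadget to take a colour in $\{s_i,s_j\}$.

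Given an NAE-$3$-SAT formula $\phi$ with variables $x_1,\dots,x_n$ and clauses $C_1,\dots,C_m$, I would build $G_\phi$ as follows. To each variable $x_t$ I attach a hub vertex $v_t$; to each clause $C_\ell=(\lambda_1,\lambda_2,\lambda_3)$ I attach a fresh copy $H_\ell$ of the clause simulator with connectors $u_1^\ell,u_2^\ell,u_3^\ell$. For the $p$-th literal $\lambda_p$ of $C_\ell$, belonging to a variable $x_t$, I insert a fresh copy of the transmitter $L$ when $\lambda_p=x_t$ is positive, or of the antitransmitter $J$ when $\lambda_p=\overline{x_t}$ is negative, identifying one of its connectors with the hub $v_t$ and the other with the clause connector $u_p^\ell$. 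Because $S$ is fixed, every gadget has constant size, so $G_\phi$ has $O(n+m)$ vertices and is produced in polynomial time; assuming, as we may, that the variable--clause incidence of $\phi$ is connected, $G_\phi$ is a legitimate connected instance of $S$-COL.

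For the first direction, any $S$-coloring of $G_\phi$ induces an assignment by declaring $x_t$ true exactly when $v_t$ receives $s_i$. A positive-literal transmitter forces its clause connector to match $v_t$, while a negative-literal antitransmitter forces it to be the opposite colour, so in either case the connector $u_p^\ell$ carries $s_i$ precisely when the literal $\lambda_p$ is true. The clause simulator guarantees that $u_1^\ell,u_2^\ell,u_3^\ell$ are never all equal, i.e.\ that $C_\ell$ receives both a true and a false literal, so the assignment is NAE-satisfying. For the converse, starting from an NAE-satisfying assignment I colour each hub $v_t$ by $s_i$ or $s_j$ according to its truth value and then, invoking the existence clause of each definition (which supplies, for every admissible configuration of connector colours, an $S$-coloring meeting the stated distance condition), I choose a compatible $S$-coloring of every gadget. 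These local colorings agree on the shared connectors, and the claim is that their union is a valid $S$-coloring of $G_\phi$.

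The heart of the argument, and its main obstacle, is this pasting step, which is exactly what the ``moreover'' distance conditions are designed for. Each shared connector $w$ is a cut vertex separating the gadgets glued at it, so any two vertices $a,b$ lying in distinct gadgets satisfy $d_{G_\phi}(a,b)=d(a,w)+d(w,b)$. If $a$ and $b$ share a colour $c$ different from the colour of $w$, the distance conditions give $d(a,w)>\lfloor c/2\rfloor$ and $d(w,b)>\lfloor c/2\rfloor$, hence $d_{G_\phi}(a,b)\ge 2\lfloor c/2\rfloor+2>c$, so no $c$-packing is violated across the boundary; and if they share the colour of $w$ itself, validity inside each gadget already puts each of them at distance greater than that colour from $w$, so their cross-gadget distance is larger still. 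Thus the union respects every packing and is a genuine $S$-coloring. The one place demanding care is to ensure that gadgets meet only along connectors, so that the cut-vertex identity holds even when many literal-gadgets are glued to a single hub $v_t$; since the cut-vertex argument is applied to each pair of gadgets separately, a variable may occur in arbitrarily many clauses without difficulty. This establishes that $G_\phi$ is $S$-colorable if and only if $\phi$ is NAE-satisfiable, completing the reduction.
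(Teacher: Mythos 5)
Your reduction is correct and follows the same overall scheme as the paper (NAE 3SAT, with $H$ enforcing the not-all-equal constraint and $L$, $J$ enforcing equality/inequality of literal occurrences), but your variable gadget is genuinely different. You attach every occurrence of a variable $x_t$ to a single hub vertex $v_t$ via a fresh copy of $L$ or $J$; the paper instead builds, for each variable, an $L$-cycle through the positive occurrences and an $L$-cycle through the negative occurrences, joins the two cycles by one copy of $J$, and hooks each literal to its \emph{own} connector of the appropriate cycle. For the lemma as stated your hub is perfectly adequate and arguably cleaner, since the transmitters around the cycle just propagate one colour anyway. What the paper's more elaborate construction buys is degree control: every connector is incident to a bounded number of gadgets, which is exactly what the remark following the lemma needs to conclude that the constructed graph is subcubic (and bipartite) when the gadgets are, and hence what Theorems \ref{theo23} and \ref{theo24} rely on. Your hub $v_t$ has degree growing with the number of occurrences of $x_t$, so your variant proves the lemma but would not support those later applications. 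One small technical caveat in your pasting step: a shared connector $w$ need not be a cut vertex (two literal gadgets of the same variable appearing in the same clause are joined both through the hub and through the clause simulator), so the identity $d_{G_\phi}(a,b)=d(a,w)+d(w,b)$ can fail; what you actually need, and what does hold, is that any $a$--$b$ path leaves the gadget of $a$ through \emph{some} connector and enters the gadget of $b$ through \emph{some} connector, after which the same distance bookkeeping ($2\lfloor c/2\rfloor+2>c$, with the case of equal colours handled by validity inside a single gadget) goes through. You also invoke canonical colourings of $L$ and $J$ for \emph{both} choices of connector colour, which is slightly more than the ``moreover'' clause literally promises for transmitters; the paper makes the same implicit assumption, and it is harmless in all the instantiations used.
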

\begin{proof}
Suppose there exist a transmitter $L$, an antitransmitter $J$ and a clause simulator $H$.
Connecting two vertices $u$ and $v$ by $L$ (respectively, by $J$) corresponds to adding a copy of $L$ (respectively, of $J$) and identifying $u$ with a connector of $L$ (respectively, of $J$) and identifying $v$ with the other connector. The original vertices correspond to the vertices $u$ and $v$.
Let a $L$-$k$-cycle be a cycle of order $k$ in which each edge is replaced by $L$ (it means that if $uv$ is an edge, we remove the edge and connect $u$ and $v$ by $L$).

The proof is by reduction from NAE 3SAT:
a literal is either a variable $x$ (a positive occurrence of $x$) or the negation of a variable (a negative occurrence of $x$).
A clause is a disjunction of literals and the size of a clause is its number of literals.
A truth assignment gives values (true or false) for each variable.
For a given truth assignment on a set of clauses, if there is a true literal and a false literal in each clause, then it satisfies not equally the set of clauses.
\begin{center}
\parbox{10cm}{
\setlength{\parskip}{.05cm}
\textbf{Not-all-equal 3-satisfiability (NAE 3SAT)}

\textbf{Instance}: Collection $C=[c_{1},c_{2},\ldots,c_{m}]$ of clauses on a finite set of variables of size $n$ such that $|c_{i}|=3$ for $1\le i\le\ m$. 

\textbf{Question}: Is there a truth assignment for variables that satisfies not equally all clauses in C ?}
\end{center}

Let $C$ be a collection of clauses.
Let $m$ be the number of clauses and let $n$ be the number of variables.
We construct a graph $G$ as follows.

For each clause, we associate a clause simulator $H$ with connectors $\ell_{1}$, $\ell_{2}$ and $\ell_{3}$ representing the three literals of the clause.
For each variable $x$ with $p_1$ positive occurrences of $x$ and $p_2$ negative occurrences of $x$, we associate a $L-(p_1+1)$-cycle and a $L-(p_2+1)$-cycle where a connector $b_1$ of the $L-(p_1+1)$-cycle and a connector $b_2$ of the $L-(p_2+1)$-cycle are connected by $J$.
For each literal $\ell'_{1}$ which is a positive occurrence of $x$ (respectively, negative occurrence of $x$), we connect $\ell'_{1}$ by $L$ to a connector of the $L-(p_1+1)$-cycle different from $b_1$ (respectively, to a connector of the $L-(p_2+1)$-cycle different from $b_2$), so that every connector of the $L-(p_1+1)$-cycle (respectively, $L-(p_2+1)$) is connected to only one occurrence of $x$.

Suppose there exists a truth assignment of variables that satisfies not equally all clauses in $C$. To each clause $c$ corresponds a clause simulator $H$. Given a truth assignment, we give the color $s_i$ to the connectors representing a true literal and $s_j$ to the connectors representing a false literal. If a connector of a transmitter or an antitransmitter is colored, then we give the other possible color to the other connector.
Using the existing $S$-coloring of $H$, $L$ and $J$, it gives us an $S$-coloring of $G$ (the vertices colored have enough distance between them by hypothesis).

Reciprocally, suppose there exists an $S$-coloring of $G$. For each $H$, the connectors of $H$ have colors among $\{s_i,s_j\}$, by hypothesis. By construction, positive occurrences and negative occurrences of a same variable must have different colors. For each positive or negative occurrence of a variable $x$, we create a truth assignment by giving the value true (respectively, false) to positive occurrences colored by $s_i$ (respectively, $s_j$) and by giving the value false (respectively, true) to negative occurrences colored by $s_i$ (respectively, $s_j$). By construction, this truth assignment satisfies not equally all clauses in $C$.
\end{proof}
Note that if $H$, $L$ and $J$ are subcubic graphs, the connectors of $H$ have degree 2 and if the connectors of $L$ and $J$ have degree 1, then the constructed graph is subcubic.
Moreover, if $H$, $L$ and $J$ are bipartite and any pair of connectors of $L$, $J$ and $H$ are connected by a path of even length, then the constructed graph is bipartite.
\begin{theo}\label{theo23}
The problem $(1,2,2)$-COL is NP-complete for subcubic bipartite graphs.
\end{theo}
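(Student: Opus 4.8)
\textit{Proof proposal.}
The plan is to apply Lemma \ref{reffinal} with $s_i=s_j=2$, that is, with the two copies $2_a$ and $2_b$ of color $2$ playing the roles of $s_i$ and $s_j$. By the observation following Lemma \ref{reffinal}, it then suffices to exhibit a $2_a$-$2_b$-transmitter $L$, a $2_a$-$2_b$-antitransmitter $J$ and a $2_a$-$2_b$-clause simulator $H$ that are all bipartite and subcubic, in which the connectors of $L$ and $J$ have degree $1$, the connectors of $H$ have degree $2$, and within each gadget every pair of connectors is joined by a path of even length. Indeed, the first degree condition forces the graph produced by the reduction of Lemma \ref{reffinal} to be subcubic, and the even-length condition forces it to be bipartite, so the reduction becomes a reduction from NAE 3SAT to $(1,2,2)$-COL restricted to subcubic bipartite graphs; since $S$-COL is in NP and NAE 3SAT is NP-complete, this yields the theorem.

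The engine for the gadgets is the first fact established in the proof of Lemma \ref{rema}: a vertex of degree at least $3$ cannot receive color $1$, hence is forced into $\{2_a,2_b\}$; and two color-$2$ vertices at distance at most $2$ must lie in distinct $2$-packings, so they receive different colors. A degree-$1$ connector $u$ cannot be forced to color $2$ directly, so I force its unique neighbor $a$ to color $1$: this occurs precisely when both color-$2$ classes appear within distance $2$ of $a$, and then $u\neq 1$. Concretely, I take a degree-$3$ anchor $t$ adjacent to $a$ (hence colored $2$) and a second anchor $m$ adjacent to $t$ (hence colored $\overline{t}$) with $d(a,m)=2$; then $a$ sees both colors, so $a=1$, while $u$ lies at distance $2$ from $t$ and distance $3$ from $m$, so $u$ is pinned to $\overline{t}=m$'s color. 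These are the only primitives I need, and every vertex involved can be arranged so that the displayed coloring gives $u$ a color-$1$ neighbor, which is exactly the distance condition of the definitions.

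With this toolkit, $L$ and $J$ are small caterpillar-like \emph{trees}, hence automatically bipartite. For $L$, the two pinned connector arms meet a short chain of anchors $t_u,z,t_v$ in which $t_u$ and $t_v$ are each placed at distance $2$ from a common anchor $z$; both then differ from $z$ and therefore coincide, forcing $u=v$. Spacing the chain so that any two equally-colored anchors stay at distance greater than $2$ keeps $L$ colorable, and a direct count places the two connectors at even distance. For $J$, the two anchors $t_u,t_v$ are instead placed at distance exactly $2$ from each other through a single color-$1$ vertex; being color-$2$ vertices at distance $2$, they are forced to different colors while contributing an \emph{even} length. Giving each anchor its own opposite-colored witness (as in the pinning primitive) forces the two connectors to opposite colors, and the connectors again end up at even distance. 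This even-distance realization of ``different colors'' — obtained from a distance-$2$ forced-difference step rather than from a parity-one chain — is what makes $J$ bipartite-compatible.

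The remaining and hardest piece is the clause simulator $H$: three degree-$2$ connectors forced into $\{2_a,2_b\}$, never all equal, with all six non-constant patterns realizable under the distance condition, while $H$ stays bipartite and subcubic with its connectors pairwise at even distance. I expect this to be the main obstacle, and the difficulty is intrinsic rather than technical: the not-all-equal relation is not affine, so it cannot be produced by the equality/difference primitives alone — any single color-$2$ anchor placed within distance $2$ of all three connectors enforces an all-equal (or all-fixed) constraint, the exact opposite of what is wanted, and placing the connectors pairwise within distance $2$ over-constrains them into needing a third color. The resolution I would pursue is to introduce color-$1$ ``slack'' vertices whose colorability toggles with the connector pattern, arranged so that precisely the two monochromatic patterns generate an unavoidable demand for a third color, while each of the six mixed patterns leaves a consistent extension. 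The crux is to verify simultaneously that this gadget excludes exactly the two constant patterns, admits all six others with the required distance separation between connectors and off-connector vertices, and meets the bipartite, subcubic, degree-$2$ and even-distance constraints. Once $H$ is constructed, Lemma \ref{reffinal} together with the observation following it completes the proof.
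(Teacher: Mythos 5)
Your overall framework is the same as the paper's: apply Lemma \ref{reffinal} with the two copies $2_a$, $2_b$ of color $2$, and build a bipartite subcubic transmitter, antitransmitter and clause simulator with the degree and even-distance conditions noted after that lemma. Your $L$ and $J$ are also essentially the paper's gadgets (the paper's $L$ is a path of length $4$ between the connectors with one pendant on the middle vertex, forcing equality because the central degree-$3$ vertex is at distance $2$ from both connectors; its $J$ is a path of length $6$ with two pendant-bearing degree-$3$ vertices at mutual distance $2$, forcing difference), although the paper pins the connectors to color $2$ more simply, by observing that they will have degree $3$ in the final constructed graph rather than by engineering a color-$1$ neighbor inside the gadget.

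The genuine gap is the clause simulator $H$: you explicitly stop at describing the properties it must have and a strategy you ``would pursue,'' without exhibiting the gadget or verifying it. This is not a routine detail — it is the crux of the reduction, and as you yourself observe, the not-all-equal constraint cannot be assembled from the equality/difference primitives alone, so nothing in the rest of your argument implies that such an $H$ exists. The paper resolves this with a concrete construction: a cycle of length $18$ whose three connectors $\ell_1,\ell_2,\ell_3$ are pairwise at distance $6$, with two pendant vertices creating degree-$3$ vertices on two of the three sides (near $\ell_1$ and $\ell_2$) and \emph{no} degree-$3$ vertex on the side between $\ell_1$... rather, no pendant near $\ell_3$. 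If all three connectors were colored $2_a$, the two degree-$3$ vertices would be forced to $2_b$, and then the two neighbors of $\ell_3$ could not be colored: the vertices at distance $2$ from $\ell_3$ must be $1$, so these neighbors cannot be $1$ (adjacent to $\ell_3$'s... they are adjacent to vertices forced to $1$ and to $\ell_3$), cannot be $2_a$ (distance $1$ from $\ell_3$), and cannot both be $2_b$ (they are at distance $2$ from each other). The even cycle length keeps $H$ bipartite, and an explicit coloring witnesses every non-constant pattern. Without an argument of this kind your proof does not go through.
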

\begin{figure}[t]
\begin{center}
\begin{tikzpicture}
\draw (6,0) -- (12,0);
\draw (8,0) -- (8,0.5);
\draw (10,0) -- (10,0.5);
\node at (6,0) [circle,draw=black,fill=black, scale=0.7] {};
\node at (7,0) [circle,draw=black,fill=black, scale=0.7] {};
\node at (8,0) [circle,draw=black,fill=black, scale=0.7] {};
\node at (9,0) [circle,draw=black,fill=black, scale=0.7] {};
\node at (10,0) [circle,draw=black,fill=black, scale=0.7] {};
\node at (11,0) [circle,draw=black,fill=black, scale=0.7] {};
\node at (12,0) [circle,draw=black,fill=black, scale=0.7] {};
\node at (8,0.5) [circle,draw=black,fill=black, scale=0.7] {};
\node at (10,0.5) [circle,draw=black,fill=black, scale=0.7] {};
\node at (6,0.4){$\ell'_{1}$};
\node at (12,0.4){$\ell'_{2}$};
\draw (0,0) -- (4,0);
\draw (2,0) -- (2,0.5);
\node at (0,0) [circle,draw=black,fill=black, scale=0.7] {};
\node at (1,0) [circle,draw=black,fill=black, scale=0.7] {};
\node at (2,0) [circle,draw=black,fill=black, scale=0.7] {};
\node at (3,0) [circle,draw=black,fill=black, scale=0.7] {};
\node at (4,0) [circle,draw=black,fill=black, scale=0.7] {};
\node at (2,0.5) [circle,draw=black,fill=black, scale=0.7] {};
\node at (0,0.4){$\ell'_{1}$};
\node at (4,0.4){$\ell'_{2}$};
\end{tikzpicture}
\end{center}
\caption{The graph $L$ (on the left) and the graph $J$ (on the right).}
\label{gJK}
\end{figure}
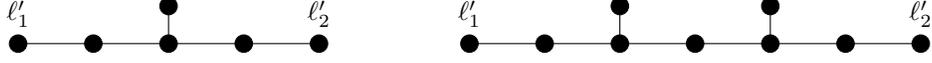

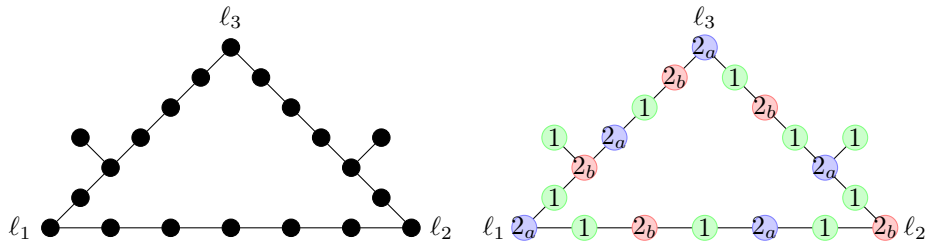
\begin{figure}[t]
\begin{center}
\begin{tikzpicture}
\draw (0,0) -- (4.8,0);
\draw (0,0) -- (2.4,2.4);
\draw (4.8,0) -- (2.4,2.4);
\draw (0.8,0.8) -- (0.4,1.2);
\draw (4,0.8) -- (4.4,1.2);
\node at (0,0) [circle,draw=black,fill=black, scale=0.7] {};
\node at (0.8,0) [circle,draw=black,fill=black, scale=0.7] {};
\node at (1.6,0) [circle,draw=black,fill=black, scale=0.7] {};
\node at (2.4,0) [circle,draw=black,fill=black, scale=0.7] {};
\node at (3.2,0) [circle,draw=black,fill=black, scale=0.7] {};
\node at (4,0) [circle,draw=black,fill=black, scale=0.7] {};
\node at (4.8,0) [circle,draw=black,fill=black, scale=0.7] {};
\node at (0.4,0.4) [circle,draw=black,fill=black, scale=0.7] {};
\node at (0.8,0.8) [circle,draw=black,fill=black, scale=0.7] {};
\node at (1.2,1.2) [circle,draw=black,fill=black, scale=0.7] {};
\node at (1.6,1.6) [circle,draw=black,fill=black, scale=0.7] {};
\node at (2,2) [circle,draw=black,fill=black, scale=0.7] {};
\node at (2.4,2.4) [circle,draw=black,fill=black, scale=0.7] {};
\node at (4.4,0.4) [circle,draw=black,fill=black, scale=0.7] {};
\node at (4,0.8) [circle,draw=black,fill=black, scale=0.7] {};
\node at (3.6,1.2) [circle,draw=black,fill=black, scale=0.7] {};
\node at (3.2,1.6) [circle,draw=black,fill=black, scale=0.7] {};
\node at (2.8,2) [circle,draw=black,fill=black, scale=0.7] {};
\node at (0.4,1.2) [circle,draw=black,fill=black, scale=0.7] {};
\node at (4.4,1.2) [circle,draw=black,fill=black, scale=0.7] {};
\node at (-0.4,0) {$\ell_{1}$};
\node at (5.2,0) {$\ell_{2}$};
\node at (2.4,2.8) {$\ell_{3}$};

\draw (0+6.3,0) -- (4.8+6.3,0);
\draw (0+6.3,0) -- (2.4+6.3,2.4);
\draw (4.8+6.3,0) -- (2.4+6.3,2.4);
\draw (0.8+6.3,0.8) -- (0.4+6.3,1.2);
\draw (4+6.3,0.8) -- (4.4+6.3,1.2);
\node at (0+6.3,0) [circle,draw=blue!50,fill=blue!20] {};
\node at (0+6.3,0) {$2_a$};
\node at (0.8+6.3,0) [circle,draw=green!50,fill=green!20] {};
\node at (0.8+6.3,0) {1};
\node at (1.6+6.3,0) [circle,draw=red!50,fill=red!20] {};
\node at (1.6+6.3,0) {$2_b$};
\node at (2.4+6.3,0) [circle,draw=green!50,fill=green!20] {};
\node at (2.4+6.3,0) {1};
\node at (3.2+6.3,0) [circle,draw=blue!50,fill=blue!20] {};
\node at (3.2+6.3,0) {$2_a$};
\node at (4+6.3,0) [circle,draw=green!50,fill=green!20] {};
\node at (4+6.3,0) {1};
\node at (4.8+6.3,0) [circle,draw=red!50,fill=red!20] {};
\node at (4.8+6.3,0) {$2_b$};
\node at (0.4+6.3,0.4)[circle,draw=green!50,fill=green!20] {};
\node at (0.4+6.3,0.4){1};
\node at (0.8+6.3,0.8) [circle,draw=red!50,fill=red!20] {};
\node at (0.8+6.3,0.8) {$2_b$};
\node at (3.6+6.3,1.2) [circle,draw=green!50,fill=green!20] {};
\node at (3.6+6.3,1.2){1};
\node at (3.2+6.3,1.6) [circle,draw=red!50,fill=red!20] {};
\node at (3.2+6.3,1.6) {$2_b$};
\node at (2.8+6.3,2) [circle,draw=green!50,fill=green!20] {};
\node at (2.8+6.3,2) {1};
\node at (2.4+6.3,2.4) [circle,draw=blue!50,fill=blue!20] {};
\node at (2.4+6.3,2.4){$2_a$};
\node at (4.4+6.3,0.4) [circle,draw=green!50,fill=green!20] {};
\node at (4.4+6.3,0.4){1};
\node at (4+6.3,0.8) [circle,draw=blue!50,fill=blue!20] {};
\node at (4+6.3,0.8){$2_a$};
\node at (1.2+6.3,1.2) [circle,draw=blue!50,fill=blue!20] {};
\node at (1.2+6.3,1.2) {$2_a$};
\node at (1.6+6.3,1.6) [circle,draw=green!50,fill=green!20] {};
\node at (1.6+6.3,1.6) {1};
\node at (2+6.3,2) [circle,draw=red!50,fill=red!20] {};
\node at (2+6.3,2){$2_b$};
\node at (0.4+6.3,1.2) [circle,draw=green!50,fill=green!20] {};
\node at (0.4+6.3,1.2) {1};
\node at (4.4+6.3,1.2) [circle,draw=green!50,fill=green!20] {};
\node at (4.4+6.3,1.2) {1};
\node at (-0.4+6.3,0) {$\ell_{1}$};
\node at (5.2+6.3,0) {$\ell_{2}$};
\node at (2.4+6.3,2.8) {$\ell_{3}$};
\end{tikzpicture}
\end{center}
\caption{The graph $H$ (on the left) and a $(1,2,2)$-coloring of $H$ (on the right).}
\label{gH}
\end{figure}
\begin{proof}
Let $L$ and $J$ be the subcubic graphs from Figure~\ref{gJK} with the corresponding connectors $\ell'_1$ and $\ell'_2$.
Remark that if a vertex has degree 3, then it can not be colored by 1, since the coloring can not be extended to a $(1,2,2)$-coloring.
Therefore, every connector must be colored by $2_a$ or by $2_b$, since in the final constructed graph they will have degree 3.
Observe that there are vertices of degree 3 in $L$ and $J$. Hence, these vertices must also be colored by $2_a$ or by $2_b$.

If the two connectors of $L$ have different colors, then the central vertex of degree 3 can not be colored.
Thus, in every $(1,2,2)$-coloring of $L$, the connectors must have the same color.
A $(1,2,2)$-coloring of $L$ exists by giving the color $2_a$ to the connectors, $2_b$ to the central vertex of degree 3 and 1 to the remaining vertices.

If the two connectors of $J$ have the same color, then the two central vertices of degree 3 can not be colored.
Thus, in every $(1,2,2)$-coloring of $J$, the connectors must be colored differently.
A $(1,2,2)$-coloring of $J$ exists by giving the color $2_a$ to a connector and to a central vertex of degree 3, $2_b$ to the other connector and to the other central vertex of degree 3 and 1 to the remaining vertices.
To conclude, $L$ is a 2-2-transmitter and $J$ is a 2-2-antitransmitter, respectively.\newline

Let $H$ be the subcubic graph from the left part of Figure~\ref{gH} with the corresponding connectors $\ell_1$, $\ell_2$ and $\ell_3$. The graph $H$ contains two vertices of degree 3 and three connectors. Moreover, these vertices should be colored by $2_a$ or by $2_b$.
Suppose $\ell_1$, $\ell_2$ and $\ell_3$ have the same color $2_a$. The two vertices of degree 3 should be colored by $2_b$.
Observe that the vertices at distance 2 from $\ell_3$ could only have the color 1. Thus, the two neighbors of $\ell_3$ can not be colored, since they can not be colored by 1 or by $2_a$ and they can not have both the color $2_b$.
Therefore, in every $(1,2,2)$-coloring of $H$ the connectors must be colored differently.
The right part of Figure~\ref{gH} gives the existence of an $S$-coloring with the suited property for $\ell_1$ and $\ell_3$ with the same color (for the other cases, the (1,2,2)-coloring is similar).
To conclude, $H$ is a 2-2-clause simulator.
\end{proof}
Note that no $r$-regular graph, for $r\ge3$, is $(1,2,2)$-colorable. Thus, this problem seems to be hard only on graphs containing a large proportion of vertices of degree at most 2.

\begin{theo}\label{theo24}
Let $k\ge2$ be a positive integer.
The problem $(1,1,k)$-COL is NP-complete for subcubic graphs.
\label{reffinal2}
\end{theo}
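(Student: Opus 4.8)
The plan is to apply Lemma~\ref{reffinal} to the color pair consisting of the two color-\(1\) classes, which I will write \(1_a\) and \(1_b\); the \(k\)-class then plays the role of the ``forbidden'' third color that connectors must avoid. With this choice, a transmitter \(L\), an antitransmitter \(J\) and a clause simulator \(H\) for \(\{1_a,1_b\}\) encode an instance of NAE 3SAT exactly as in Lemma~\ref{reffinal}: around each variable's \(L\)-cycles all occurrences share one class (say \(1_a\) = true, \(1_b\) = false), \(J\) forces the positive and negative occurrences to opposite classes, and \(H\) forbids a clause from being monochromatic, i.e.\ all three literals equal. Since \(S\)-COL is in NP (noted in the introduction), it suffices to build \(L\), \(J\) and \(H\) as subcubic graphs whose connectors have degree \(1\), \(1\) and \(2\) respectively; the lemma together with the remark following it then gives that the constructed graph is subcubic and that \((1,1,k)\)-COL is NP-complete on subcubic graphs.

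The key building block is a \emph{\(k\)-forbidding pendant}: to a vertex \(w\) I attach a path whose length is chosen as a function of \(k\) and which ends in a triangle. Being an odd cycle, the triangle must receive a vertex of color \(k\) in every \((1,1,k)\)-coloring; choosing the path length so that all three triangle vertices lie at distance at most \(k\) from \(w\) then forces \(w\) to avoid color \(k\), that is, pins \(w\) to \(\{1_a,1_b\}\). Dually, a triangle two of whose corners carry forbidding pendants has its third corner forced to color \(k\), giving a \emph{\(k\)-forcing unit} that places a mandatory color-\(k\) vertex at a prescribed spot. All of these subgadgets are subcubic: the attachment vertex reaches degree \(3\), every interior path vertex has degree \(2\), and the two free triangle vertices have degree \(2\).

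Using these blocks I would realize \(L\) as an even-length path joining its two degree-\(1\) connectors, with a forbidding pendant on each interior vertex so that the whole path is \(2\)-colored by \(\{1_a,1_b\}\) and parity forces the two connectors into the \emph{same} class; \(J\) is the same construction with an odd-length path, forcing \emph{different} classes. The degree-\(1\) connectors themselves are pinned to \(\{1_a,1_b\}\) by placing a mandatory color-\(k\) vertex within distance \(k\) of each, carried by the pendant of its unique neighbor. The clause simulator \(H\) is an odd-cycle gadget in the spirit of the \((1,2,2)\)-simulator of Theorem~\ref{theo23} and Figure~\ref{gH}: its three degree-\(2\) connectors are pinned to \(\{1_a,1_b\}\), and the internal odd cycle admits a valid coloring precisely when the three connectors are not all of the same class. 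It then remains to verify the two implications of Lemma~\ref{reffinal} and the distance (``buffer'') conditions in the definitions of transmitter and clause simulator, and to confirm that the maximum degree never exceeds \(3\).

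The main obstacle is that color \(k\) is governed by a distance-\(>k\) condition, which is long-range: any gadget that forbids color \(k\) somewhere is forced to plant a mandatory color-\(k\) vertex nearby, and a satisfiable instance yields a genuine \((1,1,k)\)-coloring only if \emph{all} such mandatory color-\(k\) vertices, across the entire constructed graph, can be placed pairwise at distance greater than \(k\). The delicate point is therefore to choose every internal length as a function of \(k\) so that each mandatory color-\(k\) vertex is close enough (within distance \(k\)) to pin its target, yet the mandatory color-\(k\) vertices stay globally more than \(k\) apart, all while retaining subcubic degrees and degree-\(1\) connectors. I expect this simultaneous ``near for pinning, far for packing'' balancing---possibly with separate small gadgets for \(k=2\)---to be where the real work lies.
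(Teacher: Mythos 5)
Your overall strategy coincides with the paper's: reduce from NAE 3SAT via Lemma~\ref{reffinal} applied to the pair $(1_a,1_b)$, use triangles to force a color-$k$ vertex (a triangle needs three classes, and two of the available classes are independent sets), use that forced vertex to pin nearby vertices to $\{1_a,1_b\}$, and let the parity of an alternating path decide transmitter versus antitransmitter. However, what you have written is a plan rather than a proof, and the step you explicitly defer (``near for pinning, far for packing'') is precisely the content of the construction. Three concrete gaps. First, your pendant design fails for $k=2$, which the statement includes: pinning a degree-$1$ connector requires \emph{all three} triangle vertices of its neighbor's pendant to lie within distance $2$ of the connector (you cannot control which of the three gets color $k$), which forces the triangle to contain the neighbor itself, giving that interior path vertex degree $4$; attaching anything to the connector instead is ruled out because it must keep degree $1$ for the gluing in Lemma~\ref{reffinal}. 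More generally your constraints $\lfloor k/2\rfloor\le d\le k-2$ on the pendant length $d$ have no solution at $k=2$. Second, the clause simulator is only gestured at (``an odd-cycle gadget in the spirit of Figure~\ref{gH}''); nothing specifies the cycle, the attachment of the three connectors, or why exactly the non-monochromatic configurations extend, and this gadget is the heart of the reduction. Third, in the existence direction you must exhibit a coloring of the entire constructed graph in which every forced $k$-vertex of every pendant of every copy of $L$, $J$ and $H$ is at pairwise distance greater than $k$ from all the others, while also meeting the buffer condition (distance greater than $\lfloor k/2\rfloor$ from each connector) in the definition of transmitter; with a pendant on each interior vertex of arbitrarily long paths this is a genuine packing verification that you do not carry out.

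The paper avoids nearly all of this by making $L_k$ and $J_k$ \emph{fixed-size} graphs of diameter about $2k$: a pair of triangles sharing an edge is embedded in the path itself, so the forced $k$-vertex is confined to the two shared vertices, both of which sit at distance exactly $k$ from both connectors and within distance $k$ of every other vertex of the gadget. One forced $k$-vertex therefore does all the pinning (including the tight case $k=2$), the alternation and parity argument is immediate, and there is essentially nothing left to pack. If you want to complete your argument, the shortest repair is to abandon the per-vertex pendants and adopt this ``one short gadget, one forced $k$-vertex covering everything'' shape, and then actually write down the clause simulator (the paper's $H_k$ routes three calibrated paths into a central triangle whose neighborhood becomes monochromatic, hence uncolorable, exactly when the three connectors agree).
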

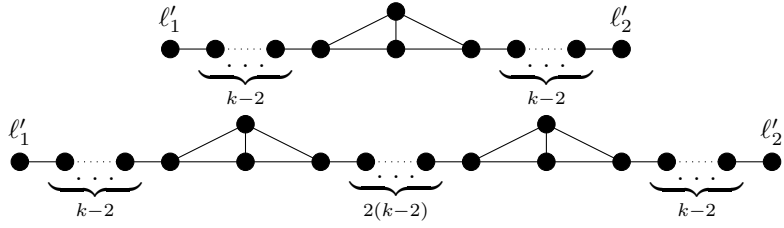
\begin{figure}[t]
\begin{center}
\begin{tikzpicture}
\draw (0,0+0.5) -- (0.6,0+0.5);
\draw[dotted] (0.6,0+0.5) -- (1.4,0+0.5);
\draw[dotted] (4.6,0+0.5) -- (5.4,0+0.5);
\draw (1.4,0+0.5) -- (4.6,0+0.5);
\draw (5.4,0+0.5) -- (6,0+0.5);
\draw (3,0+0.5) -- (3,0.5+0.5);
\draw (2,0+0.5) -- (3,0.5+0.5);
\draw (4,0+0.5) -- (3,0.5+0.5);
\node at (0,0+0.5) [circle,draw=black,fill=black,scale=0.7] {};
\node at (0.6,0+0.5) [circle,draw=black,fill=black,scale=0.7] {};
\node at (1.4,0+0.5) [circle,draw=black,fill=black,scale=0.7] {};
\node at (2,0+0.5) [circle,draw=black,fill=black,scale=0.7] {};
\node at (3,0+0.5) [circle,draw=black,fill=black,scale=0.7] {};
\node at (4,0+0.5) [circle,draw=black,fill=black,scale=0.7] {};
\node at (4.6,0+0.5) [circle,draw=black,fill=black,scale=0.7] {};
\node at (5.4,0+0.5) [circle,draw=black,fill=black,scale=0.7] {};
\node at (6,0+0.5) [circle,draw=black,fill=black,scale=0.7] {};
\node at (3,0.5+0.5) [circle,draw=black,fill=black,scale=0.7] {};
\node at (1,0){$\underbrace{\ \ \ .\  .\  .\ \ \ }_{k-2}$};
\node at (5,0){$\underbrace{\ \ \ .\  .\  .\ \ \ }_{k-2}$};
\node at (0,0.4+0.5){$\ell'_{1}$};
\node at (6,0.4+0.5){$\ell'_{2}$};

\draw (-2,-1) -- (-1.4,-1);
\draw (-0.6,-1) -- (2.6,-1);
\draw (3.4,-1) -- (6.6,-1);
\draw (7.4,-1) -- (8,-1);
\draw[dotted] (-1.4,-1) -- (-0.6,-1);
\draw[dotted] (2.6,-1) -- (3.4,-1);
\draw[dotted] (6.6,-1) -- (7.4,-1);
\draw (1,-1) -- (1,-0.5);
\draw (0,-1) -- (1,-0.5);
\draw (2,-1) -- (1,-0.5);
\draw (5,-1) -- (5,-0.5);
\draw (4,-1) -- (5,-0.5);
\draw (6,-1) -- (5,-0.5);
\node at (-2,-1) [circle,draw=black,fill=black,scale=0.7] {};
\node at (-1.4,-1) [circle,draw=black,fill=black,scale=0.7] {};
\node at (-0.6,-1) [circle,draw=black,fill=black,scale=0.7] {};
\node at (0,-1) [circle,draw=black,fill=black,scale=0.7] {};
\node at (1,-1) [circle,draw=black,fill=black,scale=0.7] {};
\node at (2,-1) [circle,draw=black,fill=black,scale=0.7] {};
\node at (2.6,-1) [circle,draw=black,fill=black,scale=0.7] {};
\node at (3.4,-1) [circle,draw=black,fill=black,scale=0.7] {};
\node at (4,-1) [circle,draw=black,fill=black,scale=0.7] {};
\node at (5,-1) [circle,draw=black,fill=black,scale=0.7] {};
\node at (6,-1) [circle,draw=black,fill=black,scale=0.7] {};
\node at (6.6,-1) [circle,draw=black,fill=black,scale=0.7] {};
\node at (7.4,-1) [circle,draw=black,fill=black,scale=0.7] {};
\node at (8,-1) [circle,draw=black,fill=black,scale=0.7] {};
\node at (1,-0.5) [circle,draw=black,fill=black,scale=0.7] {};
\node at (5,-0.5) [circle,draw=black,fill=black,scale=0.7] {};
\node at (-2,-0.6){$\ell'_{1}$};
\node at (8,-0.6){$\ell'_{2}$};
\node at (-1,-1.5){$\underbrace{\ \ \ .\  .\  .\ \ \ }_{k-2}$};
\node at (3,-1.5){$\underbrace{\ \ \ .\  .\  .\ \ \ }_{2(k-2)}$};
\node at (7,-1.5){$\underbrace{\ \ \ .\  .\  .\ \ \ }_{k-2}$};
\end{tikzpicture}
\end{center}
\caption{The graph $L_k$ (on the top) and the graph $J_k$ (on the bottom).}
\label{fJK}
\end{figure}
\begin{figure}[t]
\begin{center}
\begin{tikzpicture}
\draw (0,0) -- (0.5,0);
\draw (1.1,0) -- (2.9,0);
\draw (3.5,0) -- (4.8,0);
\draw[dotted] (0.5,0) -- (1.1,0);
\draw[dotted] (2.9,0) -- (3.5,0);
\draw (0,0) -- (0.65,0.65);
\draw (0.95,0.95) -- (1.85,1.85);
\draw (2.15,2.15) -- (2.4,2.4);
\draw[dotted] (0.65,0.65) -- (0.95,0.95);
\draw[dotted] (1.85,1.85) -- (2.15,2.15);
\draw (4.8,0) -- (4.55,0.25);
\draw (4.25,0.55) -- (3.35,1.45);
\draw (3.05,1.75) -- (2.4,2.4);
\draw[dotted] (4.55,0.25) -- (4.25,0.55);
\draw[dotted] (3.35,1.45) -- (3.05,1.75);
\draw (1.6,0) -- (2,-0.5);
\draw (2.4,0) -- (2,-0.5);
\draw (1.6,1.6) -- (1,1.8);
\draw (1.2,1.2) -- (1,1.8);
\draw (4,0.8) -- (4.2,1.4);
\draw (3.6,1.2) -- (4.2,1.4);
\draw (1,0.44) -- (0.4,0.4);
\draw (2,0.6) -- (1.55,0.55);
\draw[dotted] (1,0.44) -- (1.55,0.55);
\draw (3.6,0.2) -- (4,0);
\draw (2.8,0.6) -- (3.2,0.4);
\draw[dotted] (3.6,0.2) -- (3.2,0.4);
\draw (2.65,1.75) -- (2.8,2);
\draw (2.4,1.2)  -- (2.55,1.45);
\draw[dotted] (2.65,1.75)  -- (2.55,1.45);
\draw (2,0.6) -- (2.8,0.6);
\draw (2,0.6) -- (2.4,1.2);
\draw (2.8,0.6) -- (2.4,1.2);
\node at (0,0) [circle,draw=black,fill=black,scale=0.7] {};
\node at (0.5,0) [circle,draw=black,fill=black,scale=0.7] {};
\node at (1.1,0) [circle,draw=black,fill=black,scale=0.7] {};
\node at (1.6,0) [circle,draw=black,fill=black,scale=0.7] {};
\node at (2.4,0) [circle,draw=black,fill=black,scale=0.7] {};
\node at (2.9,0) [circle,draw=black,fill=black,scale=0.7] {};
\node at (3.5,0) [circle,draw=black,fill=black,scale=0.7] {};
\node at (4,0) [circle,draw=black,fill=black,scale=0.7] {};
\node at (4.8,0) [circle,draw=black,fill=black,scale=0.7] {};
\node at (2,-0.5) [circle,draw=black,fill=black,scale=0.7] {};
\node at (0.4,0.4) [circle,draw=black,fill=black,scale=0.7] {};
\node at (0.65,0.65) [circle,draw=black,fill=black,scale=0.7] {};
\node at (0.95,0.95) [circle,draw=black,fill=black,scale=0.7] {};
\node at (1.2,1.2) [circle,draw=black,fill=black,scale=0.7] {};
\node at (1.6,1.6) [circle,draw=black,fill=black,scale=0.7] {};
\node at (1.85,1.85) [circle,draw=black,fill=black,scale=0.7] {};
\node at (2.15,2.15) [circle,draw=black,fill=black,scale=0.7] {};
\node at (2.4,2.4) [circle,draw=black,fill=black,scale=0.7] {};
\node at (4.55,0.25) [circle,draw=black,fill=black,scale=0.7] {};
\node at (4.25,0.55) [circle,draw=black,fill=black,scale=0.7] {};
\node at (4,0.8) [circle,draw=black,fill=black,scale=0.7] {};
\node at (3.6,1.2) [circle,draw=black,fill=black,scale=0.7] {};
\node at (3.35,1.45) [circle,draw=black,fill=black,scale=0.7] {};
\node at (3.05,1.75) [circle,draw=black,fill=black,scale=0.7] {};
\node at (2.8,2) [circle,draw=black,fill=black,scale=0.7] {};
\node at (4.2,1.4) [circle,draw=black,fill=black,scale=0.7] {};
\node at (1,1.8) [circle,draw=black,fill=black,scale=0.7] {};
\node at (2,0.6) [circle,draw=black,fill=black,scale=0.7] {};
\node at (2.8,0.6) [circle,draw=black,fill=black,scale=0.7] {};
\node at (2.4,1.2) [circle,draw=black,fill=black,scale=0.7] {};
\node at (1,0.44) [circle,draw=black,fill=black,scale=0.7] {};
\node at (1.55,0.55) [circle,draw=black,fill=black,scale=0.7] {};
\node at (3.6,0.2) [circle,draw=black,fill=black,scale=0.7] {};
\node at (3.2,0.4) [circle,draw=black,fill=black,scale=0.7] {};

\node at (2.55,1.45) [circle,draw=black,fill=black,scale=0.7] {};
\node at (2.65,1.75) [circle,draw=black,fill=black,scale=0.7] {};

\node at (0.8,-0.55){$\underbrace{\ \ \ .\  .\  .\ \ \ }_{k-2}$};
\node at (3.2,-0.55){$\underbrace{\ \ \ .\  .\  .\ \ \ }_{k-2}$};
\node at (4.8,0.8)[rotate=-45]{$\overbrace{\ .\  .\  .\  }^{k-2}$};
\node at (3.6,2)[rotate=-45]{$\overbrace{\ .\  .\  .\  }^{k-2}$};
\node at (0.4,1.2)[rotate=45]{$\overbrace{\ .\  .\  .\  }^{k-2}$};
\node at (1.6,2.4)[rotate=45]{$\overbrace{\ .\  .\  .\  }^{k-2}$};
\node at (3.6,0.65) [rotate=-32]{$\overbrace{. . . }^{k-2}$};
\node at (1.4,0.85) [rotate=10]{$\overbrace{ .  . .  }^{k-2}$};
\node at (2.1,1.6) [rotate=65]{$\overbrace{. . .  }^{k-2}$};
\node at (-0.4,0) {$\ell_{1}$};
\node at (5.2,0) {$\ell_{2}$};
\node at (2.4,2.8) {$\ell_{3}$};

\draw (0+6.3,0) -- (4.8+6.3,0);
\draw (0+6.3,0) -- (2.4+6.3,2.4);
\draw (4.8+6.3,0) -- (2.4+6.3,2.4);
\draw (1.6+6.3,0) -- (2+6.3,-0.5);
\draw (2.4+6.3,0) -- (2+6.3,-0.5);
\draw (1.6+6.3,1.6) -- (1+6.3,1.8);
\draw (1.2+6.3,1.2) -- (1+6.3,1.8);
\draw (4+6.3,0.8) -- (4.2+6.3,1.4);
\draw (3.6+6.3,1.2) -- (4.2+6.3,1.4);
\draw (2+6.3,0.6) -- (0.4+6.3,0.4);
\draw (2.8+6.3,0.6) -- (4+6.3,0);
\draw (2.4+6.3,1.2)  -- (2.8+6.3,2);
\draw (2+6.3,0.6) -- (2.8+6.3,0.6);
\draw (2+6.3,0.6) -- (2.4+6.3,1.2);
\draw (2.8+6.3,0.6) -- (2.4+6.3,1.2);
\node at (0+6.3,0) [circle,draw=blue!50,fill=blue!20] {};
\node at (0+6.3,0) {$1_a$};
\node at (0.8+6.3,0) [circle,draw=red!50,fill=red!20] {};
\node at (0.8+6.3,0) {$1_b$};
\node at (1.6+6.3,0) [circle,draw=blue!50,fill=blue!20] {};
\node at (1.6+6.3,0) {$1_a$};
\node at (2.4+6.3,0) [circle,draw=red!50,fill=red!20] {};
\node at (2.4+6.3,0) {$1_b$};
\node at (3.2+6.3,0) [circle,draw=blue!50,fill=blue!20] {};
\node at (3.2+6.3,0) {$1_a$};
\node at (4+6.3,0) [circle,draw=red!50,fill=red!20] {};
\node at (4+6.3,0) {$1_b$};
\node at (4.8+6.3,0) [circle,draw=blue!50,fill=blue!20] {};
\node at (4.8+6.3,0) {$1_a$};
\node at (2+6.3,-0.5)  [circle,draw=green!50,fill=green!20] {};
\node at (2+6.3,-0.5) {3};
\node at (0.4+6.3,0.4) [circle,draw=red!50,fill=red!20] {};
\node at (0.8,-0.55){$\underbrace{\ \ \ .\  .\  .\ \ \ }_{k-2}$};
\node at (0.4+6.3,0.4) {$1_b$};
\node at (0.8+6.3,0.8) [circle,draw=blue!50,fill=blue!20] {};
\node at (0.8+6.3,0.8) {$1_a$};
\node at (1.2+6.3,1.2) [circle,draw=red!50,fill=red!20] {};
\node at (1.2+6.3,1.2) {$1_b$};
\node at (1.6+6.3,1.6) [circle,draw=green!50,fill=green!20] {};
\node at (1.6+6.3,1.6) {3};
\node at (2+6.3,2) [circle,draw=blue!50,fill=blue!20] {};
\node at (2+6.3,2) {$1_a$};
\node at (2.4+6.3,2.4) [circle,draw=red!50,fill=red!20] {};
\node at (2.4+6.3,2.4) {$1_b$};
\node at (4.4+6.3,0.4) [circle,draw=red!50,fill=red!20] {};
\node at (4.4+6.3,0.4) {$1_b$};
\node at (4+6.3,0.8) [circle,draw=green!50,fill=green!20] {};
\node at (4+6.3,0.8) {3};
\node at (3.6+6.3,1.2) [circle,draw=blue!50,fill=blue!20] {};
\node at (3.6+6.3,1.2) {$1_a$};
\node at (3.2+6.3,1.6) [circle,draw=red!50,fill=red!20] {};
\node at (3.2+6.3,1.6) {$1_b$};
\node at (2.8+6.3,2) [circle,draw=blue!50,fill=blue!20] {};
\node at (2.8+6.3,2) {$1_a$};
\node at (4.2+6.3,1.4) [circle,draw=red!50,fill=red!20] {};
\node at (4.2+6.3,1.4) {$1_b$};
\node at (1+6.3,1.8) [circle,draw=blue!50,fill=blue!20] {};
\node at (1+6.3,1.8){$1_a$};
\node at (2+6.3,0.6)[circle,draw=green!50,fill=green!20] {};
\node at (2+6.3,0.6){3};
\node at (2.8+6.3,0.6) [circle,draw=red!50,fill=red!20] {};
\node at (2.8+6.3,0.6) {$1_b$};
\node at (2.4+6.3,1.2)  [circle,draw=blue!50,fill=blue!20] {};
\node at (2.4+6.3,1.2) {$1_a$};
\node at (1.25+6.3,0.5) [circle,draw=blue!50,fill=blue!20] {};
\node at (1.25+6.3,0.5) {$1_a$};
\node at (3.4+6.3,0.3) [circle,draw=blue!50,fill=blue!20] {};
\node at (3.4+6.3,0.3) {$1_a$};
\node at (2.6+6.3,1.6) [circle,draw=red!50,fill=red!20] {};
\node at (2.6+6.3,1.6) {$1_b$};
\node at (-0.35+6.3,0) {$\ell_{1}$};
\node at (5.2+6.3,0) {$\ell_{2}$};
\node at (2.4+6.3,2.8) {$\ell_{3}$};
\end{tikzpicture}
\end{center}
\caption{The graph $H_k$ (on the left) and a $(1,1,3)$-coloring of $H_3$ (on the right).}
\label{fH}
\end{figure}
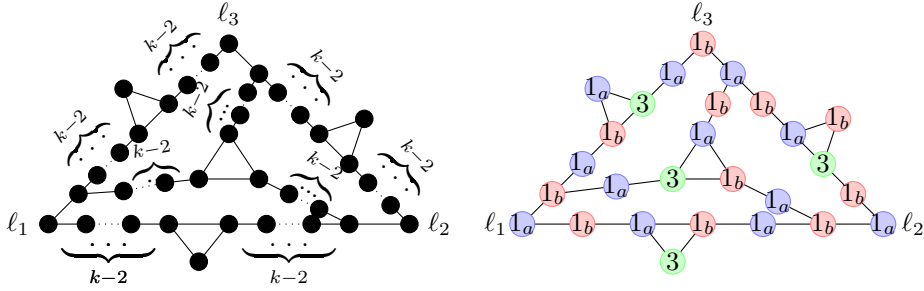
\begin{proof}
Let $L_k$ and $J_k$ be the subcubic graphs from Figure~\ref{fJK} with the corresponding connectors $\ell'_1$ and $\ell'_2$.
First, observe that there are two triangles in $L_k$.
If no common vertices of the two triangles are colored by $k$, then the coloring can not be extended to a $(1,1,k)$-coloring of $L_k$. Thus, in every $(1,1,k)$-coloring one of these two vertices is colored by $k$ and the remaining vertices could only be colored by $1_a$ or by $1_b$, alternating the colors. Therefore, since the number of the remaining vertices is odd, if one of the connector is colored by a color $1$, the other connector could only be colored by the same color $1$.
In this $(1,1,k)$-coloring, the vertices colored by $k$ are at distance $k$ from the connectors.

Observe that there are two pairs of adjacent triangles in $J$.
Thus, for the same reason than for $L$, for each pair of adjacent triangles, one of the two vertices which is common to both triangles should be colored by $k$. Therefore, since the number of remaining vertices is even, if one of the connector is colored by a color $1$, the other connector could only be colored by the other color $1$.
To conclude, $L_k$ is a 1-1-transmitter and $J_k$ is a 1-1-antitransmitter.

Let $H_k$ be the subcubic graph from the left part of Figure~\ref{fH} with the corresponding connectors $\ell_1$, $\ell_2$ and $\ell_3$. The graph $H_k$ contains four triangles and in every triangle a vertex should be colored by $k$. Thus, in every $(1,1,k)$-coloring the remaining vertices could only be colored by $1_a$ or by $1_b$ alternating the colors.
Without loss of generality, suppose $\ell_1$, $\ell_2$ and $\ell_3$ are colored by the same color $1_a$. The two neighbors of these vertices should be colored by $1_b$ and alternating the colors, the central triangle has its neighbors all colored by the same color $1_b$ or by $1_a$ (depending on $k$). Hence, the central triangle can not be colored, the color of the neighbors being not available.
The right part of Figure~\ref{fH} gives the existence of an $S$-coloring with the suited property for $k=3$ (for any other integer $k$, the coloring is similar).
To conclude, $H_k$ is a 1-1-clause simulator.
\end{proof}
\begin{cor}\label{cor1}
Let $k\ge2$ be a positive integer.
The problem $(1,1,k)$-COL is NP-complete for cubic graphs.
\end{cor}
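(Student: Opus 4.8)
The plan is to prove Corollary \ref{cor1} by reduction from the subcubic case, Theorem \ref{theo24}, padding a subcubic instance into a cubic one. Given a subcubic graph $G$ produced by the construction of Theorem \ref{theo24}, I would attach to every vertex of degree less than $3$ a pendant \emph{filler} gadget that raises its degree to exactly $3$, obtaining a cubic graph $G'$, and then show that $G'$ is $(1,1,k)$-colorable if and only if $G$ is. Since each gadget is attached at a single vertex, it is pendant: no path between two vertices of $V(G)$ passes through a gadget, so $d_{G'}(u,v)=d_{G}(u,v)$ for all $u,v\in V(G)$. This no-shortcut property is what will let the original coloring survive unchanged.

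The crux is the gadget itself. I would build a graph $D_k$ with one distinguished \emph{port} $p$ of internal degree $2$ and all other vertices of degree $3$, which is $(1,1,k)$-colorable with $p$ receiving either $1_a$ or $1_b$, and in which every vertex colored $k$ lies at distance greater than $k-1$ from $p$. The intended shape is a bipartite, properly $2$-colored core of radius about $k$ around $p$ (using only the colors $1_a,1_b$), terminated far from $p$ by triangle blocks of the kind appearing in $L_k$, each of which hosts one color-$k$ vertex and absorbs the remaining degrees so that the whole gadget becomes $3$-regular except at $p$. One copy of $D_k$ is attached to each degree-$2$ vertex (adding the single missing edge $vp$), and two copies to each degree-$1$ vertex (adding two edges to distinct ports); as $k$ is fixed, each $D_k$ has constant size, so the reduction is polynomial.

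For the forward direction, given a $(1,1,k)$-coloring of $G$, I would extend it by coloring each $D_k$ with its port $p$ assigned a color in $\{1_a,1_b\}$ distinct from the $1$-color of its attachment vertex $v$ (any choice in $\{1_a,1_b\}$ if $v$ is colored $k$); the port is thus never colored $k$. Because every color-$k$ vertex of a gadget is at distance at least $k$ from its port, it is at distance greater than $k$ from $v$ and, through $v$, from all other color-$k$ vertices of $G'$, so no $k$-packing is violated; the $1$-packing constraints are immediate from the proper $2$-coloring of the core together with the choice $p\neq c_v$. For the backward direction, restricting a $(1,1,k)$-coloring of $G'$ to $V(G)$ yields a valid coloring of $G$, since $G$ is a subgraph of $G'$ and hence $d_G\ge d_{G'}$, so every packing constraint holding in $G'$ holds in $G$.

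The main obstacle is the explicit construction and verification of $D_k$. A purely bipartite gadget that is $3$-regular away from a single degree-$2$ port cannot exist: if exactly one vertex has degree $2$ and all others degree $3$, then in a bipartition with parts $A,B$ one has $3|A|=3|B|\pm 1$, which is impossible modulo $3$. This parity obstruction forces the use of triangles, and therefore of the color $k$, somewhere in the gadget; the delicate point is to place those color-$k$ triangles at distance greater than $k-1$ from $p$ while keeping the whole gadget cubic and finite, and to confirm that the port genuinely accepts \emph{both} $1_a$ and $1_b$ with such a coloring. Once $D_k$ is exhibited with these three properties, the equivalence above completes the reduction and establishes NP-completeness of $(1,1,k)$-COL on cubic graphs.
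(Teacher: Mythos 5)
Your proposal is correct and follows essentially the same route as the paper: the paper also makes the subcubic graph from Theorem \ref{reffinal2} cubic by attaching to each degree-$2$ vertex a pendant gadget $N_k$ (Figure \ref{fN}) that is exactly the $D_k$ you describe, namely a properly $2$-colorable chain of length about $k$ ending in a pair of triangles sharing an edge, whose forced color-$k$ vertex sits far from the attachment point. The only thing you leave implicit, the explicit gadget, is realized by the paper's $N_k$, and your parity observation correctly explains why its terminal triangles are unavoidable.
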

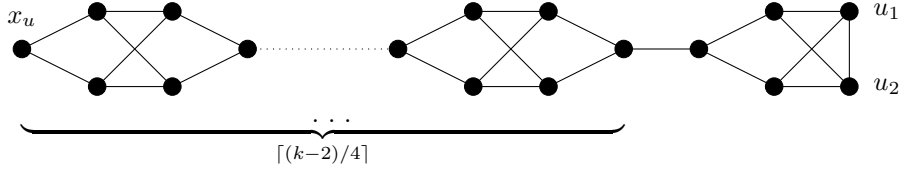
\begin{figure}[t]
\begin{center}
\begin{tikzpicture}
\draw (0,0) -- (1,0.5);
\draw (0,0) -- (1,-0.5);
\draw (1,0.5) -- (2,0.5);
\draw (1,-0.5) -- (2,-0.5);
\draw (1,0.5) -- (2,-0.5);
\draw (1,-0.5) -- (2,0.5);
\draw (2,0.5) -- (3,0);
\draw (2,-0.5) -- (3,0);
\draw[dotted] (3,0) -- (5,0);
\draw (5,0) -- (6,0.5);
\draw (5,0) -- (6,-0.5);
\draw (6,0.5) -- (7,0.5);
\draw (6,-0.5) -- (7,-0.5);
\draw (6,0.5) -- (7,-0.5);
\draw (6,-0.5) -- (7,0.5);
\draw (7,0.5) -- (8,0);
\draw (7,-0.5) -- (8,0);
\draw (8,0) -- (9,0);
\draw (9,0) -- (10,0.5);
\draw (9,0) -- (10,-0.5);
\draw (10,0.5) -- (11,0.5);
\draw (10,-0.5) -- (11,-0.5);
\draw (10,0.5) -- (11,-0.5);
\draw (10,-0.5) -- (11,0.5);
\draw (11,-0.5) -- (11,0.5);
\node at (4,-1.25){$\underbrace{ \ \ \ \ \ \ \ \ \ \ \ \ \ \ \ \ \ \ \ \ \ \ \ \ \ \ \ \ \ \ \ \ \ . \ . \ . \ \ \ \ \ \ \ \ \ \ \ \ \ \ \ \ \ \ \ \ \ \ \ \ \ \ \ \ \ \ \ }_{\lceil (k-2)/4\rceil}$};
\node at (0,0) [circle,draw=black,fill=black, scale=0.7] {};
\node at (1,0.5) [circle,draw=black,fill=black, scale=0.7] {};
\node at (1,-0.5) [circle,draw=black,fill=black, scale=0.7] {};
\node at (2,0.5) [circle,draw=black,fill=black, scale=0.7] {};
\node at (2,-0.5) [circle,draw=black,fill=black, scale=0.7] {};
\node at (3,0) [circle,draw=black,fill=black, scale=0.7] {};
\node at (5,0) [circle,draw=black,fill=black, scale=0.7] {};
\node at (6,0.5) [circle,draw=black,fill=black, scale=0.7] {};
\node at (6,-0.5) [circle,draw=black,fill=black, scale=0.7] {};
\node at (7,0.5) [circle,draw=black,fill=black, scale=0.7] {};
\node at (7,-0.5) [circle,draw=black,fill=black, scale=0.7] {};
\node at (8,0) [circle,draw=black,fill=black, scale=0.7] {};
\node at (9,0) [circle,draw=black,fill=black, scale=0.7] {};
\node at (10,0.5) [circle,draw=black,fill=black, scale=0.7] {};
\node at (10,-0.5) [circle,draw=black,fill=black, scale=0.7] {};
\node at (11,0.5) [circle,draw=black,fill=black, scale=0.7] {};
\node at (11,-0.5) [circle,draw=black,fill=black, scale=0.7] {};
\node at (0,0.4){$x_u$};
\node at (11.5,0.5){$u_1$};
\node at (11.5,-0.5){$u_2$};
\end{tikzpicture}
\end{center}
\caption{The graph $N_k$.}
\label{fN}
\end{figure}
\begin{proof}
Let $G$ be a graph and $G'$ be the graph obtained with the reduction from Lemma \ref{reffinal} using the 1-1-transmitter, 1-1-antitransmitter and 1-1-clause simulator from Proposition \ref{reffinal2}.
For each vertex $u\in V(G')$ of degree 2, we add a graph $N_k$ from Figure \ref{fN} and connect $u$ to the vertex $x_u$ of $N_k$. 
We obtain a cubic graph in the previous reduction. The graph $N_k$ is clearly $(1,1,k)$-colorable, the color $k$ is used to color one vertex among $u_1$ and $u_2$.
\end{proof}
\begin{proof}[Proof of Theorem \ref{theo22}]
Theorem \ref{theo22} is obtained from Theorem \ref{theo23}, Theorem \ref{theo24} and Corollary \ref{cor1}.
\end{proof}
\section{Dichothomy theorem for $|S|= 4$}
This section is dedicated to the proof of our main theorem, a dichotomy property for $S$-COL when $|S|= 4$.
\begin{theo}\label{theo31}
Let $S$ be a nondecreasing list of four integers. The problem $S$-COL is polynomial-time solvable if $S\le S'$, for $S'=(2,3,3,3)$, $S'=(2,2,3,4)$, $S'=(1,4,4,4)$ or $S'=(1,2,5,6)$, and NP-complete otherwise.
\end{theo}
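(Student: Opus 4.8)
The plan is to prove the two directions separately and then check that the four polynomial up-sets together with the NP-complete instances exhaust every nondecreasing four-term list. A preliminary combinatorial step makes the dichotomy explicit: writing $S=(a,b,c,d)$ with $a\le b\le c\le d$, a list fails all four polynomial conditions $S\le(2,3,3,3)$, $S\le(2,2,3,4)$, $S\le(1,4,4,4)$, $S\le(1,2,5,6)$ exactly when it lies in one of the families $(1,1,c,d)$, $(1,2,c,d)$ with $c\le4$, $(1,3,c,d)$ with $c\le4$, $(2,2,2,d)$, or equals one of the finitely many exceptional lists $(2,2,3,3)$, $(1,2,5,5)$, $(1,3,5,5)$. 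This enumeration (which uses $a\ge3\Rightarrow S\le(2,3,3,3)$ and $b\ge4\Rightarrow S\le(1,4,4,4)$ to discard the large cases) is the target that the hardness constructions must match, and it is organized using the monotonicity from the introduction: enlarging the entries of $S$ only tightens the coloring constraints.

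For the polynomial direction I would, for each of the four maximal lists $S'$ and each $S\le S'$, give a structural description of the $S$-colorable graphs in the spirit of the $|S|=3$ description of \cite{GO2012}, together with a polynomial recognition procedure. When $s_1\ge2$ the key observation is a ball-counting argument: in a radius-$1$ ball the vertices are pairwise at distance at most $2$, and with $S\le(2,3,3,3)$ each of the four packings meets such a ball in at most one vertex, so every closed neighbourhood has at most four vertices and the maximum degree is at most $3$; a similar count applies under $S\le(2,2,3,4)$. This forces a strongly restricted local structure, from which recognition is polynomial (by forbidding a bounded set of local configurations, the distance constraints being expressible for fixed $S$). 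When $s_1=1$ the bound on the degree fails, but here one color class is an ordinary independent set and the algorithm instead peels off such a set and checks coverability of the remainder by the three remaining large-distance packings. Because passing from $S'$ to any $S\le S'$ only makes the constraints stricter, the descriptions persist a fortiori.

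For the NP-complete direction I would invoke Lemma~\ref{reffinal}: for each boundary family it suffices to exhibit an $s_i$-$s_j$-transmitter, an $s_i$-$s_j$-antitransmitter and an $s_i$-$s_j$-clause simulator for a suitable pair of colors, so that the reduction from NAE $3$SAT applies. The natural choices are the two unit colors for the family $(1,1,c,d)$, extending the $1$-$1$ gadgets $L_k,J_k,H_k$ of Theorem~\ref{theo24} and Figures~\ref{fJK}--\ref{fH}; a pair taken from the three colors $2$ for $(2,2,2,d)$, extending the $2$-$2$ gadgets of Theorem~\ref{theo23}; and the pair $\{2,2\}$ for $(1,2,2,d)$. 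In every gadget the two remaining colors must be \emph{blocked} at the connectors, which is precisely what the degree and distance side-conditions in the definition of the gadgets are designed to enforce. For the families with unbounded last entry $d$ the gadgets are made parametric as in Figure~\ref{fJK}, and one exploits that a $d$-packing inside a bounded gadget can contain very few vertices once $d$ is large, so the color $d$ is essentially unavailable at the connectors; the finitely many exceptional lists $(2,2,3,3)$, $(1,2,5,5)$, $(1,3,5,5)$ are then settled by fixed-size gadgets.

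The step I expect to be the main obstacle is the clause simulator for the mixed families $(1,2,c,d)$ and $(1,3,c,d)$ with $c\ge3$. Unlike the $1$-$1$ and $2$-$2$ cases there is no repeated color, so the symmetric alternating patterns of Theorems~\ref{theo23} and~\ref{theo24} are unavailable; moreover the clean obstruction ``a degree-$3$ vertex cannot be colored $1$'' no longer holds, since the two extra large-distance colors give each forbidden neighbourhood enough room. One must therefore design a gadget that simultaneously forces all three connectors into a single two-element color set, forbids the all-equal configuration, and verifies the distance requirement of the clause-simulator definition for every admissible configuration of the connectors, and do so uniformly in the unbounded parameters $c$ and $d$. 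Once the three gadgets are built for each boundary family, Lemma~\ref{reffinal} yields NP-completeness, and the combinatorial partition above shows that these hard instances together with the four polynomial up-sets cover every nondecreasing four-term list, completing the dichotomy.
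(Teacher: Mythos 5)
Your enumeration of the lists escaping all four polynomial conditions is correct, and the overall architecture (four polynomial up-sets plus hardness on the complement) matches the paper, but both halves of your argument have genuine gaps. On the polynomial side, the ball-counting step only yields a bound on the maximum degree, and a degree bound does not give a polynomial-time algorithm for a coloring-type problem (ordinary $3$-coloring is NP-complete on bounded-degree graphs, and the paper itself proves $(1,1,k)$-COL NP-complete on cubic graphs). The paper's actual mechanism is different: for $S\le(2,3,3,3)$, $(2,2,3,4)$ and $(1,4,4,4)$ it shows that the spider $Y_n$ is not $S$-colorable for a small $n$ (Proposition~\ref{dicho3}), so every $S$-colorable graph excludes a fixed forest as a minor, hence has bounded pathwidth by Theorem~\ref{treewborne}, and then $S$-COL is polynomial-time solvable on bounded treewidth by Theorem~\ref{Spoly}; for $S\le(1,2,5,6)$ the same pathwidth route is used via $P_{16}$ (Proposition~\ref{dicho4}) together with the explicit structural characterizations of Theorems~\ref{bipartit1} and~\ref{bipartit2}. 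Your ``peel off an independent set and check coverability'' step for $s_1=1$ restates the problem rather than solving it. Without the minor-exclusion/treewidth idea or the explicit characterizations, the polynomial half does not go through.

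On the hardness side, your plan to build transmitters, antitransmitters and clause simulators for every boundary family via Lemma~\ref{reffinal} is precisely the step you concede you cannot complete, and it is also not how the paper proceeds except for the single list $(2,2,3,3)$. The paper instead lifts hardness from $|S|=3$: Proposition~\ref{reduc11} (attaching pendant paths and a clique $K_{|S|+1}$) handles all lists with $s_1=s_2=1$; Proposition~\ref{reduc2f} (replacing each edge by $|S|+1$ subdivided parallel edges, which doubles distances) transfers NP-completeness from $(1,2,2)$ and $(1,1,k)$ to the lists with $s_1=1$ and $s_2\ge 2$; and Theorem~\ref{reducii} settles $(2,2,2,k)$ by a reduction from ordinary $k$-coloring rather than from NAE 3SAT. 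These list-lifting reductions are what make the families with an unbounded last entry tractable; your parametric-gadget sketch for the mixed families is not carried out, so the hardness half is also incomplete as written.
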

The section is organized as follows. In the first subsection, we present some properties used to prove that $S$-COL is polynomial-time solvable. In the second subsection, we present some properties used to prove that $S$-COL is NP-complete.
In the third subsection, we prove Theorem \ref{theo31}.
\subsection{Polynomial-time solvable instances of $S$-COL}

The following theorem is a result from Bienstock \emph{et al.} \cite{BI1991}.
The definitions of pathwidth, treewidth and minors are not given here, the reader can found them in \cite{RO1984,WA1937}.
Note that for a graph $G$, the treewidth of $G$ is bounded by the pathwidth of $G$.
\begin{theo}[\cite{BI1991}]\label{treewborne}
For any forest $F$, every graph without $F$ as minor has a pathwidth bounded by $|V(F)|-2$.
\end{theo}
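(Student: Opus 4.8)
The plan is to prove the contrapositive: writing $n=|V(F)|$ and $\mathrm{pw}(G)$ for the pathwidth, I would show that every graph $G$ with $\mathrm{pw}(G)\ge n-1$ contains $F$ as a minor, and proceed by induction on $n$. The base cases are direct from the usual conventions: $\mathrm{pw}(G)\ge 0$ forces $G$ to be non-null, hence to contain the one-vertex forest $K_1$ as a minor, while $\mathrm{pw}(G)\ge 1$ forces $G$ to contain an edge, hence to contain either $K_2$ or $2K_1$, the only two forests on two vertices. Throughout I would work with a \emph{lean} (minimum-width and linked) path decomposition $(B_1,\dots,B_m)$ of $G$, using the vertex-separation reformulation of pathwidth so that each adhesion $B_i\cap B_{i+1}$ is a separator of controlled size across which the required vertex-disjoint paths can be routed.

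If $F$ is disconnected, I would split it as $F=F_1\sqcup F_2$ with $n_1+n_2=n$ and $n_1,n_2\ge 1$, and aim to realise the two minors in disjoint regions of $G$. Sweeping the lean decomposition from left to right, I would locate, by a threshold argument, a bounded \emph{window} of consecutive bags whose induced subgraph already carries pathwidth at least $n_1-1$; the inductive hypothesis then supplies an $F_1$-minor inside that window, and leanness keeps it localised. The separators bounding the window have size at most the width, and since $\mathrm{pw}(G)\ge n_1+n_2-1$, accounting those at most $n_2$ boundary vertices against $F_2$ leaves a complementary part of pathwidth at least $n_2-1$, in which the inductive hypothesis yields a disjoint $F_2$-minor; together these form the desired $F$-minor. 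Making this count tight is exactly what produces the constant $n-2$ rather than something larger.

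If $F$ is a tree $T$ on $n\ge 2$ vertices, I would choose a leaf $v$ with neighbour $u$ and set $T'=T-v$, a tree on $n-1$ vertices. Applying the inductive hypothesis to a graph of pathwidth $\ge n-1>n-2$ yields a $T'$-minor, but to append the leaf $v$ I need this model to leave an unused vertex adjacent to the branch set of $u$. The remedy I would adopt is to strengthen the induction to a \emph{rooted} statement: instead of merely producing $F$, prove that sufficiently large pathwidth yields an $F$-minor whose branch sets meet a prescribed boundary in a controlled way. The slack between the available pathwidth $n-1$ and the value $n-2$ that $T'$ alone would consume is precisely the room needed to attach a fresh branch set for $v$, and leanness guarantees both the spare vertex and the connecting path.

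The principal difficulty, in both cases, is converting the single numerical hypothesis on pathwidth into a concrete family of vertex-disjoint, correctly-attached connected branch sets: a naive induction loses all control of where the minors sit. The two enabling tools are a lean path decomposition, whose linkage property lets one route the needed disjoint paths and guarantees that deleting $t$ vertices lowers the pathwidth by at most $t$, and the rooted strengthening of the inductive hypothesis. Formulating the rooted statement so that it is simultaneously strong enough to drive the leaf-extension and the disjoint-region arguments, yet still provable by induction with the right constants, is the crux of the proof and is what pins the bound to exactly $|V(F)|-2$.
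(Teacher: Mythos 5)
The paper does not prove this statement at all: it is quoted verbatim from Bienstock, Robertson, Seymour and Thomas \cite{BI1991} and used as a black box, so there is no proof of the author's to compare yours against. Judged on its own, your proposal is a plan rather than a proof. Its overall direction (contrapositive, induction on $|V(F)|$, lean path decompositions, a rooted strengthening of the inductive hypothesis) is consistent with how this theorem is actually proved in the literature, e.g.\ in Diestel's short proof of the path-width theorem, but the two components that carry all the mathematical content are left unformulated. You say yourself that stating the rooted inductive hypothesis ``so that it is simultaneously strong enough \dots yet still provable'' is the crux; without that statement the tree case is not an argument, since appending the leaf $v$ is exactly the step that fails for a naive induction.

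The disconnected case has a concrete gap in the accounting. The adhesion sets $B_i\cap B_{i+1}$ of a minimum-width path decomposition of $G$ have size at most $\mathrm{pw}(G)$, and $\mathrm{pw}(G)\ge n-1$ is a lower bound, not an upper bound, so there is no justification for bounding the window's boundary by $n_2$ vertices. Moreover, even granting a boundary of size $n_2$, deleting $n_2$ vertices from a graph of pathwidth at least $n_1+n_2-1$ yields pathwidth at least $n_1-1$, not the $n_2-1$ you claim for the complementary part, and you still need that complementary part to be vertex-disjoint from the window containing the $F_1$-minor. As written, the numbers do not close the induction, and fixing them is precisely where the difficulty of the theorem lies.
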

We will use the following theorem proven by Fiala \emph{et al.} \cite{FI2010} in order to prove that several instances of $S$-COL are polynomial-time solvable.
\begin{theo}[\cite{FI2010}]\label{Spoly}
Let $S$ be a nondecreasing list of integers. The problem $S$-COL is polynomial-time solvable for graphs of bounded treewidth.
\end{theo}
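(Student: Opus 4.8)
Writing $k=|S|$, the plan is to express $S$-colorability as a sentence in monadic second-order logic (MSO) of graphs and then invoke Courcelle's theorem, which states that any fixed MSO-definable graph property can be decided in time $f(w)\cdot|V(G)|$ on graphs of treewidth at most $w$. Since $S$ is fixed, the treewidth bound $w$ is a constant for the class under consideration, so the resulting running time is linear, and in particular polynomial, in the size of $G$. The only thing to verify is that the defining property of $S$-COL can indeed be written in MSO, after which the theorem applies as a black box.

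First I would write the sentence. We quantify existentially over $k$ vertex sets $X_1,\ldots,X_k$ (the prospective color classes) and require that they partition $V(G)$ and that each $X_i$ is an $s_i$-packing:
\[
\exists X_1\cdots\exists X_k\ \Bigl(\mathrm{Part}(X_1,\ldots,X_k)\ \wedge\ \bigwedge_{i=1}^{k}\mathrm{Pack}_{s_i}(X_i)\Bigr).
\]
The partition predicate $\mathrm{Part}$ asserts that every vertex lies in some $X_i$ and in no two, which is first-order once the sets are fixed. The packing predicate is
\[
\mathrm{Pack}_{s_i}(X_i)\ \equiv\ \forall u\,\forall v\ \bigl(u\in X_i\wedge v\in X_i\wedge u\neq v\ \rightarrow\ \neg\,\mathrm{dist}_{\le s_i}(u,v)\bigr),
\]
so the whole sentence reduces to expressing the bounded-distance relation $\mathrm{dist}_{\le s_i}(u,v)$, meaning $d_G(u,v)\le s_i$.

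The key step, and the only place where the hypothesis that $S$ is fixed is used, is that for each constant $\ell=s_i$ the relation $d_G(u,v)\le\ell$ is first-order definable:
\[
\mathrm{dist}_{\le\ell}(u,v)\ \equiv\ \exists x_0\cdots\exists x_\ell\ \Bigl(x_0=u\ \wedge\ x_\ell=v\ \wedge\ \bigwedge_{j=0}^{\ell-1}\bigl(x_j=x_{j+1}\vee \mathrm{adj}(x_j,x_{j+1})\bigr)\Bigr),
\]
where allowing $x_j=x_{j+1}$ lets a walk of length exactly $\ell$ witness every distance at most $\ell$. Because $k$ and all the $s_i$ are constants, this formula, and hence the full sentence above, has constant size and is a legitimate MSO sentence; Courcelle's theorem then yields the claimed bound. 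I expect no real obstacle here: the entire content is the constant-size first-order encoding of bounded distance, and the relevant distances are bounded because $s_k=\max_i s_i$ is fixed.

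As an alternative that avoids the large constants hidden in Courcelle's theorem, one can instead give an explicit dynamic program over a nice tree decomposition, which is perhaps closer to what one wants in practice. There the state at a bag $B$ records the color assigned to each vertex of $B$ together with, for every vertex $x\in B$ and every color $i$, the distance from $x$ to the nearest already-processed $i$-colored vertex, capped at $s_i$ (values in $\{0,\ldots,s_i,\infty\}$, with $\infty$ meaning greater than $s_i$). This bounds the number of states by a function of $w$ and $s_k$ only. The main difficulty in this route is correctly maintaining the capped distance profiles across \emph{join} nodes and verifying the packing constraint for pairs of forgotten vertices coming from different branches; this is handled using that the bag separates the processed part from the rest of the graph, so that any path of length at most $s_k$ between the two sides must pass through $B$. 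Either route gives a linear-time algorithm, establishing the theorem.
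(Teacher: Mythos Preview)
Your proposal is correct and follows essentially the same approach that the paper attributes to the original source: the paper does not prove this theorem itself but cites Fiala and Golovach and remarks that their proof ``uses the result of Courcelle about properties definable in monadic second-order logic.'' Your MSO encoding of $S$-colorability (set quantification over color classes plus a first-order bounded-distance predicate, with all sizes fixed because $S$ is fixed) is exactly the intended argument; the dynamic-programming alternative you sketch is a reasonable bonus but not needed here.
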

This result has been generalized to the decision problem associated with the $S$-$k$-coloring problem (with a graph and $k$ as instances), for a sequence $S$ which contains only integers smaller than a constant \cite{FI2010}.
The theorem from Fiala \emph{et al.} \cite{FI2010} uses the result of Courcelle \cite{CO1990} about properties definable in monadic second-order logic.

Let $Y_n$ denotes the graph obtained from taking three copies of $P_{n+1}$ and identifying one end.
We start with the following observation:
\begin{obs}\label{ynpascol}
Let $G$ be a graph and $n$ be a positive integer. If $G$ contains $Y_n$ as minor, then $G$ contains the graph $Y_n$ as subgraph.
\end{obs}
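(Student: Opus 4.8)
The plan is to deduce the statement from the special structure of $Y_n$: it is a tree whose unique vertex of degree greater than two is the center, which has degree exactly three, while every other vertex has degree at most two. In particular $\Delta(Y_n)=3$. I would therefore invoke the classical fact that a graph $H$ with $\Delta(H)\le 3$ occurs as a minor of $G$ if and only if it occurs as a topological minor of $G$ (that is, a subdivision of $H$ is a subgraph of $G$); the point of that fact is that when every branch vertex has degree at most three, any minor model can be pruned to paths joining three-way branch points, so no genuine contraction at a high-degree vertex is ever forced. Applied to $H=Y_n$, a $Y_n$-minor in $G$ yields a subdivision $Y'_n$ of $Y_n$ as a subgraph of $G$.

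It then remains to pass from the subdivision $Y'_n$ back to a genuine copy of $Y_n$. Since $Y_n$ is a tree, so is any subdivision of it, and $Y'_n$ is again a spider: it has one vertex $c$ of degree three, and three legs attached to $c$ that are pairwise internally disjoint, each of which is obtained by subdividing a path with $n$ edges and hence has length at least $n$. Keeping $c$ together with the first $n$ edges of each of the three legs produces three internally disjoint paths of length exactly $n$ meeting only at $c$, which is by definition a copy of $Y_n$; this copy is a subgraph of $Y'_n$, and hence of $G$. This establishes the observation.

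The only real content is the reduction from minor to topological minor, so if a self-contained argument is preferred I would instead argue directly with a minor model $\{B_x\}_{x\in V(Y_n)}$ minimizing $\sum_x |B_x|$. Minimality forces each branch set to induce a tree whose leaves are exactly the vertices carrying an edge to an adjacent branch set: any leaf of $B_x$ not serving as such an attachment point could be deleted, contradicting minimality. Since a leaf of $Y_n$ has degree one and a non-central vertex has degree two, their branch sets collapse to single vertices or paths whose endpoints are the attachment points. Taking one spanning tree per branch set together with one connecting edge per edge of $Y_n$ then gives a tree $T'$ with $\sum_x |B_x|-1$ edges that contracts to $Y_n$, to which the truncation argument of the previous paragraph applies. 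The main obstacle to watch is precisely the verification that this minimality yields paths with no spurious branching away from the center, i.e. that $T'$ has no vertex of degree greater than three and exactly one vertex of degree three; this is where the degree bound $\Delta(Y_n)=3$ is essential, since only the center contributes three attachment edges, while each attachment point of a degree-one or degree-two branch set carries exactly one connecting edge and so has degree at most two in $T'$.
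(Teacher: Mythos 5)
The paper offers no proof of this observation at all --- it is stated as a bare claim and used immediately in the following proposition --- so there is no ``paper's approach'' to compare against; your argument supplies the missing justification. Your proof is correct. The route through the classical fact that a graph $H$ with $\Delta(H)\le 3$ is a minor of $G$ if and only if it is a topological minor of $G$, followed by truncating the three legs of the resulting subdivision of $Y_n$ to length exactly $n$, is exactly the right way to make the claim rigorous, and the truncation step is where the specific shape of $Y_n$ (a spider, so a subdivision of it still contains the unsubdivided spider) is genuinely used --- note that the analogous statement is false for general trees $H$ of maximum degree $3$, since a subdivision of $H$ need not contain $H$ as a subgraph. One small imprecision in your self-contained variant: minimizing $\sum_x|B_x|$ does not force each $G[B_x]$ to \emph{induce} a tree (the induced subgraph may have chords); the correct statement is that a spanning tree of $G[B_x]$ chosen minimally has all its leaves among the attachment vertices, of which there are at most $\deg_{Y_n}(x)\le 3$, so that tree is a path or a subdivided claw. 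Since you do pass to spanning trees in the next sentence, this does not affect the validity of the argument.
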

\begin{prop}
If there exist a positive integer $n$ such that $Y_{n}$ is not $S$-colorable, then $S$-COL is polynomial-time solvable.
\end{prop}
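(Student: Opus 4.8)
The plan is to dispatch the input graph into one of two exhaustive cases according to whether the fixed tree $Y_n$ occurs in it, and to solve each case in polynomial time. The conceptual starting point is that $S$-colorability is \emph{monotone under taking subgraphs}: if $H$ is a subgraph of $G$, then $d_H(u,v)\ge d_G(u,v)$ for all $u,v$ (deleting edges and vertices never shortens a path, and disconnecting a pair only sends the distance to infinity), so any $S$-coloring of $G$ restricts to a valid $S$-coloring of $H$ — each constraint ``same color $s_i$ forces distance $>s_i$'' can only become easier to meet once distances grow. Taking the contrapositive: if some subgraph of $G$ is not $S$-colorable, then $G$ is not $S$-colorable.

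First I would fix a positive integer $n$ with $Y_n$ not $S$-colorable, which exists by hypothesis and is a constant since $S$ is fixed; note $|V(Y_n)|=3(n+1)-2=3n+1$. Given an input graph $G$, the algorithm's first step decides whether $G$ contains $Y_n$ as a minor. Because $Y_n$ is a single fixed graph, this can be tested in polynomial time; by Observation~\ref{ynpascol} it is in fact equivalent to testing for $Y_n$ as a subgraph, which one can do by a brute-force search over the $O(|V(G)|^{3n+1})$ candidate vertex sets (polynomial for fixed $n$).

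If $G$ contains $Y_n$ as a minor, then by Observation~\ref{ynpascol} it contains $Y_n$ as a subgraph; combining this with the subgraph-monotonicity above and the assumption that $Y_n$ is not $S$-colorable, we conclude that $G$ is not $S$-colorable, and the algorithm answers ``no.'' If instead $G$ does not contain $Y_n$ as a minor, then since $Y_n$ is a forest, Theorem~\ref{treewborne} bounds the pathwidth of $G$ by $|V(Y_n)|-2=3n-1$, hence its treewidth by the same constant; Theorem~\ref{Spoly} then decides $S$-colorability of $G$ in polynomial time. Since one of the two branches always applies and both run in polynomial time, $S$-COL is polynomial-time solvable.

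The step that most needs care — and the place where the argument could go wrong if handled carelessly — is the subgraph-monotonicity claim together with its interplay with the minor hypothesis. Minor containment by itself does \emph{not} yield non-colorability, because contracting edges \emph{decreases} distances and so need not preserve packing constraints; it is precisely Observation~\ref{ynpascol} that upgrades the minor to a genuine subgraph, after which monotonicity applies. I would therefore verify the distance inequality $d_H(u,v)\ge d_G(u,v)$ explicitly, taking particular note of pairs that become disconnected in the subgraph (infinite distance, constraint vacuously satisfied). The remaining points — confirming $|V(Y_n)|=3n+1$ so the pathwidth bound is a true constant, and that fixed-pattern subgraph detection is polynomial — are routine.
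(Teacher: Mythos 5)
Your proof is correct and follows essentially the same route as the paper: use Observation \ref{ynpascol} to upgrade a $Y_n$ minor to a subgraph, conclude via subgraph-monotonicity that $S$-colorable graphs exclude $Y_n$ as a minor, apply Theorem \ref{treewborne} to bound pathwidth (hence treewidth) by $3n-1$, and finish with Theorem \ref{Spoly}. You merely make explicit the algorithmic dispatch and the monotonicity argument that the paper leaves implicit.
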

\begin{proof}
 We can note that, by Observation \ref{ynpascol}, every graph containing the graph $Y_{n}$ as minor is not $S$-colorable. Using Theorem \ref{treewborne}, every $S$-colorable graph has treewidth bounded by $3n-1$. Using Theorem \ref{Spoly}, $S$-COL is polynomial-time solvable.
\end{proof}

We recall the following theorem from Sharp \cite{ASH2012} about the complexity of the distance coloring in $k$ colors ($S_{d^k}$ is the list composed of $k$ integers $d$).
\begin{theo}[\cite{ASH2012}]\label{sharpp}
The problem $S_{d^k}$-COL is NP-complete for $k\ge \lceil 3d/2 \rceil$ and polynomial-time solvable in the other cases.
\end{theo}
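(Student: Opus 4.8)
The problem is in NP (as already noted for every nondecreasing list $S$), so it suffices to establish NP-hardness when $k\ge\lceil 3d/2\rceil$ and a polynomial-time algorithm when $k<\lceil 3d/2\rceil$. In both directions the governing quantity is the least number of colours needed to $S_{d^k}$-colour the spider $Y_n$ (three paths of length $n$ glued at one end): an $S_{d^k}$-colouring must assign distinct colours to any two vertices at distance at most $d$, and it is the three-way junction of $Y_n$ that pushes this number up to roughly $\tfrac32 d$. The plan is to convert the ``cheap'' side of this count into a treewidth bound via the machinery already developed, and the ``expensive'' side into a reduction.

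For the polynomial direction ($k<\lceil 3d/2\rceil$), I would first exhibit a large set of pairwise close vertices inside $Y_n$ for $n\ge\lceil d/2\rceil$. Take the centre together with the $\lfloor d/2\rfloor$ vertices nearest the centre on two of the legs and the $\lceil d/2\rceil$ nearest on the third leg. Two vertices on a common leg are within distance $\lceil d/2\rceil-1$; the centre is within distance $\lceil d/2\rceil$ of all of them; and two vertices on different legs are at distance equal to the sum of their distances to the centre, which is at most $\lfloor d/2\rfloor+\lceil d/2\rceil=d$. Hence these $1+2\lfloor d/2\rfloor+\lceil d/2\rceil\ge\lceil 3d/2\rceil$ vertices are pairwise at distance at most $d$ and must receive pairwise distinct colours, so no $S_{d^k}$-colouring of $Y_n$ exists once $k<\lceil 3d/2\rceil$. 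I then invoke the Proposition above (if some $Y_n$ is not $S$-colourable, then $S$-COL is polynomial-time solvable): by Observation~\ref{ynpascol} every $S_{d^k}$-colourable graph excludes $Y_n$ as a minor, so by Theorem~\ref{treewborne} it has pathwidth at most $3n-1$, hence bounded treewidth, and Theorem~\ref{Spoly} concludes the argument.

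For the NP-hardness direction ($k\ge\lceil 3d/2\rceil$), I would reduce from an NP-complete colouring or satisfiability problem along the lines of Lemma~\ref{reffinal}: build, in the distance-$d$ setting, a \emph{forcing} gadget whose every valid $S_{d^k}$-colouring confines a designated port vertex to a fixed set of admissible colours, and a \emph{connector} gadget that, placed between two ports, forces (or forbids) equality of their colours. The spider is the natural forcing gadget, precisely because the quantity controlling the colours near its junction is the threshold $\lceil 3d/2\rceil$ (up to the parity shift discussed below), so that at $k=\lceil 3d/2\rceil$ a port is left with exactly the freedom needed to encode one variable or one colour choice. Stringing such gadgets along the edges of a hard instance then yields a graph that is $S_{d^k}$-colourable if and only if the original instance is satisfiable.

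The main obstacle is matching the reduction to the \emph{exact} threshold $\lceil 3d/2\rceil$ for every $k$ simultaneously, rather than for one convenient value. Two difficulties must be controlled. First, the count $1+2\lfloor d/2\rfloor+\lceil d/2\rceil$ behaves differently in the two parities, equalling $\lceil 3d/2\rceil$ for odd $d$ but $\lceil 3d/2\rceil+1$ for even $d$; consequently the analysis of which colours a port can and cannot take is parity-dependent and must be carried out separately for even and odd $d$ (this is also where the boundary case of the dichotomy is most delicate). Second, when $k>\lceil 3d/2\rceil$ the surplus colours grant extra freedom that would break a naive forcing gadget, so the construction must absorb these colours with additional padding (for instance pendant cliques or paths that consume the extra colours locally) in such a way that the effective number of choices at each port stays fixed as $k$ grows. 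Verifying that the gadgets remain correct under this padding, uniformly in $k$, is where the bulk of the work lies.
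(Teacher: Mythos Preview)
The paper does not prove this theorem; it is quoted verbatim from Sharp~\cite{ASH2012} as an external result, so there is no proof here to compare your attempt against directly.

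That said, your polynomial-time direction is exactly the mechanism the paper itself packages in the surrounding Proposition (``if some $Y_n$ is not $S$-colourable then $S$-COL is polynomial-time solvable''): the pairwise-close set you exhibit in $Y_{\lceil d/2\rceil}$ shows it cannot be $S_{d}^{k}$-coloured for $k<\lceil 3d/2\rceil$, and then Observation~\ref{ynpascol}, Theorem~\ref{treewborne} and Theorem~\ref{Spoly} finish as you describe. This part is fine. (Your parity remark is also to the point: for even $d$ your count gives $\tfrac{3d}{2}+1$ pairwise-close vertices, which is worth keeping in mind when matching the exact threshold.)

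Your NP-hardness direction, however, is only a plan and not a proof. You correctly identify the two genuine difficulties---the even/odd split of $d$, and absorbing the $k-\lceil 3d/2\rceil$ surplus colours so that the port vertices retain exactly the intended freedom---but you do not resolve either; the spider by itself gives no control once $k$ exceeds the threshold. For comparison, the paper does supply a complete hardness argument for a generalisation in Theorem~\ref{reducii}, and it proceeds quite differently from your sketch: it reduces from ordinary $N_i(S)$-colouring (not NAE-3SAT via Lemma~\ref{reffinal}), and builds explicit gadgets $L_{i,S}$, $R_{i,S}$, $J_{i,S}$ whose shapes depend on the whole list $S$. The surplus colours are absorbed not by pendant padding but by periodic ``insertions'' along a long path, calibrated so that every colour other than $i$ is forced to occur at prescribed ranks and hence cannot reach the connectors; the parity of $i$ is handled by two separate versions of $R_{i,S}$ (Figure~\ref{Ri}). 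If you want to upgrade your outline to an actual proof, that construction---rather than the transmitter/antitransmitter/clause-simulator template---is the one to imitate.
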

\begin{prop}\label{dichoa}
Let $S$ be a nondecreasing list of integers and let $i>1$ be an integer. If $s_1=i$, then $S$-COL is polynomial-time solvable for $|S|<\lceil 3i/2 \rceil$.
\end{prop}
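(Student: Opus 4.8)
The plan is to produce, for a suitable $n$, a concrete spider $Y_n$ that is not $S$-colorable, and then invoke the preceding proposition (if some $Y_n$ is not $S$-colorable, then $S$-COL is polynomial-time solvable). The starting observation is that, since $s_1=i$ and $S$ is nondecreasing, every entry satisfies $s_j\ge i$; hence in any $S$-coloring each colour class $S_j$ is an $s_j$-packing and a fortiori an $i$-packing, so two vertices lying at distance at most $i$ must receive distinct colours. Thus it suffices to exhibit, inside some $Y_n$, a set of strictly more than $|S|$ vertices that are pairwise at distance at most $i$: such a set cannot be coloured with only $|S|$ colours.

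First I would build such a set in $Y_n$ with $n=\lceil i/2\rceil$. Writing $c$ for the centre and measuring positions by the distance from $c$ along each of the three legs, I take $c$ together with the vertices at positions $1,\ldots,\lceil i/2\rceil$ on the first leg and the vertices at positions $1,\ldots,\lfloor i/2\rfloor$ on each of the other two legs. In a spider the two relevant pairwise distances are $|a-b|$ for positions $a,b$ on the same leg and $a+b$ for positions on different legs, and a short check shows every pair in this set is at distance at most $i$: same-leg pairs differ by at most $\lceil i/2\rceil-1<i$; a first-leg vertex and one on another leg are at distance at most $\lceil i/2\rceil+\lfloor i/2\rfloor=i$; two vertices on the second and third legs are at distance at most $2\lfloor i/2\rfloor\le i$; and the centre is within $i$ of all of them. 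This set has $1+\lceil i/2\rceil+2\lfloor i/2\rfloor=1+\lfloor 3i/2\rfloor$ vertices, using $\lceil i/2\rceil+\lfloor i/2\rfloor=i$.

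It remains to compare this size with $|S|$. Since $\lfloor x\rfloor+1\ge\lceil x\rceil$ for every real $x$, taking $x=3i/2$ gives $1+\lfloor 3i/2\rfloor\ge\lceil 3i/2\rceil$, so the constructed set has at least $\lceil 3i/2\rceil$ vertices. The hypothesis $|S|<\lceil 3i/2\rceil$ then shows it has strictly more than $|S|$ vertices, all pairwise at distance at most $i$, hence pairwise differently coloured in any $S$-coloring. This is impossible with $|S|$ colours, so $Y_n$ is not $S$-colorable, and applying the preceding proposition finishes the argument.

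The only delicate point I expect is the bookkeeping tying the clique size to the threshold: one must verify the parity arithmetic that $|S|<\lceil 3i/2\rceil$ forces $|S|$ to lie strictly below $1+\lfloor 3i/2\rfloor$ (the cases $i$ even and $i$ odd behaving slightly differently), and that the balanced choice of leg prefixes indeed realises the largest set of pairwise-close vertices; everything else is routine. It is worth noting that this clique computation is precisely the obstruction underlying the polynomial case of Theorem \ref{sharpp} for the list $S_i^{|S|}$: an $S$-coloring of $Y_n$ would restrict to an $S_i^{|S|}$-coloring, and making the obstruction explicit is exactly what lets us feed a concrete non-colorable $Y_n$ into the preceding proposition rather than merely citing the statement of that theorem.
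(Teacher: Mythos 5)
Your proof is correct, but it takes a different route from the paper. The paper's own proof is a two-line reduction: since $s_1=i$ and $S$ is nondecreasing, every $S$-colorable graph is $S_{i^{|S|}}$-colorable, and it then cites Sharp's result (Theorem \ref{sharpp} and the accompanying treewidth bound from \cite{ASH2012}) that $S_{i^k}$-colorable graphs have bounded treewidth when $k<\lceil 3i/2\rceil$, before concluding with Theorem \ref{Spoly}. You instead make the obstruction explicit: you exhibit in $Y_{\lceil i/2\rceil}$ a set of $1+\lfloor 3i/2\rfloor>|S|$ vertices that are pairwise at distance at most $i$, observe that all colour classes are $i$-packings so these vertices need pairwise distinct colours, conclude that this $Y_n$ is not $S$-colorable, and then invoke the paper's preceding proposition (non-$S$-colorability of some $Y_n$ implies polynomial-time solvability, via the excluded-forest pathwidth bound of Theorem \ref{treewborne} and Theorem \ref{Spoly}). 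Your distance and cardinality computations check out, including the parity bookkeeping showing $|S|\le\lceil 3i/2\rceil-1\le\lfloor 3i/2\rfloor$ in both parities of $i$. What your version buys is self-containment: it stays entirely within machinery the paper has already set up and does not need the external treewidth statement from \cite{ASH2012}; what the paper's version buys is brevity and a direct link to the known dichotomy for distance colouring. Both are valid proofs of the proposition.
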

\begin{proof}
Using Theorem \ref{sharpp}, the problem $S_{d^k}$-COL is polynomial-time solvable for $k<\lceil 3d/2 \rceil$.
Every $S_{i^{|S|}}$-colorable graphs has bounded treewidth \cite{ASH2012} and every $S$-colorable graph is also $S_{i^{|S|}}$-colorable. Hence, $S$-colorable graphs have bounded treewidth. Using Theorem \ref{Spoly}, $S$-COL is polynomial-time solvable.
\end{proof}
\subsection{NP-complete instances of $S$-COL for $|S|\ge4$}

The following proposition is a generalization of the Theorem from Sharp \cite{ASH2012}.
\begin{theo}
The problem $S$-COL is NP-complete for every list $S$ such that $s_1=s_{i+1}=i$, with $i>1$ and $|S|\ge \lceil 3i/2 \rceil$.
\label{reducii}
\end{theo}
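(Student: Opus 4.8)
The plan is to reduce from NAE 3SAT by invoking the machinery of Lemma~\ref{reffinal}. Since $s_1=s_2=i$, I take the two distinguished colours of that lemma to be the two classes of colour $i$, which following the paper's convention I write $i_a$ and $i_b$. It then suffices to exhibit three gadgets for these colours: an $i$-$i$-transmitter $L$, an $i$-$i$-antitransmitter $J$ and an $i$-$i$-clause simulator $H$. The entire weight of the hypotheses $s_{i+1}=i$ and $|S|\ge\lceil 3i/2\rceil$ enters only in building these gadgets, and the key device is \emph{blocking}: to forbid a chosen vertex $x$ from receiving a colour $c\in S$ with $c\neq i_a,i_b$, I place inside the gadget a vertex coloured $c$ at distance at most $c$ from $x$. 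Because every such $c$ is an $s_j$-packing colour with $s_j\ge i$, a single nearby representative rules that colour out whatever its (possibly very large) value, so the $k-2$ colours other than $i_a,i_b$ can all be switched off at any vertex we like. The active behaviour of each gadget is then governed purely by the Boolean alternative $i_a$ versus $i_b$.

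The transmitter and antitransmitter are built from distance-$i$ inequality links. Two vertices joined by a path of length exactly $i$ lie at distance $i$ and so must receive different colours; if both are blocked to $\{i_a,i_b\}$ they are forced onto the two distinct active colours, which is exactly the antitransmitter $J$ (connectors $\ell'_1,\ell'_2$ at the ends, the $i-1$ interior vertices carrying $i-1$ of the remaining colours, available since $\lceil 3i/2\rceil-2\ge i-1$, and extra pendant blockers supplying representatives of the surplus colours at each connector). Chaining two such links through a common middle vertex $w$, all three blocked to $\{i_a,i_b\}$, forces $\ell'_1\neq w$ and $\ell'_2\neq w$ and hence $\ell'_1=\ell'_2$, giving the transmitter $L$. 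In both cases I colour the interior and the blockers so that the separation condition of the definition holds, namely every vertex carrying a colour $c$ different from a connector's colour sits at distance more than $\lfloor c/2\rfloor$ from that connector; this is precisely what allows Lemma~\ref{reffinal} to identify connectors of many copies without two representatives of the same large colour ever coming within distance $s_j$ of one another.

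The clause simulator $H$ is the heart of the construction and generalises the gadget of Figure~\ref{gH}. I build a graph on three connectors $\ell_1,\ell_2,\ell_3$, all blocked to $\{i_a,i_b\}$, together with a small central configuration of vertices that are pairwise at distance at most $i$ and also blocked to $\{i_a,i_b\}$, arranged so that if $\ell_1,\ell_2,\ell_3$ all take the same active colour then the central vertices are simultaneously driven onto that one colour and a conflict appears (some central vertex is left with neither $i_a$ nor $i_b$ available), whereas every not-all-equal assignment of $\{i_a,i_b\}$ to the connectors extends to a proper colouring of $H$. This is exactly where the value $\lceil 3i/2\rceil$ is forced upon us: one must fit the central distance-$i$ configuration, its blockers, and the three connector paths into the palette so tightly that the all-equal configuration becomes impossible, and the counting that makes this feasible is the same counting behind Sharp's threshold (matching Theorem~\ref{sharpp}). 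I expect to handle $i$ even and $i$ odd separately, since $\lceil 3i/2\rceil$ and the parities of the required path lengths differ in the two regimes.

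Finally I verify the three gadget properties in full, that each gadget is $S$-colourable, that its connectors are forced as claimed, and that it admits, for every admissible configuration of its connectors, a colouring meeting the separation condition, and I check that when Lemma~\ref{reffinal} assembles the copies every colour class remains a genuine $s_j$-packing globally, which the separation condition guarantees. As $S$-COL lies in NP, Lemma~\ref{reffinal} then delivers NP-completeness. The main obstacle is the clause simulator together with the tightness of the blocking at the threshold $|S|=\lceil 3i/2\rceil$: I must show that with no colour to spare the all-equal configuration is genuinely uncolourable while every other configuration survives, and that this argument is uniform in the possibly very large values $s_{i+2},\dots,s_k$, which throughout enter solely as blockers.
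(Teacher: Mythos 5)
Your plan has a genuine conceptual gap that the paper avoids by choosing a different source problem. The hypothesis $s_1=s_{i+1}=i$ forces $s_1=\dots=s_{i+1}=i$, so the list contains $N_i(S)\ge i+1\ge 3$ mutually interchangeable copies of the colour $i$. Your whole construction rests on restricting each connector to exactly two of these classes, $i_a$ and $i_b$; but no gadget can do this, because the $i$-classes are symmetric: any $S$-colouring in which a connector receives $i_a$ yields, by permuting the colour classes of value $i$, an equally valid colouring in which it receives $i_c$. Hence ``the connectors take colours among $\{i_a,i_b\}$'' is not a forceable property, and with three or more admissible classes the Boolean semantics of the NAE~3SAT reduction in Lemma~\ref{reffinal} collapses (an antitransmitter can force ``different colours $i$'' but no longer ``the unique other colour'', so variable consistency and clause satisfaction lose their truth-value meaning). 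The paper sidesteps this by reducing from $N_i(S)$-COL, i.e.\ proper colouring with $N_i(S)$ symmetric colours: its gadgets only ever enforce the symmetric relations ``same colour $i$'' ($L_{i,S}$, $R_{i,S}$, $G_d$) and ``different colours $i$'' ($J_{i,S}$), which is exactly what proper colouring needs.

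The second gap is your blocking mechanism. To forbid a colour $c\neq i$ at a vertex $x$ you must exhibit, in \emph{every} $S$-colouring of the gadget, some vertex coloured $c$ within distance $c$ of $x$; attaching a pendant ``blocker'' does not achieve this, since nothing forces a pendant vertex to take the prescribed large colour rather than $1$ of the many colours $i$. Forcing every colour $s_k\neq i$ to appear periodically (with period $s_k+1$) along the gadget in every colouring is the actual combinatorial core of the proof: the paper's Fact~2 obtains it by a counting argument on a long path-like gadget $L_{i,S}$ of length $2\prod_{j\in A(S)}(j+1)$ with carefully sized vertex insertions at prescribed ranks, so that the vertices of any window of length $i+1$ already exhaust the palette. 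Your proposal presupposes this forcing rather than proving it, and it is also where the hypothesis $|S|\ge\lceil 3i/2\rceil$ really enters (it is what allows the three-connector gadget $R_{i,S}$ to have exactly $|S|+2$ vertices, diameter $i+1$, and only the three connectors pairwise at distance greater than $i$). As written, the proposal would need both a replacement for the binary framework and a proof of the blocking claims before it could be completed.
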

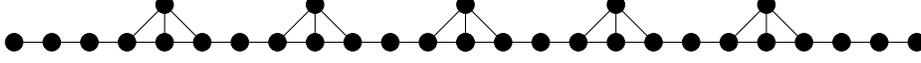
\begin{figure}[t]
\begin{center}
\begin{tikzpicture}
\draw (-12*0.5,0) -- (12*0.5,0);
\draw (0,0) -- (0,0.5);
\draw (-0.5,0) -- (0,0.5);
\draw (0.5,0) -- (0,0.5);
\draw (2,0) -- (2,0.5);
\draw (1.5,0) -- (2,0.5);
\draw (2.5,0) -- (2,0.5);
\draw (4,0) -- (4,0.5);
\draw (3.5,0) -- (4,0.5);
\draw (4.5,0) -- (4,0.5);
\draw (-2,0) -- (-2,0.5);
\draw (-1.5,0) -- (-2,0.5);
\draw (-2.5,0) -- (-2,0.5);
\draw (-4,0) -- (-4,0.5);
\draw (-3.5,0) -- (-4,0.5);
\draw (-4.5,0) -- (-4,0.5);
\foreach \x in {0,...,12}
{
\node at (\x * 0.5 ,0) [circle,draw=black,fill=black,scale=0.7] {};
\node at (-\x * 0.5 ,0) [circle,draw=black,fill=black,scale=0.7] {};
}
\node at (0,0.5) [circle,draw=black,fill=black,scale=0.7] {};
\node at (2,0.5) [circle,draw=black,fill=black,scale=0.7] {};
\node at (4,0.5) [circle,draw=black,fill=black,scale=0.7] {};
\node at (-2,0.5) [circle,draw=black,fill=black,scale=0.7] {};
\node at (-4,0.5) [circle,draw=black,fill=black,scale=0.7] {};
\end{tikzpicture}
\end{center}
\caption{The graph $L_{2,(2,2,2,3)}$.}
\label{gadgetgrossereduc}
\end{figure}
\begin{figure}[t]
\begin{center}
\begin{tikzpicture}
\draw (0,0) -- (0,0.4);
\draw[dotted] (0,0.4) -- (0,0.8);
\draw (0,0.8) -- (0,1.2);
\draw (0,0) -- (-0.83,-0.83);
\draw[dotted] (-0.83,-0.83) -- (-1.16,-1.16);
\draw (-1.16,-1.16) -- (-1.5,-1.5);
\draw (0,0) -- (0.83,-0.83);
\draw[dotted] (0.83,-0.83) -- (1.16,-1.16);
\draw (1.16,-1.16) -- (1.5,-1.5);
\draw (0.5,-0.5) -- (-0.5,-0.5);
\draw[dotted] (0,0) -- (-0.33,-0.83);
\draw[dotted] (0,0) -- (0,-0.83);
\draw[dotted] (0,0) -- (0.33,-0.83);
\draw[dotted] (0.5,-0.5) -- (-0.33,-0.83);
\draw[dotted] (0.5,-0.5) -- (0,-0.83);
\draw[dotted] (0.5,-0.5) -- (0.33,-0.83);
\draw[dotted] (-0.5,-0.5) -- (-0.33,-0.83);
\draw[dotted] (-0.5,-0.5) -- (0,-0.83);
\draw[dotted] (-0.5,-0.5) -- (0.33,-0.83);
\node at (0,0) [circle,draw=black,fill=black, scale=0.5] {};
\node at (0,0.4) [circle,draw=black,fill=black, scale=0.5] {};
\node at (0,0.8) [circle,draw=black,fill=black, scale=0.5] {};
\node at (0,1.2) [circle,draw=black,fill=black, scale=0.5] {};
\node at (-0.5,-0.5) [circle,draw=black,fill=black, scale=0.5] {};
\node at (-0.83,-0.83) [circle,draw=black,fill=black, scale=0.5] {};
\node at (-1.16,-1.16) [circle,draw=black,fill=black, scale=0.5] {};
\node at (-1.5,-1.5) [circle,draw=black,fill=black, scale=0.5] {};
\node at (0.5,-0.5) [circle,draw=black,fill=black, scale=0.5] {};
\node at (0.83,-0.83) [circle,draw=black,fill=black, scale=0.5] {};
\node at (1.16,-1.16) [circle,draw=black,fill=black, scale=0.5] {};
\node at (1.5,-1.5) [circle,draw=black,fill=black, scale=0.5] {};
\node at (-0.33,-0.83) [circle,draw=black,fill=black, scale=0.5] {};
\node at (0,-0.83) [circle,draw=black,fill=black, scale=0.5] {};
\node at (0.33,-0.83) [circle,draw=black,fill=black, scale=0.5] {};
\node at (0,-1.2){$\underbrace{... }$};
\node at (0,-1.5){$|S|-\lceil 3i/2 \rceil$};
\node at (-0.4,0.6)[rotate=90]{$\overbrace{... }$};
\node at (-1.35,-0.75)[rotate=45]{$\overbrace{... }$};
\node at (1.35,-0.75)[rotate=-45]{$\overbrace{... }$};
\node at (-1.2,0.6){$i/2-1$};
\node at (-2.1,-0.6){$i/2-1$};
\node at (2.1,-0.6){$i/2-1$};

\draw (0+6,0) -- (0+6,0.4);
\draw[dotted] (0+6,0.4) -- (0+6,0.8);
\draw (0+6,0.8) -- (0+6,1.2);
\draw (0+6,0) -- (-0.33+6,-0.33);
\draw[dotted] (-0.66+6,-0.66) -- (-0.33+6,-0.33);
\draw (-0.66+6,-0.66) -- (-1+6,-1);
\draw (0+6,0) -- (0.33+6,-0.33);
\draw[dotted] (0.66+6,-0.66) -- (0.33+6,-0.33);
\draw (0.66+6,-0.66) -- (1+6,-1);
\draw[dotted] (0+6,0) -- (-0.33+6,-0.83);
\draw[dotted] (0+6,0) -- (0+6,-0.83);
\draw[dotted] (0+6,0) -- (0.33+6,-0.83);
\node at (-0.5,1.2){$c_1$};
\node at (-2,-1.5){$c_2$};
\node at (2,-1.5){$c_3$};
\node at (0+6,0) [circle,draw=black,fill=black, scale=0.5] {};
\node at (0+6,0.4) [circle,draw=black,fill=black, scale=0.5] {};
\node at (0+6,0.8) [circle,draw=black,fill=black, scale=0.5] {};
\node at (0+6,1.2) [circle,draw=black,fill=black, scale=0.5] {};
\node at (-1+6,-1) [circle,draw=black,fill=black, scale=0.5] {};
\node at (-0.66+6,-0.66) [circle,draw=black,fill=black, scale=0.5] {};
\node at (-0.33+6,-0.33) [circle,draw=black,fill=black, scale=0.5] {};
\node at (0.66+6,-0.66) [circle,draw=black,fill=black, scale=0.5] {};
\node at (0.33+6,-0.33) [circle,draw=black,fill=black, scale=0.5] {};
\node at (1+6,-1) [circle,draw=black,fill=black, scale=0.5] {};
\node at (-0.33+6,-0.83) [circle,draw=black,fill=black, scale=0.5] {};
\node at (0+6,-0.83) [circle,draw=black,fill=black, scale=0.5] {};
\node at (0.33+6,-0.83) [circle,draw=black,fill=black, scale=0.5] {};
\node at (0+6,-1.2){$\underbrace{... }$};
\node at (0+6,-1.5){$|S|-\lceil 3i/2 \rceil$};
\node at (-0.5+6,0.6)[rotate=90]{$\overbrace{... }$};
\node at (-0.95+6,-0.25)[rotate=45]{$\overbrace{... }$};
\node at (0.95+6,-0.25)[rotate=-45]{$\overbrace{... }$};
\node at (-1.2+6,0.6){$\lfloor i/2 \rfloor$};
\node at (-1.7+6,-0.1){$\lfloor i/2 \rfloor$};
\node at (1.7+6,-0.1){$\lfloor i/2 \rfloor$};
\node at (-0.5+6,1.2){$c_1$};
\node at (-1.5+6,-1){$c_2$};
\node at (1.5+6,-1){$c_3$};
\end{tikzpicture}
\end{center}
\caption{The graphs $R_{i,S}$, for an even $i$ (on the left) and for an odd $i$ (on the right).}
\label{Ri}
\end{figure}
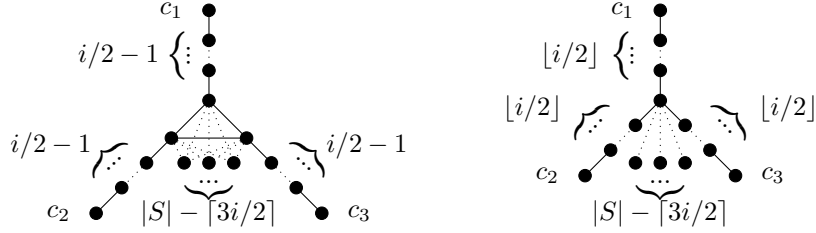
\begin{proof}
Let $N_{i}(S)$ be the number of occurrences of an integer $i$ in the list $S$ and let $n_{i}^{-}(S)=N_i(S)-i-1$.
The proof uses a reduction from $N_i(S)$-COL (as $N_i(S)\ge 3$, $N_i(S)$-COL is NP-complete).
\begin{center}
\parbox{10cm}{
\setlength{\parskip}{.05cm}
\textbf{$k$-COL}

\textbf{Input}: A graph $G$ and an integer $k$. 

\textbf{Question}: Is the graph $G$ $k$-colorable?}
\end{center}

Let $G$ be a graph and $A(S)$ be the following set of integers $\{j>0 |\ |N_j(S)|\ge 1 \}$.
We set $m=\prod_{ j\in A(S) } (j+1)$.
We construct a graph $L_{i,S}$ composed of vertices $u_{j,l}$, for integers $j$ and $l$, with $-m \le j \le m$.
Let $j$, $j'$, $l$ and $l'$ be four integers. Two vertices $u_{j,l}$ and $u_{j',l'}$ in $V(L_{i,S})$ are adjacent if and only if $|j-j'|\le 1$ (except if $u_{j,l}=u_{j',l'}$).
The rank of a vertex is $\text{rank}(u_{j,l})=j$ and corresponds to its position in $L_{i,S}$.
Let $j$ be an integer and let $\text{rank}_{max} (j)=\max\limits \{l\ |\ u_{j,l}\in V(L_{i,S})\}$.
Let $k$ be an integer, a $k$-insertion at rank $j$ consists in adding vertices $u_{j,\text{rank}_{max} (i)+1}, \ldots , u_{j,\text{rank}_{max} (i)+k}$ in the graph $L_{i,S}$.

We construct the graph $L_{i,S}$ as follows:
the graph $L_{i,S}$ always contains vertices $u_{-m,0},\ldots ,u_{m,0}$.
Depending on the list $S$, we have to do the following insertions: 
\begin{enumerate}
\item We do an $n_{i}^{-}(S)$-insertion at rank $j$, for each integer $j$ such that $j\equiv \lfloor (i+1)/ 2 \rfloor\pmod{i+1}$ and $-m \le j\le m$.
\item For every integer $k>i$, we do an $N_{k}(S)$-insertion at rank $j$, for each integer $j$ such that $j\equiv 0\pmod{k+1}$ and $-m < j< m$.
\end{enumerate}
The last operation consists in adding edges in the graph in order that the adjacency property (two vertices of rank $j$ and $j'$ are adjacent if and only if $|j-j'|\le 1$) is satisfied.
We define \emph{connectors} as specified vertices of degree 1.
The connectors of $L_{i,S}$ are the vertices $u_{-m,0}$ and $u_{m,0}$. The graph $L_{2,(2,2,2,3)}$ is illustrated in Figure \ref{gadgetgrossereduc}.

Let $R_{i,S}$ be the graph from Figure \ref{Ri}. Note that depending on the parity of $i$, the graph $R_{i,S}$ is different. The connectors of $R_{i,S}$ are the vertices $c_1$, $c_2$ and $c_3$ from Figure \ref{Ri}.
Connecting a graph $G$ to another graph $G'$ is an operation that consists in identifying a connector of degree 1 of $G$ with a connector of degree 1 of $G'$.
In order to construct the graph $J_{i,S}$, we begin to construct a graph from two copies of $L_{i,S}$ and a copy of $R_{i,S}$ where we connect $R_{i,S}$ to the two copies of $L_{i,S}$. Suppose that $u$ and $v$ are the two connectors of degree 1 of the copies of $L_{i,S}$ in the constructed graph. We finish by deleting one vertex among $u$ and $v$. If $i=2$, the following property should be ensured by the copy of $L_{i,S}$: $\text{rank}_{max}(\text{rank}(w))=1$, for $w$ the vertex adjacent to the removed vertex. The constructed graph is the graph $J_{i,S}$. The connectors of this graph are the vertices $w$ and the remaining vertex among $u$ and $v$.

Let $d$ be a positive integer.
The graph $G_{d}$ is the graph obtained from $R_{i,S}$ and $L_{i,S}$, as follows: we begin with taking $d$ copies $L_0,\ldots, L_{d-1}$ of $L_{i,S}$ and $d$ copies $R_{0},\ldots, R_{d-1}$ of $R_{i,S}$. For every integer $k$, with $0\le k \le d-1$, we connect $R_k$ to $L_k$ and $L_{k+1\pmod{d}}$.
The obtained graph is the graph $G_d$ and have $d$ connectors, these connectors are the remaining connectors of degree 1 of each copy of $R_{i,S}$.

Finally, we construct the graph $G'$ from the original graph $G$, as follows: for every vertex $v\in V(G)$, we associate a graph $G_{v}$ which is a copy of $G_{d(v)}$.
We call original vertices, the connectors of a graph $G_v$, for a vertex $v$ (even if these vertices do not have degree 1 in the constructed graph).
For every edge $uv\in E(G)$, we connect $G_v$ and $G_u$ by a graph $J_{i,S}$.

In order to prove that the following construction is a reduction from $N_i(S)$-COL to $S$-COL, we prove the following facts:

\textbf{Fact 1}: The graph $R_{i,S}$ is $S$-colorable and in every $S$-coloring of $R_{i,S}$, the connectors of $R_{i,S}$ have the same color $i$. 

\textit{Proof}: The diameter of $R_{i,S}$ is $i+1$ and there are only three vertices which are at pairwise distance at least $i+1$, these vertices are the three connectors of $R_{i,S}$.
Since $|V(R_{i,S})|=|S|+2$, the connectors must have the same color. Moreover, this color must be a color $i$.
There exists an $S$-coloring of $R_{i,S}$ consisting in giving the same color $i$ to the connectors and giving a different color to each remaining vertex.

\textbf{Fact 2}: For every $S$-coloring of an induced $L_{i,S}$ in $G'$ and for every color $s_k\neq i$, there is a vertex of rank $j$ colored by $s_k$, for every integer $j$ such that $j\equiv 0\pmod{s_k+1}$ and $-m < j< m$.

\textit{Proof}: There are $\sum_{j=i+1}^{s_{|S|}} N_j(S) +1=|S|-n^{-}_{i}(S)$ vertices of rank $0$. Suppose that at least two vertices of rank $0$ are colored by some colors $i$.
We have to use each color $i$ two times in order to color the vertices of rank $\ell$, for $-(i+1) < \ell \le (i+1)$, since there are $|S|+N_i (S)$ vertices of such rank.
It is not possible to use each color $i$ two times if two vertices of rank $0$ are colored by some color $i$.
Thus, there are vertices of rank $0$ of each color $s_k\neq i$.

By induction on $j$, for $-m \le j\le -(i+1) $ and $(i+1) < j \le m $, we can easily prove that for every color $s_k\neq i$, there is a vertex of rank $j$ colored by $s_k$, for every integer $j$ such that $j\equiv 0\pmod{s_k+1}$.

\textbf{Fact 3}: In every $S$-coloring of $L_{i,S}$ and $J_{i,S}$, the connectors have the same color $i$ in $L_{i,S}$ and two different colors $i$ in $J_{i,S}$.

\textit{Proof}: Using Fact 2, note that in every $S$-coloring of $L_{i,S}$ and $J_{i,S}$ there is a vertex of rank $j$ colored by $s_k$, $s_k\neq i$, for every integer $j$ such that $j\equiv 0\pmod{s_k+1}$ and $-m < j< m$. Consequently, there is a vertex colored by $s_k$ at distance at most $s_k$ from the connectors of $L_{i,S}$ and $J_{i,S}$.
Note also that $-m$ and $m$ have both $0$ as remainder modulo $i+1$ in $L_{i,S}$.
There is only one vertex colored by $i$ of rank $j$, for each $j$ such that $j\equiv 0 \pmod{m}$, this color $i$ is the same for each such vertex.
Hence, the two connectors of $L_{i,S}$ have the same color $i$.
The argument is similar for $J_{i,S}$ (we have to use Fact $1$ also).

\textbf{Fact 4}: There exists an $S$-coloring of $L_{i,S}$ and $J_{i,S}$.

\textit{Proof}: First, we color the vertices $u_{0,k}$, for $-m \le k\le m$, alternating $i+1$ colors $i$.
Second, for every color $s_k\neq i$, we color the remaining vertices of rank $j$ with color $s_k$, for every integer $j$ such that $j\equiv 0\pmod{s_k+1}$ and $-m < j< m$. Finally, we color the remaining vertices of rank $j$ with $n^{-}_{i}$ remaining colors $i$, for $j\equiv \lceil i/2 \rceil\pmod{i+1}$ and $-m \le j\le m$. The coloring is similar for $J_{i,S}$.

\textbf{Fact 5}: Let $d$ be a positive integer. In every $S$-coloring of $G_d$, the original vertices have the same color $i$.

\textit{Proof}: This fact is a consequence of Facts $1$ and $3$.

\textbf{Fact 6}: For every pair of colors $(i_a,i_b)$ in $N_i(S)$, with $i_a\neq i_b$, there exists an $S$-coloring of an induced $J_{i,S}$ in $G'$, with connectors of color $i_a$ and $i_b$.

\textit{Proof}: Let $L^1$ be the copy of $L_{i,S}$ in which no vertex has been removed and let $L^2$ be the second copy of $L_{i,S}$.
We begin by using a list of $i+1$ different colors $i$ to color the vertices $u_{0,k}$, for $-m \le k\le m$, in $L^1$.
We suppose that the color $i_a$ is used to color the initial connectors of the copy of $R_{i,S}$.
We give the color $i_b$ to the neighbor of the connector of the copy of $R_{i,S}$ which is a connector of $L^2$.
We claim that we can extend the coloring to the vertices of the copy of $R_{i,S}$.
Using the colorings from Fact 4, we can also extend the coloring to the whole graph (with colors $i_a$ and $i_b$ for the connectors).

Suppose that the graph $G$ is $N_i(S)$-colorable, with colors $1,\ldots, N_i(S)$. For every vertex $v$ of color $j$, for $1\le j\le N_i(S)$, we use a color $s_j$ in order to color the original vertices of $G_v$ in the graph $G'$. Using Facts $1$, $4$ et $6$, we can extend the coloring to an $S$-coloring of the whole graph. 

Suppose that $G'$ is $S$-colorable. Using Fact $5$, for each vertex $v\in V(G)$, the original vertices of $G_v$ have the same color $i$. We conclude that we can construct an $N_i(S)$-coloring of $G$ using the color $j$ to color vertices with color $s_j$ in $G'$, for $1\le j \le N_i(S)$.
\end{proof}
\begin{cor}\label{dicho0}
Let $k$ be a positive integer.
The problem $(2,2,2,k)$-COL is NP-complete.
\end{cor}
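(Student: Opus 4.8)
The plan is to obtain this corollary as a direct instantiation of Theorem \ref{reducii}, so no new gadget or construction is required. First I would set $i=2$ and verify that the list $S=(2,2,2,k)$ meets the two hypotheses of that theorem. The condition $s_1=s_{i+1}=i$ specializes to $s_1=s_3=2$; since the first three entries of $S$ are all equal to $2$, this holds, and indeed $i=2>1$ as required. The size condition $|S|\ge\lceil 3i/2\rceil$ specializes to $4\ge\lceil 3\cdot 2/2\rceil=3$, which is satisfied. Reading $(2,2,2,k)$ as a nondecreasing list forces $k\ge 2$; the actual value of $k$ plays no role beyond keeping the list nondecreasing, because only the coordinates $s_1$ and $s_3$ are constrained by the hypotheses of Theorem \ref{reducii}.

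Once these checks are in place, Theorem \ref{reducii} immediately gives that $(2,2,2,k)$-COL is NP-hard. It is worth recording, for intuition, that the reduction underlying Theorem \ref{reducii} starts from $N_i(S)$-COL, and here $N_2(S)\ge 3$ (the integer $2$ occurs at least three times in $S$, and exactly four times when $k=2$); thus one is effectively reducing from $3$-COL, or from $4$-COL in the case $k=2$, both of which are NP-complete. Membership of $(2,2,2,k)$-COL in NP is already available from the introduction, where it is observed that $S$-COL lies in NP for every nondecreasing list $S$. Combining NP-hardness with NP-membership yields NP-completeness.

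Since the statement is a corollary of a theorem proved just above, there is no genuine obstacle to overcome. The only point that deserves a moment of care is the bookkeeping that the \emph{third} coordinate of $S$ (not merely the first two) equals $i=2$, so that $s_{i+1}=s_3$ is well defined and equal to $i$; this is precisely what guarantees that the required $i+1=3\le|S|$ equal entries are present. With this verified, the conclusion of Theorem \ref{reducii} applies verbatim and the corollary follows.
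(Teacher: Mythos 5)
Your derivation is exactly the intended one: the paper states this corollary without proof precisely because it is a direct instantiation of Theorem \ref{reducii} with $i=2$, and your verification of the hypotheses $s_1=s_3=2$ and $|S|=4\ge\lceil 3\cdot 2/2\rceil=3$ (together with NP-membership from the introduction) is complete and correct.
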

\begin{prop}
Let $S$ be a nondecreasing list of integers with $s_2=1$ and $|S|\ge3$. If $S$-COL is NP-complete, then $S'$-COL is NP-complete for $|S'|=|S|+1$, $s'_{|S|+1}\ge s_{|S|}$ and $s'_i=s_i$, for each integer $i$, with $1\le i\le|S|$.
\label{reduc11}
\end{prop}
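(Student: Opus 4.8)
\textbf{The plan} is to prove Proposition~\ref{reduc11} by a polynomial reduction from $S$-COL to $S'$-COL, exploiting that $s_2=1$ forces $s_1=1$, so two colours of $S$ (and of $S'$) are ordinary independent sets. Write $t=s'_{|S|+1}\ge s_{|S|}$ for the appended colour. Given an instance $G$ of $S$-COL, I would build a graph $G'$ that is $S'$-colorable if and only if $G$ is $S$-colorable, the idea being to \emph{neutralise} the extra colour $t$: I attach to $G$ a gadget forcing, near every vertex of $G$, a vertex coloured $t$, so that in any $S'$-coloring no vertex of $G$ can itself receive colour $t$. Since the gadgets are pendant (each attached at a single vertex of $G$), they create no new path between two vertices of $G$, hence distances inside $G$ are preserved; the restriction to $G$ of an $S'$-coloring then uses only $s_1,\ldots,s_{|S|}$ and is a genuine $S$-coloring, while conversely an $S$-coloring of $G$ extends to $G'$.

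\textbf{Construction.} For each $v\in V(G)$ I would attach a copy $B_v$ of a \emph{$t$-forcer}, joined to $v$ by a path of length $\ell$ ending at a connector $c_v$ of $B_v$, where $B_v$ guarantees that in every $S'$-coloring the connector $c_v$ receives the value $t$. Taking $\ell$ with $\lceil t/2\rceil\le\ell\le t$ (for instance $\ell=t$) makes both directions work: on one hand $c_v$ lies at distance $\ell\le t$ from $v$, so $v$ cannot also lie in the $t$-packing; on the other hand two forced connectors $c_u,c_v$ lie at distance $2\ell+d_G(u,v)\ge 2\ell+1>t$, so they may legitimately share colour $t$. For the forward direction I would colour each connecting path by alternating the two colours of value $1$ (this is precisely where $s_2=1$ is used, a path being colourable by two independent sets), matching the endpoint colours, and colour each $B_v$ by a fixed colouring. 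The only possible clashes are between equal colours at small distance; colour $1$ clashes are avoided since the gadgets meet $G$ only at a single vertex, and the clashes for $s_3,\ldots,s_{|S|}$ are ruled out by demanding that these more restrictive colours occur inside $B_v$ only at distance greater than $s_{|S|}$ from $c_v$, hence greater than $s_{|S|}\ge s_i$ from every vertex of $G$.

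\textbf{Main obstacle.} The heart of the proof, and where I expect the real difficulty, is the forcer $B_v$. Forcing the \emph{largest} colour is delicate because a $t$-packing is automatically an $s$-packing for every $s\le t$, so a naive gadget can always be recoloured by substituting a smaller packing for $t$; one must instead build a gadget that genuinely needs $|S|+1$ mutually close colours, and then \emph{pin} the value $t$ at the connector while pushing the other large packings $s_3,\ldots,s_{|S|}$ past distance $s_{|S|}$ from $c_v$. I would obtain this by adapting the counting gadget $L_{i,S}$ from the proof of Theorem~\ref{reducii}, whose linear ``rank'' structure forces prescribed colours at prescribed positions: here the ranks are arranged so that $t$ is forced at the connector and the smaller large packings only at deeper ranks. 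The most delicate case will be $t=s_{|S|}$, where the two colours of value $s_{|S|}$ are interchangeable, so a single forcer pins $c_v$ only to the value $s_{|S|}$ and different connectors could fall into different $s_{|S|}$-classes, making the restriction to $G$ use both. For strict $t>s_{|S|}$ this cannot happen, since only one class carries value $t$; for $t=s_{|S|}$ I would add a linking gadget of transmitter type between the connectors that forces them all into the \emph{same} $s_{|S|}$-class, after which every vertex of $G$ near a connector is confined to the other $s_{|S|}$-class and the restriction to $G$ is again an $S$-coloring.
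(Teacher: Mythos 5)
Your overall shape (pendant gadgets that exclude the appended colour $t$ from every original vertex, plus alternating the two colours of value $1$ along the connecting paths, which is indeed where $s_2=1$ enters) matches the paper's strategy, but your proof has a genuine gap at exactly the point you flag as the main obstacle: the $t$-forcer $B_v$ is never constructed. Asserting that it can be obtained ``by adapting the counting gadget $L_{i,S}$'' does not close this gap: $L_{i,S}$ is built for lists with $s_1=i\ge 2$, and its rank structure (consecutive ranks forming cliques whose sizes are tuned to the multiplicities $N_j(S)$) exploits the fact that even the smallest colour has value at least $2$; here $s_1=s_2=1$, two classes are mere independent sets, and the counting that pins colours to ranks breaks down. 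Your fallback for the case $t=s_{|S|}$ --- a transmitter forcing all connectors into the same $t$-class --- is a second unconstructed gadget, and note that $S'$ may contain $N_t(S)+1\ge 3$ classes of value $t$ (e.g.\ $S'=(1,1,3,3,3)$), so pinning each $c_v$ to ``a colour $t$'' leaves more than a binary ambiguity; moreover, inserting transmitters between the $c_v$'s destroys the pendancy on which your distance-preservation argument rests.

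The paper avoids all of this with a single global gadget. Writing $\ell=s'_{|S|+1}$, it attaches to each $v$ a pendant path of $\ell-1$ vertices ending at $m_v$, adds one extra vertex $m_f$ adjacent to one $m_v$, and joins all the $m_v$'s and $m_f$ to a set of $|S|-N_\ell(S)-1$ vertices of a clique $K_{|S|+1}$. The clique must use all $|S|+1$ classes, and any class of value at least $2$ meeting $\{m_v\}\cup\{m_f\}$ would collide with the clique at distance at most $2$, so all the $m_v$'s receive colours of value $1$; a short counting argument then shows that some class of value $\ell$ must occupy a clique vertex adjacent to every $m_v$, hence within distance $\ell$ of every original vertex, which excludes the original vertices from that one class simultaneously --- no per-vertex forcer and no class synchronisation are needed. (Distances between original vertices are not preserved by this construction, only decreased, with shortcuts of length $2\ell$, which is harmless since no colour of value $2\ell$ or more is used on original vertices.) To rescue your plan you would have to exhibit the forcer, and the transmitter, explicitly for every admissible $S'$; as written, the core of the reduction is missing.
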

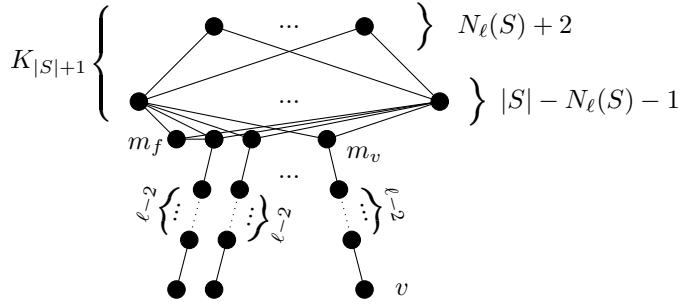
\begin{figure}[t]
\begin{center}
\begin{tikzpicture}
\draw (-1.5,0) -- (-1.33,0.66);
\draw[dotted] (-1.16,1.33) -- (-1.33,0.66);
\draw (-1.16,1.33) -- (-1,2);
\draw (-1.5,2) -- (-1,2);
\draw  (0.66,1.33)-- (0.5,2);
\draw (1,0) -- (0.83,0.66);
\draw[dotted] (0.66,1.33) -- (0.83,0.66);
\draw (-1.5+0.5,0) -- (-1.33+0.5,0.66);
\draw[dotted] (-1.16+0.5,1.33) -- (-1.33+0.5,0.66);
\draw (-1.16+0.5,1.33) -- (-1+0.5,2);
\draw (-2,2.5) -- (-1,2) ;
\draw (-2,2.5) -- (-1.5,2) ;
\draw (-2,2.5) -- (0.5,2) ;
\draw (-2,2.5) -- (-0.5,2) ;
\draw (2,2.5) -- (-1,2) ;
\draw (2,2.5) -- (-1.5,2) ;
\draw (2,2.5) -- (0.5,2) ;
\draw (2,2.5) -- (-0.5,2) ;
\draw (2,2.5) -- (1,3.5) ;
\draw (-2,2.5) -- (-1,3.5) ;
\draw (-2,2.5) -- (1,3.5) ;
\draw (2,2.5) -- (-1,3.5) ;
\node at (-1.5,0) [circle,draw=black,fill=black, scale=0.7] {};
\node at (-1,0) [circle,draw=black,fill=black, scale=0.7] {};
\node at (1,0) [circle,draw=black,fill=black, scale=0.7] {};
\node at  (-1.16,1.33) [circle,draw=black,fill=black, scale=0.7] {};
\node at (0.66,1.33) [circle,draw=black,fill=black, scale=0.7] {};
\node at (-1.16+0.5,1.33) [circle,draw=black,fill=black, scale=0.7] {};
\node at (-1.33,0.66) [circle,draw=black,fill=black, scale=0.7] {};
\node at (0.83,0.66) [circle,draw=black,fill=black, scale=0.7] {};
\node at (-1.33+0.5,0.66) [circle,draw=black,fill=black, scale=0.7] {};
\node at (-1.5,2) [circle,draw=black,fill=black, scale=0.7] {};
\node at (-1,2) [circle,draw=black,fill=black, scale=0.7] {};
\node at (0.5,2) [circle,draw=black,fill=black, scale=0.7] {};
\node at (-0.5,2) [circle,draw=black,fill=black, scale=0.7] {};
\node at (-1,2) [circle,draw=black,fill=black, scale=0.7] {};
\node at (-2,2.5) [circle,draw=black,fill=black, scale=0.7] {};
\node at (2,2.5) [circle,draw=black,fill=black, scale=0.7] {};
\node at (1,3.5) [circle,draw=black,fill=black, scale=0.7] {};
\node at (-1,3.5) [circle,draw=black,fill=black, scale=0.7] {};
\node at (0,1.5) {...};
\node at (0,2.5) {...};
\node at (0,3.5) {...};
\node at (1.25,1.1)[rotate=-75]{$\overbrace{... }^{\ell-2}$};
\node at (-1.75,1.1)[rotate=75]{$\overbrace{... }^{\ell-2}$};
\node at (-0.25,0.9)[rotate=75]{$\underbrace{... }_{\ell-2}$};
\node at (-1.9,1.9) {$m_f$};
\node at (2.5,2.5)[rotate=-90]{$\overbrace{}$};
\node at (1.8,3.5)[rotate=-90]{$\overbrace{}$};
\node at (4,2.5){$|S|-N_\ell (S)-1$};
\node at (3,3.5){$N_\ell (S)+2$};
\node at (-2.5,3)[rotate=90]{$\overbrace{\ \ \ \ \ \ \ \ \ \ \ \ \ }$};
\node at (-3.2,3){$K_{|S|+1}$};
\node at (1.5,0) {$v$};
\node at (1,1.8) {$m_v$};
\end{tikzpicture}
\end{center}
\caption{Illustration of the reduction of Proposition \ref{reduc11}.}
\label{gadg}
\end{figure}
\begin{proof}
Suppose that $S$-COL is NP-complete and that $s'_{|S|+1}=\ell$. From a graph $G$, we construct a new graph $G'$ as follows: for every vertex $v\in V(G)$, we begin with connecting $v$ to the extremity of a path of $\ell-1$ vertices, with $m_v$ the other extremity of the path (in the case $\ell=2$, the path has only one vertex and the extremities denote the same vertex). Thus, we add a new vertex $m_f$ that we connect to a vertex $m_v$, for some $v$. Afterwards, we add a copy of $K_{|S|+1}$ and we do the join between the set $\{m_v|v\in V(G)\}\cup \{ m_f\}$ and a set of $|S|-N_\ell (S)-1$ vertices in $K_{|S|+1}$. The obtained graph is the graph $G'$. The original vertices of $G'$ are the vertices that come from the graph $G$.

Suppose that $G$ is $S$-colorable. We use an $S$-coloring of $G$ in order to color the original vertices of $G'$. Thus, for every vertex $v$, we alternate the colors $s_1=1_a$ and $s_2=1_b$ in order to color every path until $m_v$. Afterwards, we give the color $1_a$ or $1_b$ to $m_f$ (depending on the color of the vertex adjacent to $m_f$). We have $N_\ell (S)+2$ vertices in $K_{|S|+1}$ non adjacent to $m_f$. The colors $1_a$, $1_b$ and the maximum number of color $\ell$ in $S$ are given to these vertices. The remaining colors are given to the non colored vertices in $K_{|S|+1}$.

Suppose that $G'$ is $S'$-colorable. Every vertex of $K_{|S|+1}$ should be colored differently. The two colors given to $m_v$ and $m_f$ should be used in order to color the $N_\ell (S)+2$ vertices in $K_{|S|+1}$ non adjacent to $m_f$. Thus, $m_v$ and $m_f$ have some colors $1$. At most $N_\ell (S)$ vertices can be colored by $\ell$ among these $N_\ell (S)+2$ vertices in $K_{|S|+1}$ non adjacent to $m_f$. Consequently, a vertex should be colored by a color $\ell$ among the remaining vertices of $K_{|S|+1}$. This color $\ell$ can not be used in order to color the other vertices of $G'$, the other vertices being at distance less than $\ell$ of a vertex colored by this color $\ell$. The coloring of the original vertices of $G'$ is used in order to create an $S$-coloring of $G$.
\end{proof}
\begin{cor}\label{dicho1}
The problem $S$-COL is NP-complete for every list $S$ such that $s_1=s_2=1$ and $|S|\ge 3$.
\end{cor}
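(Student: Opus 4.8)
The plan is to argue by induction on $|S|$, using the dichotomy theorem of Goddard and Xu \cite{GO2012} as the base case and Proposition \ref{reduc11} as the engine of the inductive step.

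For the base case $|S|=3$, note that since $S$ is nondecreasing with $s_1=s_2=1$, it must have the form $S=(1,1,k)$ for some integer $k\ge 1$. The theorem of Goddard and Xu recalled earlier states precisely that $(1,1,k)$-COL is NP-complete, so the base case is immediate.

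For the inductive step, I would assume that $S$-COL is NP-complete for every nondecreasing list $S$ of a fixed size $n\ge 3$ with $s_1=s_2=1$, and consider an arbitrary nondecreasing list $S'$ of size $n+1$ with $s'_1=s'_2=1$. The key move is truncation: let $S=(s'_1,\ldots,s'_n)$ be the restriction of $S'$ to its first $n$ entries. Then $S$ is a nondecreasing list of size $n$ whose first two entries coincide with those of $S'$, hence $s_1=s_2=1$, and the induction hypothesis gives that $S$-COL is NP-complete. Because $S'$ is nondecreasing we have $s'_{n+1}\ge s'_n=s_n$, and $s'_i=s_i$ for $1\le i\le n$ by construction; thus all the hypotheses of Proposition \ref{reduc11} are satisfied, and we conclude that $S'$-COL is NP-complete. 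This closes the induction and covers every list $S$ with $s_1=s_2=1$ and $|S|\ge 3$.

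I do not expect any genuine obstacle in this argument, since the heavy lifting is done by Proposition \ref{reduc11}. The only point requiring care is checking that the truncated list $S$ simultaneously satisfies the hypotheses needed to invoke the induction hypothesis (namely $s_1=s_2=1$) and those needed to feed into Proposition \ref{reduc11} (namely $s_2=1$, $|S|\ge 3$, and the monotonicity constraint $s'_{n+1}\ge s_n$). Each of these follows at once from $S'$ being a nondecreasing list whose first two entries equal $1$, so the induction proceeds without difficulty.
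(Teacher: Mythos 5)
Your proof is correct and follows exactly the route the paper intends: the corollary is stated immediately after Proposition \ref{reduc11} precisely because it follows by induction on $|S|$, with the Goddard--Xu result on $(1,1,k)$-COL as the base case and Proposition \ref{reduc11} as the inductive step. Your verification that the truncated list satisfies both the induction hypothesis and the hypotheses of Proposition \ref{reduc11} is the only point of substance, and you handle it correctly.
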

By \emph{subdividing} an edge $uv$, we mean removing the edge $uv$, adding a new vertex and connect it to $u$ and $v$.
\begin{prop}
Let $S$ be a nondecreasing list of integers. If $S$-COL is NP-complete, then $S'$-COL is NP-complete, for every list $S'$ such that $|S'|=|S|+1$, $s'_1=1$ and $s'_{i+1}=2s_i$ or $s'_{i+1}=2s_i+1$, for every integer $i$, with $1\le i \le |S|$.
\label{reduc2f}
\end{prop}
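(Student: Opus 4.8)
The plan is to reduce $S$-COL to $S'$-COL by \emph{subdividing} every edge of the input graph, so that distances get exactly doubled, and then to patch the one way in which the fresh color $s'_1=1$ could cheat. Write $k=|S|$. Given an instance $G$ of $S$-COL, I would first form the graph $G_0$ obtained from $G$ by subdividing each edge exactly once; the original vertices then lie pairwise at even distance and, since subdivision vertices have degree $2$, for any two original vertices $u,v$ one has $d_{G_0}(u,v)=2\,d_G(u,v)$. I would then attach to each original vertex $k+1$ pendant leaves, obtaining $G'$; as leaves have degree $1$, they create no shortcuts, so $d_{G'}(u,v)=2\,d_G(u,v)$ still holds for original $u,v$. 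This is the whole (clearly polynomial) reduction.

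For the forward direction, suppose $G$ has an $S$-coloring with classes $S_1,\dots,S_k$, where $S_i$ is an $s_i$-packing. I color every original vertex of $S_i$ with $s'_{i+1}$ and give the color $1=s'_1$ to every subdivision vertex and every leaf. If $u,v\in S_i$ then $d_G(u,v)\ge s_i+1$, hence $d_{G'}(u,v)=2\,d_G(u,v)\ge 2s_i+2> s'_{i+1}$ because $s'_{i+1}\le 2s_i+1$; thus each class $s'_{i+1}$ is a genuine $s'_{i+1}$-packing. The subdivision vertices and leaves are pairwise non-adjacent (each is adjacent only to original vertices), so together they form a $1$-packing. Hence $G'$ is $S'$-colorable.

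The reverse direction is the real obstacle, and it is precisely here that plain subdivision breaks down: since all original vertices sit pairwise at positive (even) distance, any set of them is automatically a $1$-packing, so an $S'$-coloring is free to park arbitrarily many original vertices in the color class $1$. For instance, the once-subdivided $K_4$ is $(1,2,2,2)$-colorable by coloring all four original vertices $1$, even though $K_4$ is not $3$-colorable. The leaves are designed to kill this loophole, via the following degree argument, which I would isolate as a short lemma. If a vertex $w$ receives color $1$, then all its neighbors avoid color $1$ and, being pairwise at distance at most $2\le s'_j$ for every $j\ge 2$, receive pairwise distinct colors among the $k$ colors $s'_2,\dots,s'_{k+1}$; hence $\deg(w)\le k$. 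Every original vertex of $G'$ has degree at least $k+1$, so in any $S'$-coloring no original vertex carries color $1$: it carries some $s'_j$ with $j\ge 2$. Recoloring each such vertex with $s_{j-1}$ yields an $S$-coloring of $G$, because if $u,v$ both had color $s'_j$ then $2\,d_G(u,v)=d_{G'}(u,v)>s'_j\ge 2s_{j-1}$, so $d_G(u,v)>s_{j-1}$ and the class $s_{j-1}$ is an $s_{j-1}$-packing.

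Consequently $G$ is $S$-colorable if and only if $G'$ is $S'$-colorable; since $S'$-COL lies in NP, it is NP-complete. The only subtle point is the degree argument forbidding color $1$ on original vertices, which both explains why bare subdivision fails and shows why boosting each original vertex to degree at least $|S|+1$ repairs it; everything else is the routine distance bookkeeping sketched above.
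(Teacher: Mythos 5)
Your proof is correct and follows essentially the same strategy as the paper's: subdivide edges so that distances double (hence colour $s_i$ can become $s'_{i+1}\in\{2s_i,2s_i+1\}$), shift the colours by one position, and force every original vertex to have more than $|S|$ pairwise-close neighbours so that a pigeonhole argument forbids colour $1$ on original vertices. The only difference is the degree-boosting gadget --- you attach $|S|+1$ pendant leaves to each original vertex, whereas the paper replaces each edge by $|S|+1$ parallel subdivided edges --- and both lead to the identical contradiction (two neighbours at distance at most $2$ sharing a colour $s'_j\ge 2$).
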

\begin{proof}
Suppose that $S$-COL is NP-complete. There exists a reduction from $S$-COL to $S'$-COL similar with the reduction introduced by Goddard \emph{et al.} \cite{GO2008} from $(1,1,2)$-COL to $(1,2,3,4)$-COL. 

Let $G$ be a connected graph. We construct a new graph $G'$ from $G$ by replacing each edge $uv$ by $|S|+1$ parallel edges that we subdivide. The vertices $u$ and $v$ are called original vertices.

Suppose that $G$ is $S$-colorable. Let $X_{i}\subseteq V(G)$ be an $i$-packing. Every pair of vertices $(u,v)$ in $X_{i}$ is such that $d(u,v)>i$. Let $X'_{i}\subseteq V(G')$ be the set of original vertices in $X_{i}$. Observe that every pair of vertices $(u,v)$ in $X'_{i}$ is such that $d_{G'}(u,v)>2i+1$. Thus, $X'_{i}$ is a $(2i+1)$-packing (and consequently a $2i$-packing). Let $X'_{1}$ be the set of non original vertices (created by subdivision). This set is independent and we color the vertices in this set by color $1$. We give the color $s'_{i+1}$ to the original vertices of $G$ colored by $s_i$, for $1\le i\le |S|$, we obtain an $S'$-coloring of $G'$.

Suppose that $G'$ is $S'$-colorable and suppose that there exists an original vertex $u$ colored by $1$. Since $u$ has $|S|+1$ neighbors, two of its neighbors have the same color $s'_i$. Thus, no original vertex is colored by $1$. Using the $S'$-coloring of the original vertices of $G'$, we can create an $S$-coloring of $G$ by giving the color $s_{i-1}$ to the vertices colored by $s'_i$ in $G'$, for $2\le i\le |S|+1$.
\end{proof}
\begin{cor}\label{dicho2}
The problem $S$-COL is NP-complete for every list such that $|S|=4$ and $S\ge S'$, with $S'=(1,3,5,5)$.
\end{cor}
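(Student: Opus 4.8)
The plan is to reduce the statement to a finite case analysis over the admissible quadruples and to realise each of them as the output of a reduction from a smaller NP-complete instance. First I would unwind the order convention from the introduction: since $S\le S'$ means $s_i\ge s'_i$ for all $i$, the hypothesis $S\ge(1,3,5,5)$ says precisely that $s_i\le s'_i$ coordinatewise, that is $s_1\le1$, $s_2\le3$, $s_3\le5$ and $s_4\le5$. Because $S$ is a nondecreasing list of positive integers, this forces $s_1=1$ and leaves only finitely many possibilities, which I would organise according to the value of $s_2\in\{1,2,3\}$.

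For $s_2=1$ the list satisfies $s_1=s_2=1$ with $|S|=4\ge3$, so $S$-COL is NP-complete directly by Corollary~\ref{dicho1}, with no further work needed. The remaining and main case is $s_2\in\{2,3\}$, where I would obtain each target list from Proposition~\ref{reduc2f} applied to a well-chosen three-element NP-complete seed $T=(t_1,t_2,t_3)$. Recall that this proposition turns an NP-complete $T$ into the NP-complete four-element lists whose first entry is $1$ and whose $(i{+}1)$-th entry is $2t_i$ or $2t_i+1$. Since $s_2\in\{2t_1,2t_1+1\}$ with $2\le s_2\le3$, one is forced to take $t_1=1$, so only the two seeds $(1,1,k)$ and $(1,2,2)$ are relevant; both are NP-complete, the former by Corollary~\ref{dicho1} and the latter by the dichotomy of Goddard and Xu \cite{GO2012}.

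It then remains to check that these two families of seeds cover every admissible quadruple with $s_2\in\{2,3\}$, which I would do by splitting once more on $s_3$. If $s_3\le3$, then $s_3\in\{2,3\}=\{2t_2,2t_2+1\}$ for $t_2=1$, and the seed $(1,1,k)$ with $k=\lfloor s_4/2\rfloor$ realises $s_4\in\{2k,2k+1\}$ (here $k\ge1$ because $s_4\ge s_3\ge2$); this produces every such list via Proposition~\ref{reduc2f}. If instead $s_3\ge4$, then $s_3,s_4\in\{4,5\}=\{2\cdot2,2\cdot2+1\}$, and the seed $(1,2,2)$ yields exactly these lists. Together with the case $s_2=1$ this exhausts all admissible $S$, proving the corollary. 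The one genuinely delicate point is bookkeeping rather than a new idea: one must verify that the coordinatewise doubling map of Proposition~\ref{reduc2f} hits every required quadruple, and in particular keep the direction of the order convention straight so as not to confuse $S\ge S'$ with $S\le S'$.
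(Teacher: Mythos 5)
Your proposal is correct and matches the paper's (implicit) derivation of this corollary: the paper states it as an immediate consequence of Proposition \ref{reduc2f} applied to the NP-complete seeds $(1,1,k)$ and $(1,2,2)$, with Corollary \ref{dicho1} absorbing the lists having $s_1=s_2=1$, which is exactly your case split on $s_2$ and $s_3$. The bookkeeping (choice of $t_1=1$, $k=\lfloor s_4/2\rfloor$, and the direction of the order) is handled correctly.
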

\subsection{Instances of $S$-COL for $|S|=4$}
In this section, we prove a dichotomy between NP-complete problems and polynomial-time solvable problems, using the results from the previous section. Let $a_0$ be the vertex of degree 3 in $Y_n$ and let $a_i$, $b_i$ and $c_i$ be the eventual vertices at distance $i$ from $a_0$ in $Y_n$.
\begin{prop}\label{dicho3}
Let $S$ be a nondecreasing list of integers and let $n$ be an integer. The graph $Y_{n}$ is not $S$-colorable for:
\begin{enumerate}
\item $S=(2,3,3,3)$ and $n=2$;
\item $S=(2,2,3,4)$ and $n=4$;
\item $S=(1,4,4,4)$ and $n=3$.
\end{enumerate}
\end{prop}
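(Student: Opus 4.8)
The plan is to assume in each case that $Y_n$ admits the stated $S$-coloring and to derive a contradiction purely from the local structure around the degree-$3$ vertex $a_0$. The basic tool is the elementary packing bound: if a color class is an $s$-packing and $W$ is a set of vertices of diameter at most $s$, then $W$ contains at most one vertex of that class, since two vertices of $W$ lie at distance at most $s$ and so cannot both belong to an $s$-packing. I will apply this repeatedly to small balls centered at $a_0$, using the distances $d(a_0,a_i)=i$, $d(a_i,a_j)=|i-j|$ on one arm, and $d(a_i,b_j)=i+j$ across two different arms.

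First the two quick cases. For $S=(2,3,3,3)$ and $n=2$, the four vertices $R_0=\{a_0,a_1,b_1,c_1\}$ are pairwise at distance at most $2$; since every color is an $s$-packing with $s\ge 2$, each color meets $R_0$ in at most one vertex, and as $|R_0|=4$ each color occurs there exactly once. I would then note that a vertex of $R_0$ colored $3$ must be a singleton of its class (nothing lies at distance greater than $3$ from it), so the three $3$-packings are entirely consumed inside $R_0$; the leaves $a_2,b_2,c_2$ are thus forced into the single $2$-packing together with the one central vertex $w$ colored $2$, which is impossible because $w$ lies within distance $2$ of some leaf. For $S=(1,4,4,4)$ and $n=3$ the color $1$ breaks the symmetry, so I would first rule out $a_0$ receiving color $1$ (that would force $a_1,b_1,c_1$ to exhaust the three $4$-packings as singletons, after which the adjacent pairs $a_2a_3,\dots$ would all have to take color $1$); hence $a_0$ is a singleton of color $4$. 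The six vertices $D=\{a_1,a_2,b_1,b_2,c_1,c_2\}$ are pairwise at distance at most $4<5$, so at most two of them are colored $4$, forcing at least four into color $1$, which contradicts that $D$ induces three disjoint edges and so has independence number $3$.

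The case $S=(2,2,3,4)$, $n=4$ is the main obstacle and requires a genuine case analysis. Here again $R_0$ has diameter $2$, so each of the four colors occurs exactly once on it, and I split on the color of $a_0$. If $a_0$ is colored $4$, it is a singleton (nothing is at distance greater than $4$ from $a_0$), so the remaining $12$ vertices must be covered by two $2$-packings and one $3$-packing avoiding $a_0$; I would precompute that a $2$-packing has at most $4$ vertices (partition $V(Y_4)$ into the four diameter-$2$ sets $\{a_0,a_1,b_1,c_1\}$, $\{a_2,a_3,a_4\}$ and the two analogues on arms $b,c$) and that a $3$-packing missing $a_0$ has at most $3$ vertices (at most one per arm), giving at most $4+4+3=11<12$. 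If $a_0$ is colored $3$, the unique $4$-colored vertex of $R_0$ is an arm-neighbour, say $a_1$; then on another arm each of $b_1,b_2,b_3$ is forced to color $2$ (they are too close to the $3$-colored $a_0$ and to the $4$-colored $a_1$), yet $b_1,b_2,b_3$ are pairwise within distance $2$ and hence need three distinct $2$-packings, a contradiction. Finally, if $a_0$ carries a color $2$, let $x_1$ be the arm-neighbour carrying the second color $2$; the vertex $x_2$ then has no available color, since color $2$ is blocked by $x_1$ (distance $1$) and by $a_0$ (distance $2$), while colors $3$ and $4$ are blocked by the arm-neighbours carrying them, both at distance $3$ from $x_2$.

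The delicate points, where I expect to spend most of the care, all lie inside the $(2,2,3,4)$ case: verifying the exact maximum sizes of the $2$- and $3$-packings used in the first sub-case, and checking in each sub-case that the indicated vertices are genuinely forced into a single option by the distance constraints. The other two cases are short once the correct central cluster ($R_0$ for $(2,3,3,3)$, and $D$ together with the ruling-out of color $1$ on $a_0$ for $(1,4,4,4)$) has been isolated.
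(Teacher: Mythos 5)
Your proof is correct in all three cases; I checked the distance computations and the packing bounds and found no gaps. For $(2,3,3,3)$ and $(1,4,4,4)$ you are doing essentially what the paper does: all four colors must appear on the radius-$1$ ball around $a_0$ (or, for $(1,4,4,4)$, you split on the color of $a_0$ exactly as the paper does), and then a nearby vertex or small cluster is starved of colors; your phrasing via ``the three $3$-classes are singletons inside $R_0$'' and via the independence number of $D$ is a counting reformulation of the paper's direct forcing, but the content is the same. The genuine divergence is in the $(2,2,3,4)$, $n=4$ case. The paper asserts in one line that, once $a_0,a_1,b_1,c_1$ receive the four distinct colors, one of the leaves $a_4,b_4,c_4$ cannot be colored; making that precise actually requires propagating constraints through the intermediate vertices of the arms, which the paper leaves implicit. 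You instead split on the color of $a_0$: the sub-cases where $a_0$ gets $3$ or $2$ close quickly by local forcing (three vertices of one arm all pushed into the two $2$-packings, or $x_2$ blocked from everything), and the sub-case where $a_0$ gets $4$ is settled by the global bound $|2_a|+|2_b|+|3|\le 4+4+3=11<12$. This costs you a three-way case analysis but buys a proof in which every step is locally checkable, whereas the paper's version is shorter but under-justified as written. Both routes are valid; yours is the more defensible one for the middle case.
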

\begin{proof}
1. The vertices $a_0$, $a_1$, $b_1$ and $c_1$ must have different colors. Thus, one vertex among $a_2$, $b_2$ and $c_2$ can not be colored.\newline
2. The vertices $a_0$, $a_1$, $b_1$ and $c_1$ must have different colors. Thus, one vertex among $a_4$, $b_4$ and $c_4$ can not be colored.\newline
3. Suppose $a_0$ has color $1$. The three vertices $a_1$, $b_1$ and $c_1$ must have different colors. Thus, one vertex among $a_2$ and $a_3$ can not be colored. Suppose $a_0$ has a color $4$. Thus, three vertices among $a_1$, $b_1$, $c_1$, $a_2$, $b_2$ and $c_2$ must have a color $4$ and one vertex among them can not be colored anymore.
\end{proof}
\begin{prop}[\cite{GOH2012}]\label{dicho4}
The graph $P_{16}$ is not $(1,3,5,8)$-colorable.
\end{prop}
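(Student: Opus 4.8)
The plan is to reduce the claim to an easily verified impossibility for the short path $P_8$, and then to propagate it along $P_{16}$ by a windowing argument that exploits how few vertices can carry the color $8$. Write $P_{16}=v_1v_2\cdots v_{16}$, and recall that in a path the distance between $v_i$ and $v_j$ is $|i-j|$; hence for any block of consecutive vertices $\{v_k,\ldots,v_{k+7}\}$ the induced subgraph is an \emph{isometric} copy of $P_8$, i.e.\ distances inside the block agree with those in $P_{16}$. Consequently the restriction of any $(1,3,5,8)$-coloring to such a block is again a valid packing coloring of $P_8$ using only the colors present there.

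First I would prove the base lemma that $P_8$ is not $(1,3,5)$-colorable. A $(1,3,5)$-coloring would partition the eight vertices into a $1$-packing $X_1$, a $3$-packing $X_3$ and a $5$-packing $X_5$; the packing numbers of $P_8$ give $|X_1|\le 4$, $|X_3|\le 2$ and $|X_5|\le 2$, and since $4+2+2=8$ all three classes must be simultaneously maximum. In particular $X_5$ must be one of the three maximum $5$-packings $\{v_1,v_7\}$, $\{v_1,v_8\}$, $\{v_2,v_8\}$. For each of these three choices I would run through the few ways of selecting the maximum $3$-packing $X_3$ from the remaining six vertices and observe that the four leftover vertices always contain two consecutive ones, so they cannot form the independent set $X_1$. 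Hence no $(1,3,5)$-coloring of $P_8$ exists.

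Next I would use that $|X_8|\le 2$, which holds because the $8$-packing number of $P_{16}$ is $2$ (three vertices at pairwise distance at least $9$ do not fit into $16$ positions). By the base lemma, every block of $8$ consecutive vertices must contain a vertex of $X_8$: otherwise, by isometry, that block would yield a $(1,3,5)$-coloring of $P_8$. Applying this to the two disjoint blocks $\{v_1,\ldots,v_8\}$ and $\{v_9,\ldots,v_{16}\}$ produces a vertex $v_{d_1}\in X_8$ with $d_1\le 8$ and a vertex $v_{d_2}\in X_8$ with $d_2\ge 9$; since $X_8$ is an $8$-packing we have $d_2-d_1\ge 9$, and since $|X_8|\le 2$ these are the only vertices of $X_8$. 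Finally I would consider the sliding block $\{v_{d_1+1},\ldots,v_{d_1+8}\}$, which is a genuine block of $P_{16}$ because $1\le d_1\le 8$. It omits $v_{d_1}$, and because $d_2\ge d_1+9>d_1+8$ it also omits $v_{d_2}$, so it contains no vertex of $X_8$; this contradicts the base lemma and finishes the proof.

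The main obstacle is the base lemma, and more precisely the reason one must pass to a sub-path at all: the global packing-number count for $P_{16}$, namely $8+4+3+2=17\ge 16$, is too loose to yield any contradiction directly, so the argument depends on locating the smallest sub-path that already fails and verifying that failure by an explicit (but short) case analysis. Once $P_8$ is identified as the obstruction, the windowing step — using one pair of disjoint blocks to force two color-$8$ vertices that are necessarily far apart, and then one sliding block that escapes both of them — is elementary, and it is exactly what converts the local impossibility into the global statement about $P_{16}$.
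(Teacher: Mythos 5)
Your argument is correct, and it is worth noting that the paper itself offers no proof of this statement at all: Proposition \ref{dicho4} is simply imported from the Goddard--Xu note \cite{GOH2012}. So your proposal is not an alternative to the paper's proof but a self-contained replacement for a citation. The structure is sound: any block of eight consecutive vertices of $P_{16}$ is an isometric copy of $P_8$, so a block avoiding the colour $8$ would induce a $(1,3,5)$-colouring of $P_8$; your base lemma rules this out by the tight count $4+2+2=8$ forcing all three classes to be maximum, followed by the finite check over the three maximum $5$-packings $\{v_1,v_7\},\{v_1,v_8\},\{v_2,v_8\}$ and the handful of compatible maximum $3$-packings (I verified all cases: the four leftover vertices always contain an adjacent pair). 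The windowing step is also airtight: the two disjoint blocks $\{v_1,\dots,v_8\}$ and $\{v_9,\dots,v_{16}\}$ each need a colour-$8$ vertex, the $8$-packing condition forces $d_2\ge d_1+9$, the bound $|X_8|\le 2$ forbids any further one, and the shifted block $\{v_{d_1+1},\dots,v_{d_1+8}\}$ (legal since $d_1\le 8$) misses both, giving the contradiction. The only cosmetic caveat is that the base-lemma case analysis is described rather than written out, but it is a genuinely short and checkable enumeration, so this is a complete elementary proof of a fact the paper takes on faith.
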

By adding a leaf on a vertex $v$, we mean adding a new vertex and connecting it to $v$.
By $\beta(G)$, we denote the order of a \emph{minimum vertex cover} of $G$.

In the following propositions, we subdivide three times an edge $uv$. By subdivide three times, we mean replacing an edge by a path of length four with three new vertices.
For an edge subdivided three times, the central subdivided vertex is the vertex which is not a neighbor of $u$ or $v$.
We introduce the following operations on a vertex of a graph, the name of these operations come from the color required to have an $S$-coloring for the introduced vertices:
\begin{enumerate}
\item A \emph{$2$-add} on a vertex $u$ is an operation which consists in adding a vertex $w_u$ and an independent set of vertices $I$, adding the edge $u w_u$ and adding some edges between $\{u, w_u\}$ and the vertices of $I$.

\item A \emph{$3$-add} on a vertex $u$ is an operation which consists in adding a vertex $w_u$ and an independent set of vertices $I$ and adding some edges between $\{u, w_u\}$ and the vertices of $I$.

\item A \emph{$3$-$5$-add} on a vertex $u$ is an operation which consists in doing a $3$-add on $u$ and doing several $3$-adds or a $2$-add on the added vertex $w_u$.

\item A \emph{$3$-$6$-add} on a vertex $u$ is an operation which consists in doing a $3$-add on $u$ and doing a $2$-add on the added vertex $w_u$.

\item A \emph{$2$-$5$-add} on a vertex $u$ is an operation which consists in doing a $2$-add on $u$ with added vertex $w_u$, doing a $2$-add on $w_u$ with added vertex $w'_u$ and doing several $3$-adds on the added vertex $w'_u$.

\end{enumerate}
These operations are illustrated in Figure \ref{figajout}.
\begin{figure}
\begin{center}
\begin{tikzpicture}
\draw (0,0) -- (0,0.7);
\draw (0,0) -- (0.5,1.5);
\draw (0,0.7) -- (0.5,1.5);
\draw (0,0) -- (1,1.5);
\draw (0,0.7) -- (-0.8,1.5);
\node at (0,0) [circle,draw=black,fill=black, scale=0.7] {};
\node at (-0.4,0.7) {$w_u$};
\node at (0,0.7) [circle,draw=red!50,fill=red!20] {};
\node at (0,0.7) {$2$};
\node at (-0.3,0) {$u$};
\node at (-0.8,1.5) [circle,draw=blue!50,fill=blue!20] {};
\node at (-0.8,1.5) {$1$};
\node at (1,1.5) [circle,draw=blue!50,fill=blue!20] {};
\node at (1,1.5) {$1$};
\node at (0.5,1.5) [circle,draw=blue!50,fill=blue!20] {};
\node at (0.5,1.5) {$1$};
\draw[dashed] (0,1.5) ellipse (1.3cm and 0.5cm);
\node at (0,1.5) {$I$};
\node at (0,-0.4) {$2$-add};
\draw (3,2) -- (2.2,1);
\draw (3,0) -- (3.5,1);
\draw (3,2) -- (3.5,1);
\draw (3,0) -- (4,1);
\node at (3,0) [circle,draw=black,fill=black, scale=0.7] {};
\node at (2.7,0) {$u$};
\node at (3,2) [circle,draw=red!50,fill=red!20] {};
\node at (3,2) {$3$};
\node at (2.6,2) {$w_u$};
\node at (2.2,1) [circle,draw=blue!50,fill=blue!20] {};
\node at (2.2,1) {$1$};
\node at (4,1) [circle,draw=blue!50,fill=blue!20] {};
\node at (4,1) {$1$};
\node at (3.5,1) [circle,draw=blue!50,fill=blue!20] {};
\node at (3.5,1) {$1$};
\node at (3,1) {$I$};
\draw[dashed] (3,1) ellipse (1.3cm and 0.5cm);
\node at (3,-0.4) {$3$-add};
\draw (6,1.75) -- (5.2,1);
\draw (6,0) -- (6.5,1);
\draw (6,1.75) -- (6.5,1);
\draw (6,0) -- (7,1);
\draw (6,1.75) -- (5.2,2.5);
\draw (6,1.75) -- (6.5,2.5);
\draw (6,3.25) -- (6.5,2.5);
\draw (6,3.25) -- (7,2.5);
\node at (6,0) [circle,draw=black,fill=black, scale=0.7] {};
\node at (5.7,0) {$u$};
\node at (6,3.25) [circle,draw=red!50,fill=red!20] {};
\node at (6,3.25) {$3$};
\node at (5.6,3.25) {$w'_u$};
\node at (6,1.75) [circle,draw=green!50,fill=green!20] {};
\node at (6,1.75) {$5$};
\node at (5.6,1.75) {$w_u$};
\node at (5.2,2.5) [circle,draw=blue!50,fill=blue!20] {};
\node at (5.2,2.5) {$1$};
\node at (7,2.5) [circle,draw=blue!50,fill=blue!20] {};
\node at (7,2.5) {$1$};
\node at (6.5,2.5) [circle,draw=blue!50,fill=blue!20] {};
\node at (6.5,2.5) {$1$};
\node at (5.2,1) [circle,draw=blue!50,fill=blue!20] {};
\node at (5.2,1) {$1$};
\node at (7,1) [circle,draw=blue!50,fill=blue!20] {};
\node at (7,1) {$1$};
\node at (6.5,1) [circle,draw=blue!50,fill=blue!20] {};
\node at (6.5,1) {$1$};
\node at (6,2.5) {$I'$};
\draw[dashed] (6,2.5) ellipse (1.3cm and 0.5cm);
\node at (6,1) {$I$};
\draw[dashed] (6,1) ellipse (1.3cm and 0.5cm);
\node at (6,-0.4) {$3$-$5$-add};
\draw (9,1.75) -- (9,2.15);
\draw (9,1.75) -- (8.2,1);
\draw (9,0) -- (9.5,1);
\draw (9,1.75) -- (9.5,1);
\draw (9,0) -- (10,1);

\draw (9,2.15) -- (8.2,2.9);
\draw (9,2.15) -- (9.5,2.9);
\draw (9,1.75) -- (9.5,2.9);
\draw (9,1.75) -- (10,2.9);
\node at (9,0) [circle,draw=black,fill=black, scale=0.7] {};
\node at (8.7,0) {$u$};
\node at (9,1.75) [circle,draw=red!50,fill=red!20] {};
\node at (8.6,2.15) {$w'_u$};
\node at (9,2.15) [circle,draw=yellow!50,fill=yellow!20] {};
\node at (9,2.15) {$6$};
\node at (9,1.75) {$3$};
\node at (8.6,1.75) {$w_u$};
\node at (8.2,2.9) [circle,draw=blue!50,fill=blue!20] {};
\node at (8.2,2.9) {$1$};
\node at (10,2.9) [circle,draw=blue!50,fill=blue!20] {};
\node at (10,2.9) {$1$};
\node at (9.5,2.9) [circle,draw=blue!50,fill=blue!20] {};
\node at (9.5,2.9) {$1$};
\node at (8.2,1) [circle,draw=blue!50,fill=blue!20] {};
\node at (8.2,1) {$1$};
\node at (10,1) [circle,draw=blue!50,fill=blue!20] {};
\node at (10,1) {$1$};
\node at (9.5,1) [circle,draw=blue!50,fill=blue!20] {};
\node at (9.5,1) {$1$};
\node at (9,2.9) {$I'$};
\draw[dashed] (9,2.9) ellipse (1.3cm and 0.5cm);
\node at (9,1) {$I$};
\draw[dashed] (9,1) ellipse (1.3cm and 0.5cm);
\node at (9,-0.4) {$3$-$6$-add};
\draw (4.5,-5.3) -- (4.5,-4.8);
\draw (4.5,-5.3) -- (5.5,-3.5);
\draw (4.5,-5.3) -- (6.5,-3.5);
\draw (4.5,-4.8) -- (4.5,-4.3);
\draw (4.5,-4.8) -- (2.7,-3.5);
\draw (4.5,-4.8) -- (3.3,-2.4);
\draw (4.5,-4.8) -- (5.3,-2.4);
\draw (4.5,-4.8) -- (6,-2.4);
\draw (4.5,-4.3) -- (3.7,-0.9);
\draw (4.5,-4.3) -- (5,-0.9);
\draw (4.5,-4.3) -- (3.3,-2.4);
\draw (4.5,-1.65) -- (5,-0.9);
\draw (4.5,-1.65) -- (5.5,-0.9);
\node at (3.7,-0.9) [circle,draw=blue!50,fill=blue!20] {};
\node at (3.7,-0.9) {$1$};
\node at (5.5,-0.9) [circle,draw=blue!50,fill=blue!20] {};
\node at (5.5,-0.9) {$1$};
\node at (5,-0.9) [circle,draw=blue!50,fill=blue!20] {};
\node at (5,-0.9) {$1$};
\node at (3.3,-2.4) [circle,draw=blue!50,fill=blue!20] {};
\node at (3.3,-2.4) {$1$};
\node at (5.3,-2.4) [circle,draw=blue!50,fill=blue!20] {};
\node at (5.3,-2.4) {$1$};
\node at (6,-2.4) [circle,draw=blue!50,fill=blue!20] {};
\node at (6,-2.4) {$1$};
\node at (2.7,-3.5) [circle,draw=blue!50,fill=blue!20] {};
\node at (2.7,-3.5) {$1$};
\node at (5.5,-3.5) [circle,draw=blue!50,fill=blue!20] {};
\node at (5.5,-3.5) {$1$};
\node at (6.5,-3.5) [circle,draw=blue!50,fill=blue!20] {};
\node at (6.5,-3.5) {$1$};
\node at (4.5,-5.3) [circle,draw=black,fill=black, scale=0.7] {};
\node at (4.2,-5.3) {$u$};
\node at (4.5,-4.8) [circle,draw=red!50,fill=red!20] {};
\node at (4.5,-4.8) {$2$};
\node at (4.1,-4.8) {$w_u$};
\node at (4.5,-4.3) [circle,draw=green!50,fill=green!20] {};
\node at (4.5,-4.3) {$5$};
\node at (4.1,-4.3) {$w'_u$};
\node at (4.5,-1.65) [circle,draw=red!50,fill=red!20] {};
\node at (4.5,-1.65) {$2$};
\node at (4.1,-1.65) {$w''_u$};
\node at (4.5,-0.9) {$I''$};
\draw[dashed] (4.5,-0.9) ellipse (1.3cm and 0.5cm);
\node at (4.5,-2.4) {$I'$};
\draw[dashed] (4.5,-2.4) ellipse (2.6cm and 0.5cm);
\node at (4.5,-3.5) {$I$};
\draw[dashed] (4.5,-3.5) ellipse (2.6cm and 0.5cm);
\node at (4.5,-5.7) {$2$-$5$-add};
\end{tikzpicture}
\end{center}
\caption{The different used operations in Propositions \ref{bipartit1} and \ref{bipartit2} ($I$, $I'$ and $I''$ are independent sets).}
\label{figajout}
\end{figure}
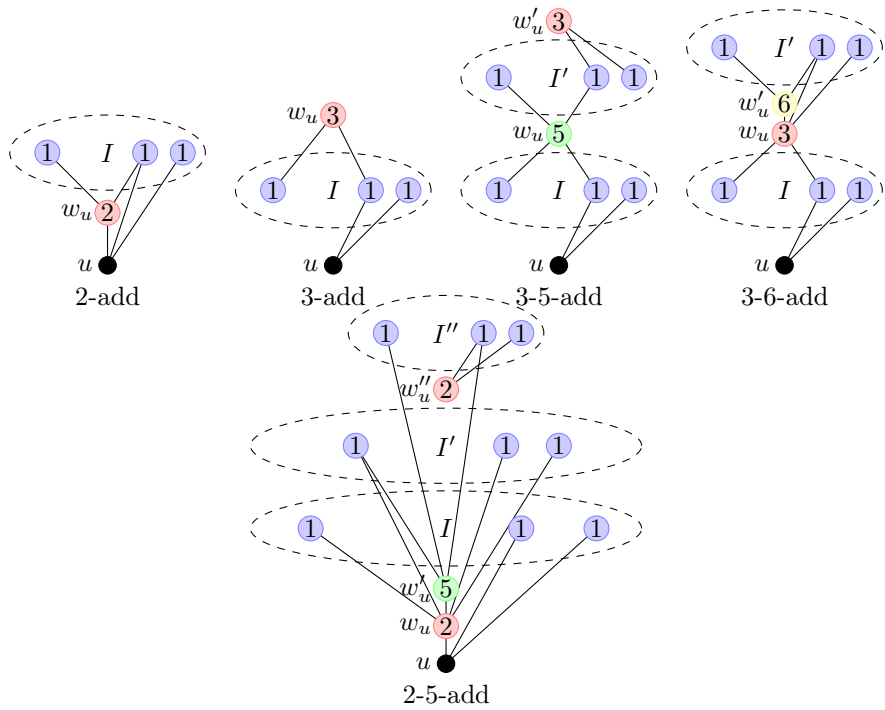
\begin{theo}\label{bipartit1}
Let $S=(1,3,k,k')$, with $5\le k \le 7$ and $6\le k'\le 7$. A graph $G$ is $S$-colorable if and only if one of these three following conditions is true:
\begin{enumerate}
\item $\beta(G)\le 3$;
\item the graph $G$ is obtained from a graph $G'$ with vertex cover $C\subseteq \{u,v\}$ of order at least 2 where we can do the following operations:
\begin{enumerate}
\item if $|C|=2$ and $u$ and $v$ are adjacent, then we can do several $3$-adds on $u$ or $v$ or instead of doing several $3$-adds on a vertex among $u$ and $v$ we can do a $2$-add on this vertex;
\item otherwise, if $|C|<1$ or if $u$ and $v$ are not adjacent, then we can do several $3$-adds on $u$ or $v$ or instead of doing several $3$-adds on $u$ or $v$ we can do a $2$-add on these vertices;
\end{enumerate}
\item the graph $G$ is obtained from a bipartite multi-graph $G'=(A,B)$ in which every edge is subdivided three times and where we can do the following operations:
\begin{enumerate}
\item we can add leaves on the central subdivided vertices and do a $3$-add or adding leaves on vertices that come from $A$ or $B$;
\item if $k\le 6$, we can do a $3$-$6$-add on vertices that come from $A$. If $k'\le6$, we can do a $3$-$6$-add on the vertices that come from $A$ or $B$;
\item if $k=5$, instead of doing a $3$-$6$-add on vertices that come from $A$, we can do a $3$-$5$-add or several $3$-$6$-adds.
\end{enumerate}
\end{enumerate}
\end{theo}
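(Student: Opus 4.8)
The plan is to prove both implications of the equivalence, organised around the four colour classes of an $(1,3,k,k')$-coloring. Write $A_1,A_2,A_3,A_4$ for the classes of the colours $1$, $3$, $k$, $k'$. Since $A_1$ is a $1$-packing it is independent, so $C=A_2\cup A_3\cup A_4$ is a vertex cover; the class $A_2$ is a $3$-packing and, because $k,k'\ge5$, the classes $A_3,A_4$ are very sparse, any two of their vertices lying at distance more than $5$. The whole argument rests on the slack provided by $k,k'\le 7$: along a corridor of vertices the colours $3,k,k'$ can be placed with a pattern of type $1,3,1,k,1,3,1,k'$ that repeats every eight vertices, so that two vertices of the same sparse colour stay at distance $8>k,k'$, while the two \emph{distinct} sparse colours may legally sit as close as distance $4$. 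Recognising this, I would treat $A_2$ as the backbone of the cover, placed at pairwise distance at least $4$, and treat $A_3,A_4$ as bounded decorations produced exactly by the add-operations of Figure~\ref{figajout}.

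For the direction asserting $S$-colorability I would assemble an explicit coloring for each of the three families. When $\beta(G)\le 3$ one puts the colours $3,k,k'$ on a minimum vertex cover, one vertex per class so that every packing constraint is vacuous, and colour $1$ everywhere else. For the second family I would colour the two cover vertices $u,v$ with sparse colours and read the remaining colours off the gadgets: the colorings drawn in Figure~\ref{figajout} already show that a \emph{$3$-add} realises colour $3$ on its new vertex, that a \emph{$3$-$5$-add} and a \emph{$3$-$6$-add} realise the pairs $(3,k)$ and $(3,k')$, that a \emph{$2$-$5$-add} realises the required chain, and that every introduced independent set takes colour $1$; only shallow, local distance checks then remain. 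For the third family I would first record the clean structural fact that subdividing every edge three times turns each edge into a path of length $4$, hence every cycle of the resulting graph is even and the graph is automatically bipartite. On such a skeleton the $A$- and $B$-vertices can be coloured $1$, and the periodic pattern above colours the three subdivision vertices of each edge with $1$, $3$ and a single alternating sparse colour, while the \emph{$3$-$5$-}, \emph{$3$-$6$-} and \emph{$2$-$5$-adds} hung on the $A$- and $B$-vertices supply the colours $k,k'$ where the statement permits them. The asymmetric clauses $k\le6$, $k'\le6$ and $k=5$ correspond exactly to which sparse colour each side may carry, and in each case the matching gadget of Figure~\ref{figajout} is selected; verifying the packing constraints of the assembled colouring is routine once the distance slack from $k,k'\le7$ is in hand.

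The converse is the hard part, and I expect it to be the main obstacle. Starting from an arbitrary $S$-coloring I would analyse the backbone $A_2$ together with the sparse classes and force $G$ into one of the three families. If the cover is small we land in the first family via $\beta(G)\le 3$. Otherwise, a local analysis around each backbone vertex should show that two $A_2$-vertices linked through $A_1$-vertices can only be joined by a corridor whose length is forced to be $4$: anything shorter violates the $3$-packing of $A_2$, while anything longer requires a sparse colour that, by its distance constraint, cannot be repeated. This is precisely the three-fold subdivision of condition~3, the bipartition $A,B$ being recovered from the parity of the periodic pattern. A branch vertex of degree at least $3$ is then ruled out, since it would demand three pairwise-close vertices drawn from only the two sparse colours $k,k'$; this degree-bounding step is what collapses the localised backbone case to the two-vertex cover of condition~2, and the leftover irregularities, namely pendant structures and sparse-coloured vertices off a corridor, are matched one at a time to the five add-operations.

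The genuinely delicate point is the finite but case-heavy verification, over $k\in\{5,6,7\}$ and $k'\in\{6,7\}$, that no admissible local configuration of the colours $k$ and $k'$ escapes the five operations of Figure~\ref{figajout}; this is exactly where the hypotheses $5\le k\le 7$ and $6\le k'\le 7$ are consumed, and it is the crux of the necessity direction. I would organise this verification by classifying, up to the distance slack above, the possible colourings of the ball of radius $\max(k,k')$ around a sparse-coloured vertex, and then checking that each such local picture is realised by one of the adds and by no other configuration. Controlling this bounded-radius case analysis cleanly, rather than its bookkeeping, is what I expect to absorb most of the work.
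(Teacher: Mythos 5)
There is a genuine gap, and also a concrete error in the direction you treat as easy. In the sufficiency direction your colouring for condition~3 is not the right one: you colour the $A$- and $B$-vertices by $1$ and distribute $1$, $3$ and a sparse colour over the three subdivision vertices of each edge. The intended (and working) colouring is the opposite: the central subdivided vertex of each edge gets colour $3$, the two other subdivision vertices get colour $1$, the vertices of $A$ get colour $k'$ and those of $B$ get colour $k$. This is forced by the very shape of conditions 3(b) and 3(c), which attach $3$-$5$- and $3$-$6$-adds to the $A$- and $B$-vertices precisely because those vertices carry the sparse colours (the add supplies a second sparse vertex at distance $2$ or $3$ from them); your own remark that ``each side may carry'' a sparse colour contradicts your colouring. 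Moreover your scheme is not obviously a packing colouring at all: two colour-$3$ subdivision vertices on edges sharing an endpoint can be at distance $2$, and two same-sparse-coloured subdivision vertices can be at distance well below $k$. The ``period-$8$ corridor'' picture is also a red herring: in this theorem every corridor between sparse vertices has length exactly $4$ (or the configuration degenerates into condition~1, 2, or an add), so there is no repeating pattern to set up.

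The necessity direction, which you correctly identify as the hard part, is left as a plan whose organising principle is not the right one. The efficient argument is to take a vertex $u$ of colour $k$ and a vertex $v$ of colour $k'$ at \emph{minimal} distance from $u$ and to exhaust $d(u,v)\in\{1,2,3,4\}$: $d=1$ gives condition 1 or 2(a); $d=2$ gives condition 2(b) if $k>5$ and a $3$-$5$-add if $k=5$; $d=3$ gives condition 2(b) if $k=7$ and a $3$-$6$-add if $k\le 6$ or $k'\le 6$; $d=4$ forces colour $3$ on the middle vertex of the path and yields exactly the thrice-subdivided bipartite skeleton; and $d>4$ is impossible because the intermediate path cannot be coloured with the remaining colours. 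Your substitute claims --- that corridors between two colour-$3$ vertices are forced to have length $4$, and that branch vertices are ruled out --- are not the statements that drive the proof (the colour-$3$ class is a consequence of the $k$/$k'$ geometry, not the backbone), and the proposed classification of all colourings of a ball of radius $\max(k,k')$ is both far heavier than needed and explicitly deferred. As written, the crux of the theorem is acknowledged but not proved.
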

\begin{proof}
Let $G$ be a graph verifying one of the three conditions. If the first condition is true, then we color the vertices of the vertex cover by colors $3$, $k$ and $k'$ and the remaining vertices by color $1$. If the second condition is true, then we color $u$ and $v$ by color $k$ and $k'$, the other vertices that come from $G'$ by color $1$ and the other vertices as in Figure \ref{figajout}. If the third condition is true, then we color the central subdivided vertices by color $3$, the vertices that come from $A$ by color $k'$, the vertices that come from $B$ by color $k$, the vertices that have been added as in Figure \ref{figajout} and the remaining vertices by color $1$. To conclude, if one of the three conditions is true, then $G$ is $S$-colorable.

Suppose that $G$ is $S$-colorable.
If no vertex is colored by $k$ or no vertex is colored by $k'$, then the first condition or the second condition (for a vertex cover of order 1) is true.
Hence, we can suppose that there exists at least one vertex of color $k$ and one vertex of color $k'$.
The proof will be organized as follows: we suppose that there exists a vertex $u$ of color $k$ and we consider a vertex $v$ of color $k'$ at minimal distance from $u$. We will prove that $u$ and $v$ are at distance $4$ or that one vertex among $u$ and $v$ can be obtained by doing one of the previous operation on the other vertex. Since the only way to color the vertices in the path between $u$ and $v$ consists in using a color $3$ on the central vertex (the vertex at distance $2$ from $u$ and $v$) of this path, the third condition is true about $G$ .

If $d(u,v)=1$, then the first or the second condition (case (a)) is true. If $d(u,v)=2$ and if $k>5$, then the first or the second condition (case (b)) is true.
If $d(u,v)=2$ and $k=5$, then $u$ is obtained by $3$-$5$-add on $v$.
If $d(u,v)=3$ and $k=7$, then the first or the second condition (case (b)) is true. If $d(u,v)=3$ and $k'\le 6$ ($k=6$, respectively), then $u$ ($u$ or $v$, respectively) is obtained by $3$-$6$-add on $v$ ($u$ or $v$, respectively).
If $d(u,v)>4$, then the $S$-coloring can not be extended to the path between $u$ and $v$. 
\end{proof}
In the following theorem, by an edge $uv$ coming from a multi-graph $G$, we consider any edge among the multi-edges between $u$ and $v$.
\begin{theo}\label{bipartit2}
Let $S=(1,2,k,k')$, with $5\le k \le 7$ and $6\le k'\le 7$. A graph $G$ is $S$-colorable if and only if one of these three following conditions is true:
\begin{enumerate}
\item $\beta(G)\le 3$;
\item the graph $G$ is obtained from a graph $G'$ with vertex cover $C\subseteq \{u,v\}$ of order at least 2 where we can do several $3$-adds on $u$ or $v$ and doing a $2$-add on $u$ or $v$;
\item the graph $G$ is obtained from a bipartite multi-graph $G'=(A,B)$ where every edge is subdivided three times and where we can do the following operations:
\begin{enumerate}
\item if $k\le 6$ and $k'=7$, we can identify two adjacent subdivided vertices that come from an edge $uv$ where $u$ has only one neighbor in $G'$ and $u$ come from $B$;
\item if $k=6$ and $k'=6$, we can take an independent set of edges from $E(G')$ and identify two adjacent subdivided vertices that come from these edges;
\item if $k=5$ and $k'=6$, we can take a set of edges from $E(G')$ such that no edge have an extremity in common in $B$ and identify two adjacent subdivided vertices that come from these edges;
\item we can add leaves on central subdivided vertices or, if two vertices have been identified, on one of the two subdivided vertices. We can never add leaves on two vertices at distance $2$. We can do $3$-adds, a $2$-add and add leaves on vertices that come from $A$ or $B$. We can not do a $2$-add on a vertex which is adjacent to a vertex coming from an identification;
\item if $k=5$, instead of doing a $2$-add on vertices that come from $A$, we can do a $2$-$5$-add or a $3$-$5$-add on vertices that come from $A$. We can not do a $2$-$5$-add on vertices at distance at most 2 from a vertex coming from an identification.
\end{enumerate}
\end{enumerate}
\end{theo}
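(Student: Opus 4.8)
The plan is to prove the two implications separately, following closely the scheme already used for Theorem~\ref{bipartit1}, while accounting for the one structural feature that distinguishes the list $(1,2,k,k')$ from $(1,3,k,k')$: since $s_2=2$ is a $2$-packing rather than a $3$-packing, two vertices coloured $k$ and $k'$ may now sit at distance $3$, with the two interior path-vertices coloured $2,1$ or $1,2$. A direct enumeration of the $\{1,2\}$-patterns on a path whose endpoints carry the heavy colours shows that an interior has no two adjacent $1$'s and keeps the $2$'s at pairwise distance $>2$ only when its length is at most $3$; hence any two heavy vertices are at distance at most $4$, the value $4$ forcing the pattern $1,2,1$ and the value $3$ being exactly what an identification of two adjacent subdivided vertices produces. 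This dichotomy between distance $4$ (an ordinary triple subdivision) and distance $3$ (an identified one) is the backbone of condition~3.

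For sufficiency I would take $G$ satisfying one of the three conditions and write down an explicit $S$-colouring. If $\beta(G)\le 3$, I colour a minimum vertex cover with $2$, $k$ and $k'$ (each used at most once, hence trivially a packing) and every other vertex with $1$. If condition~2 holds, I give $u$ and $v$ the colours $k$ and $k'$, the remaining vertices of $G'$ the colour $1$, and the gadgets produced by the $3$-adds and the $2$-add the colours of Figure~\ref{figajout}, with the understanding that here the vertex $w_u$ of each add receives colour $2$ (colour $3$ being unavailable in this list). For condition~3 I colour the vertices of $A$ by $k'$ and those of $B$ by $k$, the central subdivided vertices by $2$, each non-identified subdivided edge by $1,2,1$ and each identified one by $1,2$ or $2,1$, the added gadgets as in Figure~\ref{figajout}, and all remaining vertices by $1$. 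In every case the only nontrivial verification is that the $2$-, $k$- and $k'$-classes respect their packing distances; the delicate interactions between an identification and a nearby add are precisely those excluded by the provisos in 3(d)--3(e).

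For necessity I would assume $G$ is $S$-colourable. If $k$ or $k'$ is unused, then only $1$, $2$ and one heavy colour occur; because a long path cannot be coloured with $\{1,2,k\}$ alone when $k\ge 6$, such a graph has very restricted structure, and I would check directly that it satisfies condition~1 or condition~2 (possibly with a vertex cover of order $0$ or $1$). Otherwise I fix a vertex $u$ of colour $k$ and a closest vertex $v$ of colour $k'$; by the distance bound above, $d(u,v)\le 4$, and the cases $d(u,v)\in\{1,2,3\}$ force $v$ to arise from $u$ by one of the add-operations or by an identification, while $d(u,v)=4$ yields a plain triple subdivision. Reading off $A=\{k'\text{-vertices}\}$ and $B=\{k\text{-vertices}\}$ then exhibits the multi-graph $G'$ of condition~3. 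The restrictions 3(a)--3(c) on which edges may be shortened come from the packing constraints on the heavy colours: shortening $uv$ to distance $3$ also shortens the distances between $u$, $v$ and their further neighbours, so one must verify that two $k$-vertices remain at distance $>k$ and two $k'$-vertices at distance $>k'$; this is what forces the shortened edges to form an independent set when $k=k'=6$, and to avoid a common endpoint in $B$ when $(k,k')=(5,6)$.

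I expect the main obstacle to be this last piece of bookkeeping: establishing, for each admissible pair $(k,k')$ with $5\le k\le 7$ and $6\le k'\le 7$, that the local colour configurations around a heavy vertex are in exact correspondence with the permitted operations, and in particular that the $k=5$ sub-cases (where a $2$-add may be replaced by a $2$-$5$-add or a $3$-$5$-add) and the identification sub-cases do not collide with the global $2$-packing. The argument is a finite but lengthy enumeration of the patterns on short paths between heavy vertices, carried out uniformly and then specialised to each value of $k$ and $k'$.
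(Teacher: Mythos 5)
Your proposal follows essentially the same route as the paper: sufficiency by the explicit colouring (vertex cover coloured $2,k,k'$; $u,v$ coloured $k,k'$ with gadgets as in Figure~\ref{figajout}; $A$ coloured $k'$, $B$ coloured $k$, central subdivided vertices coloured $2$), and necessity by taking a $k$-coloured vertex $u$ and a closest $k'$-coloured vertex $v$, bounding $d(u,v)\le 4$ via the $\{1,2\}$-patterns on the interior of the path, and matching each case $d(u,v)\in\{1,2,3,4\}$ to a condition or operation, with distance $3$ corresponding to an identification. This is the paper's argument, and your extra bookkeeping on conditions 3(a)--3(c) only makes explicit verifications the paper leaves implicit.
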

\begin{proof}
Let $G$ be a graph verifying one of the three conditions. If the first condition is true, then we color the vertices of the vertex cover by colors $2$, $k$ and $k'$ and the remaining vertices by color $1$. If the second condition is true, then we color $u$ and $v$ by color $k$ and $k'$, the other vertices that come from $G'$ by color $1$ and the other vertices as in Figure \ref{figajout}. If the third condition is true, then we color the central subdivided vertices by color $2$, the vertices that come from $A$ by color $k'$ (if two subdivided vertices have been identified, we color the possible vertex on which leaves have been added by color $2$), the vertices that come from $B$ by color $k$, the vertices that have been added as in Figure \ref{figajout} and the remaining vertices by color $1$. To conclude, if one of the three conditions is true, then $G$ is $S$-colorable.

Suppose that $G$ is $S$-colorable.
If no vertex is colored by $k$ or no vertex is colored by $k'$, then the first condition or the second condition is true.
Hence, we can suppose that there exists at least one vertex of color $k$ and one vertex of color $k'$.
The proof will be organized as follows: we suppose that there exists a vertex $u$ of color $k$ and we consider a vertex $v$ of color $k'$ at minimal distance from $u$. We will prove that $u$ and $v$ are at distance between $3$ and $4$ or prove that one vertex among $u$ and $v$ can be obtained by doing one of the previous operation on the other vertex. Since the only way to color the vertices in the path between $u$ and $v$ consists in using a color $2$ on one of the vertex of this path (the possible vertex of degree at least $3$ or the central vertex if no vertices have been identified), the third condition is true about $G$ .

If $d(u,v)=1$, then the first or the second condition is true. If $d(u,v)=2$ and if $k>5$, then the first or the second condition is true.
If $d(u,v)=2$ and $k=5$, then $u$ is obtained by $2$-$5$-add or by $3$-$5$-add on $v$.
If $d(u,v)>4$, then the $S$-coloring can not be extended to the path between $u$ and $v$. 
\end{proof} 
\begin{figure}
\begin{center}
\begin{tikzpicture}[scale=0.9]
\draw (0+1.5*0.6,0) -- (7*0.6+1.5*0.6,0);
\draw (4*0.6+1.5*0.6,0) -- (4*0.6+1.5*0.6,1);
\node at (0+1.5*0.6,0) [circle,draw=black,fill=black, scale=0.5] {};
\node at (1*0.6+1.5*0.6,0) [circle,draw=black,fill=black, scale=0.5] {};
\node at (2*0.6+1.5*0.6,0) [circle,draw=black,fill=black, scale=0.5] {};
\node at (3*0.6+1.5*0.6,0) [circle,draw=black,fill=black, scale=0.5] {};
\node at (4*0.6+1.5*0.6,0) [circle,draw=black,fill=black, scale=0.5] {};
\node at (5*0.6+1.5*0.6,0) [circle,draw=black,fill=black, scale=0.5] {};
\node at (6*0.6+1.5*0.6,0) [circle,draw=black,fill=black, scale=0.5] {};
\node at (7*0.6+1.5*0.6,0) [circle,draw=black,fill=black, scale=0.5] {};
\node at (4*0.6+1.5*0.6,0.5) [circle,draw=black,fill=black, scale=0.5] {};
\node at (4*0.6+1.5*0.6,1) [circle,draw=black,fill=black, scale=0.5] {};
\node at (0+1.5*0.6,0.4){$\ell'_{1}$};
\node at (7*0.6+1.5*0.6,0.4){$\ell'_{2}$};
\draw (0,-2) -- (10*0.6,-2);
\draw (4*0.6,-2) -- (4*0.6,-1);
\draw (7*0.6,-2) -- (7*0.6,-1);
\node at (0*0.6,-2) [circle,draw=black,fill=black, scale=0.5] {};
\node at (1*0.6,-2) [circle,draw=black,fill=black, scale=0.5] {};
\node at (2*0.6,-2) [circle,draw=black,fill=black, scale=0.5] {};
\node at (3*0.6,-2) [circle,draw=black,fill=black, scale=0.5] {};
\node at (4*0.6,-2) [circle,draw=black,fill=black, scale=0.5] {};
\node at (5*0.6,-2) [circle,draw=black,fill=black, scale=0.5] {};
\node at (6*0.6,-2) [circle,draw=black,fill=black, scale=0.5] {};
\node at (7*0.6,-2) [circle,draw=black,fill=black, scale=0.5] {};
\node at (8*0.6,-2) [circle,draw=black,fill=black, scale=0.5] {};
\node at (9*0.6,-2) [circle,draw=black,fill=black, scale=0.5] {};
\node at (10*0.6,-2) [circle,draw=black,fill=black, scale=0.5] {};
\node at (4*0.6,-1.5) [circle,draw=black,fill=black, scale=0.5] {};
\node at (4*0.6,-1) [circle,draw=black,fill=black, scale=0.5] {};
\node at (7*0.6,-1.5) [circle,draw=black,fill=black, scale=0.5] {};
\node at (7*0.6,-1) [circle,draw=black,fill=black, scale=0.5] {};
\node at (0,-1.6){$\ell'_{1}$};
\node at (10*0.6,-1.6){$\ell'_{2}$};
\draw (0+13.5*0.6,0) -- (7*0.6+13.5*0.6,0);
\draw (4*0.6+13.5*0.6,0) -- (4*0.6+13.5*0.6,1);
\node at (0+13.5*0.6,0) [circle,draw=blue!50,fill=blue!20] {};
\node at (0+13.5*0.6,0) {$3_a$};
\node at (1*0.6+13.5*0.6,0) [circle,draw=red!50,fill=red!20] {};
\node at (1*0.6+13.5*0.6,0) {$3_b$};
\node at (2*0.6+13.5*0.6,0) [circle,draw=green!50,fill=green!20] {};
\node at (2*0.6+13.5*0.6,0) {$2_a$};
\node at (3*0.6+13.5*0.6,0) [circle,draw=yellow!50,fill=yellow!20] {};
\node at (3*0.6+13.5*0.6,0) {$2_b$};
\node at (4*0.6+13.5*0.6,0) [circle,draw=blue!50,fill=blue!20] {};
\node at (4*0.6+13.5*0.6,0) {$3_a$};
\node at (5*0.6+13.5*0.6,0) [circle,draw=green!50,fill=green!20] {};
\node at (5*0.6+13.5*0.6,0) {$2_a$};
\node at (6*0.6+13.5*0.6,0) [circle,draw=yellow!50,fill=yellow!20] {};
\node at (6*0.6+13.5*0.6,0) {$2_b$};
\node at (7*0.6+13.5*0.6,0) [circle,draw=red!50,fill=red!20] {};
\node at (7*0.6+13.5*0.6,0)  {$3_b$};
\node at (4*0.6+13.5*0.6,0.5) [circle,draw=red!50,fill=red!20] {};
\node at (4*0.6+13.5*0.6,0.5) {$3_b$};
\node at (4*0.6+13.5*0.6,1) [circle,draw=green!50,fill=green!20] {};
\node at (4*0.6+13.5*0.6,1) {$2_a$};
\node at (0+13.5*0.6,0.4){$\ell'_{1}$};
\node at (7*0.6+13.5*0.6,0.4){$\ell'_{2}$};
\draw (0+12*0.6,-2) -- (10*0.6+12*0.6,-2);
\draw (4*0.6+12*0.6,-2) -- (4*0.6+12*0.6,-1);
\draw (7*0.6+12*0.6,-2) -- (7*0.6+12*0.6,-1);
\node at (0*0.6+12*0.6,-2) [circle,draw=blue!50,fill=blue!20] {};
\node at (0*0.6+12*0.6,-2) {$3_a$};
\node at (1*0.6+12*0.6,-2) [circle,draw=red!50,fill=red!20] {};
\node at (1*0.6+12*0.6,-2) {$3_b$};
\node at (2*0.6+12*0.6,-2) [circle,draw=green!50,fill=green!20] {};
\node at (2*0.6+12*0.6,-2) {$2_a$};
\node at (3*0.6+12*0.6,-2) [circle,draw=yellow!50,fill=yellow!20] {};
\node at (3*0.6+12*0.6,-2) {$2_b$};
\node at (4*0.6+12*0.6,-2) [circle,draw=blue!50,fill=blue!20] {};
\node at (4*0.6+12*0.6,-2) {$3_a$};
\node at (5*0.6+12*0.6,-2) [circle,draw=green!50,fill=green!20] {};
\node at (5*0.6+12*0.6,-2) {$2_a$};
\node at (6*0.6+12*0.6,-2) [circle,draw=yellow!50,fill=yellow!20] {};
\node at (6*0.6+12*0.6,-2) {$2_b$};
\node at (7*0.6+12*0.6,-2) [circle,draw=red!50,fill=red!20] {};
\node at (7*0.6+12*0.6,-2)  {$3_b$};
\node at (4*0.6+12*0.6,-1.5) [circle,draw=red!50,fill=red!20] {};
\node at (4*0.6+12*0.6,-1.5) {$3_b$};
\node at (4*0.6+12*0.6,-1) [circle,draw=green!50,fill=green!20] {};
\node at (4*0.6+12*0.6,-1) {$2_a$};
\node at (8*0.6+12*0.6,-2) [circle,draw=green!50,fill=green!20] {};
\node at (8*0.6+12*0.6,-2) {$2_a$};
\node at (9*0.6+12*0.6,-2) [circle,draw=yellow!50,fill=yellow!20] {};
\node at (9*0.6+12*0.6,-2) {$2_b$};
\node at (10*0.6+12*0.6,-2) [circle,draw=blue!50,fill=blue!20] {};
\node at (10*0.6+12*0.6,-2) {$3_a$};
\node at (7*0.6+12*0.6,-1.5) [circle,draw=blue!50,fill=blue!20] {};
\node at (7*0.6+12*0.6,-1.5) {$3_a$};
\node at (7*0.6+12*0.6,-1) [circle,draw=green!50,fill=green!20] {};
\node at (7*0.6+12*0.6,-1) {$2_a$};
\node at (0+11*0.6,-1.6){$\ell'_{1}$};
\node at (10*0.6+11*0.6,-1.6){$\ell'_{2}$};
\end{tikzpicture}
\end{center}
\caption{The graphs $L'$ (at the bottom) and $J'$ (at the top) and a $(2,2,3,3)$-coloring of $L'$ and $J'$ (on the right).}
\label{mJK}
\end{figure}
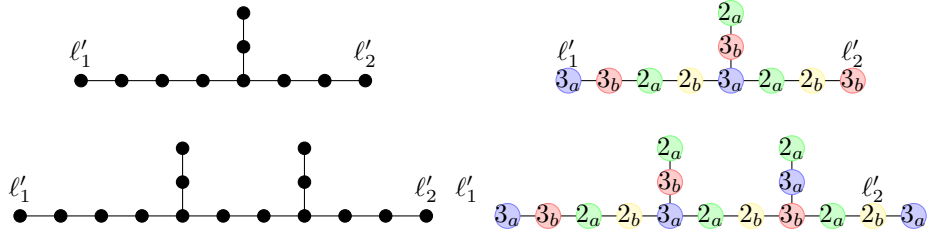
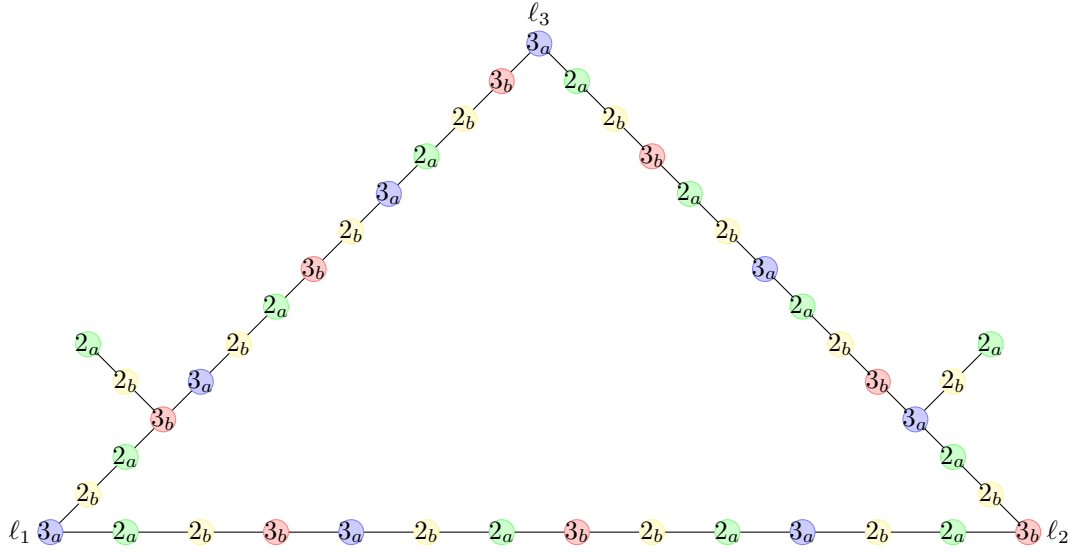
\begin{figure}[t]
\begin{center}
\begin{tikzpicture}
\draw (0,0) -- (13,0);
\draw (0,0) -- (6.5,6.5);
\draw (6.5,6.5) -- (13,0);
\draw (1.5,1.5) -- (0.5,2.5);
 \draw (11.5,1.5) -- (12.5,2.5);
\node at (0,0) [circle,draw=blue!50,fill=blue!20] {};
\node at (0,0) {$3_a$};
\node at (1*0.5,1*0.5) [circle,draw=yellow!50,fill=yellow!20] {};
\node at (1*0.5,1*0.5) {$2_b$};
\node at (2*0.5,2*0.5) [circle,draw=green!50,fill=green!20] {};
\node at (2*0.5,2*0.5) {$2_a$};
\node at (3*0.5,3*0.5) [circle,draw=red!50,fill=red!20] {};
\node at (3*0.5,3*0.5) {$3_b$};
\node at (4*0.5,4*0.5) [circle,draw=blue!50,fill=blue!20] {};
\node at (4*0.5,4*0.5) {$3_a$};
\node at (5*0.5,5*0.5) [circle,draw=yellow!50,fill=yellow!20] {};
\node at (5*0.5,5*0.5) {$2_b$};
\node at (6*0.5,6*0.5) [circle,draw=green!50,fill=green!20] {};
\node at (6*0.5,6*0.5) {$2_a$};
\node at (7*0.5,7*0.5) [circle,draw=red!50,fill=red!20] {};
\node at (7*0.5,7*0.5) {$3_b$};
\node at (8*0.5,8*0.5) [circle,draw=yellow!50,fill=yellow!20] {};
\node at (8*0.5,8*0.5) {$2_b$};
\node at (9*0.5,9*0.5) [circle,draw=blue!50,fill=blue!20] {};
\node at (9*0.5,9*0.5) {$3_a$};
\node at (10*0.5,10*0.5) [circle,draw=green!50,fill=green!20] {};
\node at (10*0.5,10*0.5) {$2_a$};
\node at (11*0.5,11*0.5) [circle,draw=yellow!50,fill=yellow!20] {};
\node at (11*0.5,11*0.5) {$2_b$};
\node at (12*0.5,12*0.5) [circle,draw=red!50,fill=red!20] {};
\node at (12*0.5,12*0.5) {$3_b$};
\node at (6.5,6.5) [circle,draw=blue!50,fill=blue!20] {};
\node at (6.5,6.5) {$3_a$};
\node at (13,0) [circle,draw=red!50,fill=red!20] {};
\node at (13,0) {$3_b$};
\node at (13-1*0.5,1*0.5) [circle,draw=yellow!50,fill=yellow!20] {};
\node at (13-1*0.5,1*0.5) {$2_b$};
\node at (13-2*0.5,2*0.5) [circle,draw=green!50,fill=green!20] {};
\node at (13-2*0.5,2*0.5) {$2_a$};
\node at (13-4*0.5,4*0.5) [circle,draw=red!50,fill=red!20] {};
\node at (13-4*0.5,4*0.5) {$3_b$};
\node at (13-3*0.5,3*0.5) [circle,draw=blue!50,fill=blue!20] {};
\node at (13-3*0.5,3*0.5) {$3_a$};
\node at (13-5*0.5,5*0.5) [circle,draw=yellow!50,fill=yellow!20] {};
\node at (13-5*0.5,5*0.5) {$2_b$};
\node at (13-6*0.5,6*0.5) [circle,draw=green!50,fill=green!20] {};
\node at (13-6*0.5,6*0.5) {$2_a$};
\node at (13-7*0.5,7*0.5) [circle,draw=blue!50,fill=blue!20] {};
\node at (13-7*0.5,7*0.5) {$3_a$};
\node at (13-8*0.5,8*0.5) [circle,draw=yellow!50,fill=yellow!20] {};
\node at (13-8*0.5,8*0.5) {$2_b$};
\node at (13-9*0.5,9*0.5) [circle,draw=green!50,fill=green!20] {};
\node at (13-9*0.5,9*0.5) {$2_a$};
\node at (13-10*0.5,10*0.5) [circle,draw=red!50,fill=red!20] {};
\node at (13-10*0.5,10*0.5) {$3_b$};
\node at (13-11*0.5,11*0.5) [circle,draw=yellow!50,fill=yellow!20] {};
\node at (13-11*0.5,11*0.5) {$2_b$};
\node at (13-12*0.5,12*0.5) [circle,draw=green!50,fill=green!20] {};
\node at (13-12*0.5,12*0.5) {$2_a$};
\node at (1,0) [circle,draw=green!50,fill=green!20] {};
\node at (1,0) {$2_a$};
\node at (2,0) [circle,draw=yellow!50,fill=yellow!20] {};
\node at (2,0) {$2_b$};
\node at (3,0) [circle,draw=red!50,fill=red!20] {};
\node at (3,0) {$3_b$};
\node at (4,0) [circle,draw=blue!50,fill=blue!20] {};
\node at (4,0) {$3_a$};
\node at (6,0) [circle,draw=green!50,fill=green!20] {};
\node at (6,0) {$2_a$};
\node at (5,0) [circle,draw=yellow!50,fill=yellow!20] {};
\node at (5,0) {$2_b$};
\node at (7,0) [circle,draw=red!50,fill=red!20] {};
\node at (7,0) {$3_b$};
\node at (9,0) [circle,draw=green!50,fill=green!20] {};
\node at (9,0) {$2_a$};
\node at (8,0) [circle,draw=yellow!50,fill=yellow!20] {};
\node at (8,0) {$2_b$};
\node at (10,0) [circle,draw=blue!50,fill=blue!20] {};
\node at (10,0) {$3_a$};
\node at (12,0) [circle,draw=green!50,fill=green!20] {};
\node at (12,0) {$2_a$};
\node at (11,0) [circle,draw=yellow!50,fill=yellow!20] {};
\node at (11,0) {$2_b$};

\node at (1,2) [circle,draw=yellow!50,fill=yellow!20] {};
\node at (1,2) {$2_b$};
\node at (0.5,2.5) [circle,draw=green!50,fill=green!20] {};
\node at (0.5,2.5) {$2_a$};
\node at (12,2) [circle,draw=yellow!50,fill=yellow!20] {};
\node at (12,2) {$2_b$};
\node at (12.5,2.5) [circle,draw=green!50,fill=green!20] {};
\node at (12.5,2.5) {$2_a$};
\node at (-0.4,0) {$\ell_{1}$};
\node at (13.4,0) {$\ell_{2}$};
\node at (6.5,6.9) {$\ell_{3}$};
\end{tikzpicture}
\end{center}
\caption{The graph $H'$ and a $(2,2,3,3)$-coloring of it.}
\label{mH}
\end{figure}
In the following proof, we use the construction from Lemma \ref{reffinal}. The existence properties of an $S$-coloring with the properties of Lemma \ref{reffinal} will not be verified anymore. However, using the different $(2,2,3,3)$-colorings from the proof, the reader can be convinced that there exists an $S$-coloring of the whole constructed graph.
\begin{theo}
The problem $(2,2,3,3)$-COL is NP-complete.
\end{theo}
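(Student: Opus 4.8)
The plan is to invoke Lemma~\ref{reffinal} with the list $S=(2,2,3,3)$ and the pair of colours $s_i=s_j=3$. Writing the two colours equal to $3$ as $3_a,3_b$ and the two equal to $2$ as $2_a,2_b$, it then suffices to produce a $3$-$3$-transmitter, a $3$-$3$-antitransmitter and a $3$-$3$-clause simulator. I would take these to be the graphs $L'$ and $J'$ of Figure~\ref{mJK} and the graph $H'$ of Figure~\ref{mH}. The displayed colourings already settle the existence half of each gadget property, and (as announced before the statement) they also make the distance requirement of Lemma~\ref{reffinal} evident, so all the real work lies in the ``in every $S$-colouring'' direction.

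First I would record the structural facts that drive the argument. In any $(2,2,3,3)$-colouring two vertices sharing a colour $2$ are at distance at least $3$, and two sharing a colour $3$ are at distance at least $4$. Consequently a vertex of degree $3$ together with its three neighbours forms a set of diameter $2$, so these four vertices must receive all four colours $\{2_a,2_b,3_a,3_b\}$. Along a path these facts already impose heavy constraints: among any three consecutive vertices at most two carry a colour $2$, and among any four consecutive at most two carry a colour $3$. There are therefore only a few admissible local colour patterns, and the degree-$3$ vertex created by each pendant, which must see all four colours, pins the local pattern and hence fixes its \emph{phase}; I would run through this finite list of local patterns to show that the colour of a path endpoint is then determined up to the global swap $3_a\leftrightarrow 3_b$, $2_a\leftrightarrow 2_b$.

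With the phase controlled, the three gadget properties follow from the chosen lengths and pendant placements. For the transmitter $L'$ the path length and its two pendants are arranged so that the forced pattern returns the \emph{same} colour $3$ to both endpoints; for the antitransmitter $J'$ the shorter path with a single pendant shifts the phase by an odd amount and returns \emph{opposite} colours $3$. For the clause simulator $H'$ I would show that each of the three corners $\ell_1,\ell_2,\ell_3$ is forced to a colour $3$ (again via the pendant-induced degree-$3$ vertices and the triangle/cycle structure), and that assigning the same colour $3$ to all three corners produces a parity conflict along the triangle, leaving some side vertex with no available colour; the colourings of Figure~\ref{mH} realise the remaining, not-all-equal, assignments.

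The main obstacle I expect is precisely this rigidity analysis: establishing the forcing in \emph{every} colouring, rather than merely exhibiting one good colouring. Concretely, one must check that the packing constraints, together with the degree-$3$ conditions at the pendants, admit only the intended family of colourings, that a connector can never ``cheat'' by taking a colour $2$, and that the length/phase bookkeeping around the triangle of $H'$ genuinely rules out three equal corners. This is a finite but delicate case check; by contrast the existence direction, and with it the extension to the full reduction graph, is immediate from Figures~\ref{mJK} and~\ref{mH}.
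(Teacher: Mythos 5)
Your proposal follows essentially the same route as the paper: the same gadgets $L'$, $J'$ and $H'$ from Figures~\ref{mJK} and~\ref{mH} plugged into Lemma~\ref{reffinal}, with the same forcing scheme (connectors and degree-$3$ vertices forced to a colour $3$, phase propagation along the subdivided paths, and a contradiction when all three connectors of $H'$ agree). The only caveat is that your stated structural fact --- the closed neighbourhood of a degree-$3$ vertex has diameter $2$ and hence carries all four colours --- does not by itself exclude a colour $2$ on the degree-$3$ vertex itself; as in the paper one needs the vertices at distance $2$ as well (the induced $Y_2$, which admits no $(2,2,3,3)$-colouring with its centre coloured $2$), so your finite case check must include that step.
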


\begin{proof}
Let $L'$ and $J'$ be subcubic graphs from Figure~\ref{mJK} with connectors the vertices $\ell'_1$ and $\ell'_2$. Note that a vertex of degree at least $3$ can only be colored by colors $3_a$ or $3_b$. In the constructed graph (Lemma \ref{reffinal}), a vertex of degree at least $3$ will be in an induced $Y_2$ and there does not exist an $S$-coloring of $Y_2$ with the central vertex colored by colors $2_a$ or $2_b$. For the same reason, every connectors should be colored by colors $3_a$ or $3_b$ (these vertices will be in an induced $Y_2$ in the constructed graphs). If two vertices $u$ and $v$ have degree at least $3$ or are connectors and $d(u,v)=4$, then $u$ and $v$ can only have the same color $3$, since the vertices in the path between $u$ and $v$ can not be colored anymore if $u$ and $v$ have different colors.

If two connectors of $L'$ have the same color, then the vertices in the path of length $4$ between the central vertex of degree 3 and a connector can not be colored. Thus, in every $(2,2,3,3)$-coloring of $L'$, the connectors should have the same color $3$. A $(2,2,3,3)$-coloring of $L'$ is illustrated in Figure~\ref{mJK}.

If two connectors of $J'$ have different colors $3$, then the vertices in the path of length $4$ between one of the central vertex of degree 3 and a connector can not be colored. Thus, in every $(2,2,3,3)$-coloring of $J'$, the connectors can not have the same color $3$. A $(2,2,3,3)$-coloration of $J'$ is illustrated in Figure~\ref{mJK}. Hence, $L'$ is a $3$-$3$-transmitter and $J'$ is a $3$-$3$-antitransmitter.

Let $H'$ be the subcubic graph from Figure~\ref{mH} with connectors the vertices $\ell_1$, $\ell_2$ and $\ell_3$. The graph $H'$ contains three connectors and two vertices of degree $3$. These vertices should be all colored by colors $3_a$ or $3_b$. Suppose that $\ell_1$, $\ell_2$ and $\ell_3$ have the same color $3_a$. Let $u$ be one of the vertex at distance $3$ from $\ell_1$.
Suppose that $u$ is colored by color $3_b$. The coloring can not be extended to the vertices in the path between $u$ and the vertex of degree $3$ at minimum distance of $u$, since no vertices in this path can be colored by $3_b$ and $P_6$ is not $(2,2,3)$-colorable. Thus, the vertices at distance $3$ from $\ell_1$ can not be colored by $3_b$. One vertex among $u$ and the vertices in the shortest path between $u$ and $\ell_1$ must be colored by $3_b$. There does not exist a $(2,2,3,3)$-coloring of $Y_2$ for which a vertex at distance $2$ from a vertex of degree $3$ is colored by a color $3$. Thus, the only way to color $H$ is to color the neighbors of $\ell_3$ by color $3_b$.
Hence, we have a contradiction (the distance between these two vertices of color $3_b$ is only $2$). Consequently, in every $(2,2,3,3)$-coloring of $H'$, the connectors can not have a same color.

Figure~\ref{mH} illustrates a $(2,2,3,3)$-coloring for $\ell_1$ and $\ell_3$ with the same color (in the other cases, the $(2,2,3,3)$-colorings are similar). To conclude, $H'$ is a $3$-$3$-clause simulator.
\end{proof}
\begin{proof}[Proof of Theorem \ref{theo31}]
By Propositions \ref{dichoa}, \ref{dicho3}, \ref{dicho4}, \ref{bipartit1} and \ref{bipartit2}, we obtain that $S$-COL is polynomial-time solvable in the given cases.
By Propositions \ref{dicho0}, \ref{dicho1} and \ref{dicho2}, we obtain that $S$-COL is NP-complete in the other cases.
\end{proof}

\section{Conclusion}
We have proven several dichotomy theorems about $S$-COL.
For several instances of $S$-COL ($(2,3,3,3)$-COL, for example) the family of $S$-colorable graphs is included in some known families of graphs, since every $S$-colorable graph has bounded pathwidth.
For other instances of $S$-COL ($(2,2,3,3)$-COL, for example), $S$-COL is NP-complete, by giving a reduction.
For a last kind of instances of $S$-COL ($(1,2,7,7)$-COL, for example), we determine a construction of the $S$-colorable graphs. The structure of these $S$-colorable graphs is not so far from the structure of a bipartite graph.

We finish by proposing the following conjecture about $S$-COL.
\begin{con}
There exists a dichotomy between NP-complete problems and polynomial-time solvable problems among the instances of $S$-COL.
\end{con}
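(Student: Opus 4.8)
The plan is to prove the dichotomy by an exhaustive case analysis on the nondecreasing four-tuple $S=(s_1,s_2,s_3,s_4)$, organised primarily by the value of $s_1$, then by $s_2$, and so on. For each resulting family I would either exhibit that $S$ lies in the down-set of one of the four maximal lists $(2,3,3,3)$, $(2,2,3,4)$, $(1,4,4,4)$, $(1,2,5,6)$ and invoke a tractability result, or show it lies in the complement and invoke an NP-completeness result. The proof is essentially the assembly of the preceding statements, so the real content is checking that the two toolkits between them cover every list and that the frontier sits exactly at these four lists.

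For the polynomial-time direction I would use three mechanisms. First, when $s_1\ge 3$, Proposition~\ref{dichoa} already yields bounded treewidth, hence tractability through Theorem~\ref{Spoly}; every such list trivially satisfies $S\le(2,3,3,3)$. Second, for the boundaries $(2,3,3,3)$, $(2,2,3,4)$ and $(1,4,4,4)$, Proposition~\ref{dicho3} exhibits a spider $Y_n$ that is not colorable; by monotonicity of $\le$ this non-colorability propagates through the whole down-set, and the fact that a $Y_n$ minor forces a $Y_n$ subgraph (Claim~\ref{ynpascol}) together with Theorem~\ref{treewborne} converts it into a pathwidth bound, closing the case again via Theorem~\ref{Spoly}. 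The same excluded-path idea, in the spirit of Proposition~\ref{dicho4}, handles the lists below $(1,2,5,6)$ whose third or fourth entry is large enough that sufficiently long paths cease to be colorable. Third, the genuinely borderline families, where long paths remain colorable, namely $(1,3,k,k')$ and $(1,2,k,k')$ with $k,k'\in\{5,6,7\}$, are covered by the explicit structural characterizations of Theorems~\ref{bipartit1} and~\ref{bipartit2}, each of which gives a polynomial recognition procedure.

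For the NP-complete direction I would dispatch the complement of the down-set using the hardness gadgets already constructed: Corollary~\ref{dicho1} for $s_1=s_2=1$, Corollary~\ref{dicho0} for $(2,2,2,k)$, Corollary~\ref{dicho2} for every list with $S\ge(1,3,5,5)$, and the dedicated reduction establishing that $(2,2,3,3)$-COL is NP-complete. The lists not matched directly by one of these, in particular some remaining $s_1=1$ lists with a large last coordinate, I would reach by lifting smaller NP-complete instances through Propositions~\ref{reduc11} and~\ref{reduc2f} and Theorem~\ref{reducii}, choosing the base list so that the doubling and appending operations land exactly on the target $S$.

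The main obstacle I anticipate is not any single reduction but the bookkeeping that certifies exhaustiveness: one must verify, list by list, that the tractable/hard split occurs precisely at the four maximal lists, with no list left unclassified and no overlap between the two coverings. The delicate region is $s_1=1$ with $s_2\in\{2,3\}$, where for each candidate $S$ one must decide whether sufficiently long paths (or spiders) are still $S$-colorable, in which case only the structural characterizations apply, or are not, in which case the pathwidth bound applies; and one must simultaneously confirm that every hard list in this region is obtainable from $(1,1,k)$ or $(1,2,2)$ by the lifting reductions. Making the tractable covering and the hardness covering meet exactly along $S\le(1,2,5,6)$ is the crux of the argument.
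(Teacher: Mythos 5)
The statement you are asked to prove is the paper's closing \emph{conjecture}: that \textbf{every} instance of $S$-COL, for an arbitrary nondecreasing list $S$ of \textbf{any} length, is either NP-complete or polynomial-time solvable. The paper does not prove this; it explicitly leaves it open after establishing the dichotomy only for $|S|\le 4$ (Theorem~\ref{theo31}). Your proposal, however, restricts from the outset to four-tuples $S=(s_1,s_2,s_3,s_4)$ and then assembles exactly the ingredients of the paper's proof of Theorem~\ref{theo31} (Propositions~\ref{dichoa}, \ref{dicho3}, \ref{dicho4}, Theorems~\ref{bipartit1} and~\ref{bipartit2} for tractability; Corollaries~\ref{dicho0}, \ref{dicho1}, \ref{dicho2}, Theorem~\ref{reducii} and the $(2,2,3,3)$ gadget for hardness). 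That is a reasonable reconstruction of the $|S|=4$ theorem, but it is not a proof of the statement at hand: the conjecture quantifies over lists of every length, and nothing in your argument addresses $|S|\ge 5$.

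The gap is not mere bookkeeping. For longer lists the two toolkits are not known to meet: the lifting operations (Propositions~\ref{reduc11}, \ref{reduc2f}, Theorem~\ref{reducii}) produce hardness only for lists of rather special shapes, and the tractability mechanisms (bounded pathwidth via a non-colorable $Y_n$ or path, or an explicit structural characterization) have only been exhibited for specific small lists. There are infinitely many lists for which neither side applies, and a genuine dichotomy must also exclude NP-intermediate behaviour for each of them, which requires a complete classification rather than two partial coverings. If your goal is Theorem~\ref{theo31}, your outline matches the paper's proof; if your goal is the conjecture as stated, the argument establishes only a special case and the general statement remains open.
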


\end{document}